\newtheoremstyle{myplain}
  {\topsep}   
  {\topsep}   
  {\itshape}  
  {0pt}       
  {\bfseries\sffamily} 
  {.}         
  {5pt plus 1pt minus 1pt} 
  {}          
\newtheoremstyle{mydefinition}
  {\topsep}   
  {\topsep}   
  {\normalfont}  
  {0pt}       
  {\bfseries\sffamily} 
  {.}         
  {5pt plus 1pt minus 1pt} 
  {}          
\newtheoremstyle{myremark}
  {0.5\topsep}   
  {0.5\topsep}   
  {\normalfont}  
  {0pt}       
  {\sffamily} 
  {.}         
  {5pt plus 1pt minus 1pt} 
  {}          
\theoremstyle{mydefinition}
\newtheorem{definition}{$\blacktriangleright$ Definition}[subsection]
\theoremstyle{myplain}
\newtheorem{theorem}[definition]{$\blacktriangleright$ Theorem}
\newtheorem{lemma}[definition]{$\blacktriangleright$ Lemma}
\newtheorem{corollary}[definition]{$\blacktriangleright$ Corollary}
\newtheorem{proposition}[definition]{$\blacktriangleright$ Proposition}
\theoremstyle{myremark}
\newtheorem*{remark}{$\blacktriangleright$ Remark}
\newtheorem*{notation}{$\blacktriangleright$ Notation}
\newtheorem{example}[definition]{$\blacktriangleright$ Example}
\newtheorem*{convention}{$\blacktriangleright$ Convention}
\begin{document}
\title{\Huge \bf Game-theoretic Investigation of Intensional Equalities}
\author{\Large \bf Norihiro Yamada \\
{\tt norihiro1988@gmail.com} \\
University of Oxford
}

\maketitle

\begin{abstract}
We present a game semantics for \emph{Martin-L\"of type theory (\textsf{MLTT})} that interprets \emph{propositional equalities} in a non-trivial manner in the sense that it refutes the principle of \emph{uniqueness of identity proofs (UIP)} for the first time as a game semantics in the literature.
Specifically, each of our games is equipped with (selected) invertible strategies representing \emph{(computational) proofs of (intensional) equalities} between strategies on the game; these invertible strategies then interpret propositional equalities in \textsf{MLTT}, which is roughly how our model achieves the non-trivial interpretation of them.
Consequently, our game semantics provides a natural and intuitive yet mathematically precise formulation of the \emph{BHK-interpretation} of propositional equalities. 
From an extensional viewpoint, the algebraic structure of our model is similar to the classic \emph{groupoid model} by Hofmann and Streicher, but the former is distinguished from the latter by its computational and intensional nature. 
In particular, our game semantics refutes the axiom of \emph{function extensionality (FunExt)} and the \emph{univalence axiom (UA)}.
This provides a sharp contrast to the recent \emph{cubical set model} as well.
Similar to the path from the groupoid model to the \emph{$\omega$-groupoid models}, the present work is also intended to be a stepping stone towards a game semantics for the \emph{infinite hierarchy of propositional equalities} in \textsf{MLTT}.
\end{abstract}

\tableofcontents

\section{Introduction}
\label{Intro}
This paper presents a game semantics for \emph{Martin-L\"of type theory (\textsf{MLTT})} that refutes the principle of \emph{uniqueness of identity proofs (UIP)}, the axiom of \emph{function extensionality (FunExt)} and the \emph{univalence axiom (UA)}.
Our motivation is to give a computational and intensional explanation of \textsf{MLTT}, particularly its \emph{propositional equalities}, in a mathematically precise and syntax-independent manner.
This work is a continuation and an improvement of \cite{yamada2016game}.

\subsection{Martin-L\"{o}f Type Theory}
\emph{\textsf{MLTT}} \cite{martin1982constructive,martin1984intuitionistic,martin1998intuitionistic} is an extension of the simply-typed $\lambda$-calculus that, under the \emph{Curry-Howard isomorphism (CHI)} \cite{curry1934functionality,rosser1967curry,howard1980formulae}, corresponds to intuitionistic predicate logic, for which the extension is made by \emph{dependent types}, i.e., types that depend on terms.
\if0 It actually ``internalizes'' the CHI in the sense that its types and terms represent propositions and proofs, respectively; so it is a functional programming language as well as a formal logical system. \fi
It was proposed by Martin-L\"{o}f as a foundation of constructive mathematics, but also it has been an object of active research in computer science because it can be seen as a programming language, and one may extract programs that are ``correct by construction'' from its proofs. 
Moreover, based on the homotopy-theoretic interpretation, an extension of \textsf{MLTT}, called \emph{homotopy type theory (\textsf{HoTT})}, was recently proposed, providing new topological insights and having potential to be a powerful and practical foundation of mathematics \cite{hottbook}. 

Conceptually, \textsf{MLTT} is based on the \emph{BHK-interpretation} of intuitionistic logic \cite{troelstra1988constructivism} which interprets proofs as \emph{``constructions''}. 
Note that the BHK-interpretation is \emph{informal} in nature, and formulated \emph{syntactically} as \textsf{MLTT}; also it is reasonable to think of such constructions as some computational processes which must be an \emph{intensional} concept.
Thus, a \emph{semantic} (or \emph{syntax-independent}) model of \textsf{MLTT} that interprets proofs as some \emph{intensional} constructions \emph{in a mathematically precise sense} should be considered as primarily important for clarification and justification of \textsf{MLTT}. 
Also, it may give new insights for meta-theoretic study of the syntax.
Just for convenience, let us call such a semantics a \emph{computational semantics} for \textsf{MLTT}.
In particular, a computational semantics would interpret a proof of a \emph{propositional equality} $\mathsf{a_1 =_A a_2}$ as a computational process which ``witnesses'' that the interpretations of $\mathsf{a_1}$ and $\mathsf{a_2}$ are equal processes.

\subsection{Game Semantics}
\emph{Game semantics} \cite{abramsky1997semantics,abramsky1999game} refers to a particular kind of \emph{semantics of logics and programming languages} in which types and terms are interpreted as \emph{``games''} and \emph{``strategies''}, respectively. 
One of its distinguishing characteristics is to interpret syntax as dynamic interactions between two ``players'' of a game, providing a computational and intensional explanation of proofs and programs in a natural and intuitive yet mathematically precise and syntax-independent manner.

Remarkably, game semantics for \textsf{MLTT} was not addressed until Abramsky et~al. recently constructed such a model in \cite{abramsky2015games} based on  AJM-games \cite{abramsky2000full} and their extension \cite{abramsky2005game}. 
Moreover, the present author developed another game semantics for \textsf{MLTT} in the unpublished paper \cite{yamada2016game} that particularly achieves an interpretation of the (cumulative hierarchy of) \emph{universes} as games. 
By the nature of game semantics described above, these game-theoretic models may be seen as possible realizations of a computational semantics for \textsf{MLTT}.

\subsection{Problem to Solve: Game Semantics for Propositional Equalities}
However, although these game models \cite{abramsky2015games,yamada2016game} interpret large parts of \textsf{MLTT}, neither interprets propositional equalities in a non-trivial manner in the sense that they both interpret each propositional equality as an identity strategy, and so they validate \emph{UIP} \cite{hofmann1998groupoid}.
This is a problem as the \emph{groupoid model} \cite{hofmann1998groupoid} proved that UIP is not derivable in \textsf{MLTT}.

On the other hand, propositional equalities have been a mysterious and intriguing concept; they recently got even more attentions by a new topological interpretation of them, namely as \emph{paths between points} \cite{hottbook}. 
In particular, it posed a central open problem: a ``constructive justification'' of \emph{UA} \cite{hottbook}; a significant step towards this goal is taken in \cite{bezem2014model}, even leading to new syntax for equalities called \emph{path types} \cite{cohen2016cubical}. 
Nevertheless, the topological interpretation is a priori very different from the BHK-interpretation: It is spatial and extensional.
In particular, it does not interpret proofs of a propositional equality as computational processes.

Therefore we believe that it would lead to a better understanding of propositional equalities in \textsf{MLTT} to give a game semantics (or more generally a computational semantics) that interprets them in a non-trivial manner.

\subsection{Our Approach: Game-theoretic Groupoid Interpretation}
This paper solves the open problem mentioned above: It presents a game semantics for \textsf{MLTT} that \emph{does} refute UIP.
Specifically, we refine the previous work \cite{yamada2016game} by equipping each game $G$ with a set $=_G$ of \emph{invertible} strategies $\rho$ between\footnote{It is not necessarily the game-semantic sense but the category-theoretic one.} strategies $\sigma, \sigma' : G$, where $\rho$ is regarded as a ``\emph{(computational) proof of the (intensional) equality}'' between $\sigma$ and $\sigma'$, and used as a computational semantics for the corresponding propositional equality.
To interpret various phenomena in \textsf{MLTT}, we require that the structure $G = (G, =_G)$ forms a \emph{groupoid}, i.e., a category whose morphisms are all isomorphisms.

\subsection{Related Work and Our Contribution}
The primary contribution of this paper is to give a game semantics for \textsf{MLTT} that refutes UIP.

The algebraic structure of our game semantics is similar to the classic groupoid model \cite{hofmann1998groupoid} by Hofmann and Streicher; thus, our main technical contribution is to add \emph{intensional} structures and operations in such a way that the operations preserve the structures.
In fact, our game model stands in a sharp contrast to the groupoid model in its intensional nature: The game model refutes \emph{FunExt} \cite{hottbook} and UA.
Hence, it is also distinguished from the recent cubical set model \cite{bezem2014model,cohen2016cubical}, and conceptually closer to the BHK-interpretation.

Our hope is that their topological and our computational perspectives are complementary and fruitful, rather than opposing, to each other. 

\if0
As already explained, our game semantics is distinguished from other formulation of the BHK-interpretation of \textsf{MLTT} by its intensional and syntax-independent nature.
For instance, \emph{realizability models} \cite{reus1999realizability,smith1984interpretation,beeson2012foundations,aczel1984strength} take, as \emph{realizers}, e.g., (codes of) Turing machines (TMs) or syntactic terms. However, TMs are \emph{too low-level} to capture type-theoretic phenomena and syntactic terms are still \emph{syntax}\footnote{In particular, if one has the ``semantics-first-view'', i.e., semantic (or mathematical) concepts must come first and syntax is just a convenient formal notation, then such a syntactic model would be unsatisfactory.}.
\fi


\subsection{Structure of the Paper}
The rest of the paper proceeds as follows.
First, Section~\ref{McGamesAndStrategies} reviews necessary backgrounds in game semantics, and Section~\ref{PredicativeGames} recalls the games and strategies in \cite{yamada2016game}.
Then we refine this variant in Sections~\ref{GamesWithEqualities}, \ref{DependentGamesWithEqualities}, leading to the central notion of \emph{games with equalities (GwEs)}.
Finally, we construct a model of \textsf{MLTT} by GwEs in Section~\ref{InterpretationOfMLTT}, and make a conclusion in Section~\ref{ConclusionAndFutureWork}.

\section{Games with Equalities}
\label{GamesWithEqualities}
This section presents our games and strategies.
Let us first fix notation:
\begin{itemize}

\item We use bold letters $\bm{s}, \bm{t}, \bm{u}, \bm{v}$, etc. for sequences, in particular $\bm{\epsilon}$ for the \emph{empty sequence}, and letters $a, b, c, d, m, n, p, q, x, y, z$, etc. for elements of sequences.

\item A \emph{concatenation} of sequences is represented by a juxtaposition of them, but we write $a \bm{s}$, $\bm{t} b$, $\bm{u} c \bm{v}$ for $(a) \bm{s}$, $\bm{t} (b)$, $\bm{u} (c) \bm{v}$, etc. We sometimes write $\bm{s} . \bm{t}$ for $\bm{s t}$ for readability.

\item We write $\mathsf{even}(\bm{s})$ (resp. $\mathsf{odd}(\bm{s})$) if $\bm{s}$ is of \emph{even-length} (resp. \emph{odd-length}). For a set $S$ of sequences, we define $S^\mathsf{even} \stackrel{\mathrm{df. }}{=} \{ \bm{s} \in S \ \! | \! \ \mathsf{even}(\bm{s}) \}$ and $S^\mathsf{odd} \stackrel{\mathrm{df. }}{=} \{ \bm{t} \in S \ \! | \! \ \mathsf{odd}(\bm{t}) \}$.

\item $\bm{s} \preceq \bm{t}$ means $\bm{s}$ is a \emph{prefix} of $\bm{t}$, and $\mathsf{pref}(S) \stackrel{\mathrm{df. }}{=} \{ \bm{s} \ \! | \ \! \exists \bm{t} \in S . \ \! \bm{s} \preceq \bm{t} \ \! \}$.


\item For a function $f : A \to B$ and a subset $S \subseteq A$, $f \upharpoonright S$ denotes the \emph{restriction} of $f$ to $S$. 

\item Given sets $X_1, X_2, \dots, X_n$, $\pi_i : X_1 \times X_2 \times \dots \times X_n \to X_i$ denotes the \emph{$i^{\text{th}}$-projection function} $(x_1, x_2, \dots, x_n) \mapsto x_i$ for each $i \in \{ 1, 2, \dots, n \}$.

\item Let $X^* \stackrel{\mathrm{df. }}{=} \{ x_1 x_2 \dots x_n \ \! | \ \! n \in \mathbb{N}, x_i \in X \ \! \}$ for each set $X$.

\item Given a sequence $\bm{s}$ and a set $X$, $\bm{s} \upharpoonright X$ denotes the subsequence of $\bm{s}$ that consists of elements in $X$. When $\bm{s} \in Z^*$ with $Z = X + Y$ for some set $Y$, we abuse notation: The operation deletes the ``tags'' for the disjoint union, so that $\bm{s} \upharpoonright X \in X^*$.

\item For a poset $P$ and a subset $S \subseteq P$, $\mathsf{sup}(S)$ denotes the \emph{supremum} of $S$. 


\end{itemize}

\subsection{Review: McCusker's Games and Strategies}
\label{McGamesAndStrategies}
As mentioned above, our game semantics is a refinement of the previous work \cite{yamada2016game}, which is based on McCusker's games and strategies \cite{abramsky1999game,mccusker1998games}.
Therefore let us begin with a quick review of McCusker's variant, focusing mainly on basic definitions. 
See \cite{abramsky1999game,mccusker1998games} for a more detailed treatment of this variant, and \cite{abramsky1997semantics,abramsky1999game,hyland1997game} for a general introduction to the field of game semantics.

\subsubsection{Games}
A \emph{game}, roughly, is a certain kind of a rooted forest whose branches represent \emph{plays} in the ``game in the ordinary sense'' (such as chess) it represents. These branches are finite sequences of \emph{moves} of the game; a play of the game proceeds as its participants alternately make moves. Thus, a game is what specifies possible interactions between the participants, and so it interprets a \emph{type} in computation (resp. a \emph{proposition} in logic) which specifies terms of the type (resp. proofs of the proposition). For our purpose, it suffices to focus on games between two participants, \emph{Player} (who represents a ``computer'' or a ``mathematician'') and \emph{Opponent} (who represents an ``environment'' or a ``rebutter''), where Opponent always starts a play.  

Technically, games are based on two preliminary concepts: \emph{arenas} (Definition~\ref{DefArenas}) and \emph{legal positions} (Definition~\ref{DefLegalPositions}). An arena defines the basic components of a game, which in turn induces legal positions that specify the basic rules of the game. 

\begin{definition}[Arenas \cite{abramsky1999game,mccusker1998games}]
\label{DefArenas}
An \emph{\bfseries arena} is a triple $G = (M_G, \lambda_G, \vdash_G)$, where $M_G$ is a set whose elements are called \emph{\bfseries moves}, $\lambda_G$ is a function $M_G \to \{ \mathsf{O}, \mathsf{P} \} \times \{\mathsf{Q}, \mathsf{A} \}$, where $\mathsf{O}, \mathsf{P}, \mathsf{Q}, \mathsf{A}$ are some distinguished symbols, called the \emph{\bfseries labeling function}, and $\vdash_G \ \subseteq (\{ \star \} + M_G) \times M_G$, where $\star$ is some fixed element, is called the \emph{\bfseries enabling relation}, which satisfies:
\begin{itemize}

\item {\sffamily (E1)} If $\star \vdash_G m$, then $\lambda_G (m) = \mathsf{OQ}$ and $n \vdash_G m \Leftrightarrow n = \star$

\item {\sffamily (E2)} If $m \vdash_G n$ and $\lambda_G^\mathsf{QA} (n) = \mathsf{A}$, then $\lambda_G^\mathsf{QA} (m) = \mathsf{Q}$

\item {\sffamily (E3)} If $m \vdash_G n$ and $m \neq \star$, then $\lambda_G^\mathsf{OP} (m) \neq \lambda_G^\mathsf{OP} (n)$

\end{itemize}
in which $\lambda_G^\mathsf{OP} \stackrel{\mathrm{df. }}{=} \pi_1 \circ \lambda_G : M_G \to \{ \mathsf{O}, \mathsf{P} \}$ and $\lambda_G^\mathsf{QA} \stackrel{\mathrm{df. }}{=} \pi_2 \circ \lambda_G : M_G \to \{ \mathsf{Q}, \mathsf{A} \}$.
A move $m \in M_G$ is called \emph{\bfseries initial} if $\star \vdash_G m$, an \emph{\bfseries O-move} if $\lambda_G^\textsf{OP}(m) = \textsf{O}$, a \emph{\bfseries P-move} if $\lambda_G^\textsf{OP}(m) = \mathsf{P}$, a \emph{\bfseries question} if $\lambda_G^\textsf{QA}(m) = \textsf{Q}$, and an \emph{\bfseries answer} if $\lambda_G^\textsf{QA}(m) = \textsf{A}$.
The symbols $\mathsf{O, P, Q, A}$ are calle \emph{\bfseries labels}.
\end{definition}

\begin{definition}[J-sequences \cite{hyland2000full,abramsky1999game,mccusker1998games}]
A \emph{\bfseries justified (j-) sequence} in an arena $G$ is a finite sequence $\bm{s} \in M_G^*$, in which each non-initial move $m$ is associated with a move $\mathcal{J}_{\bm{s}}(m)$, or written $\mathcal{J}(m)$, called the \emph{\bfseries justifier} of $m$ in $\bm{s}$, that occurs previously in $\bm{s}$ and satisfies $\mathcal{J}_{\bm{s}}(m) \vdash_G m$. We also say that $m$ is \emph{\bfseries justified} by $\mathcal{J}_{\bm{s}}(m)$, and there is a \emph{\bfseries pointer} from $m$ to $\mathcal{J}_{\bm{s}}(m)$. 
\end{definition}

\begin{notation}
We write $\mathscr{J}_G$ for the set of all j-sequences in an arena $G$.
\end{notation}

\begin{remark}
Given arenas $A, B$ and j-sequences $\bm{s}m \in \mathscr{J}_A$, $\bm{t}n \in \mathscr{J}_B$, the equation $\bm{s}m = \bm{t}n$ means not only they are equal sequences but also their justification and labeling structures are identical, i.e., $\bm{s} m = \bm{t} n \Rightarrow \lambda_A(m) = \lambda_B(n) \wedge \mathcal{J}_{\bm{s}}(m) = \mathcal{J}_{\bm{t}}(n)$, where $\mathcal{J}_{\bm{s}}(m) = \mathcal{J}_{\bm{t}}(n)$ denotes not only the equality of moves but also the equality of the positions of their respective occurrences in $\bm{s}$ and $\bm{t}$, i.e., $\mathcal{J}_{\bm{s}}(m)$ and $\mathcal{J}_{\bm{t}}(n)$ are the $i^{\text{th}}$-elements of $\bm{s}$ and $\bm{t}$, respectively, for some $i \in \mathbb{N}$.
\end{remark}

\begin{definition}[Views \cite{hyland2000full,abramsky1999game,mccusker1998games}]
For a j-sequence $\bm{s}$ in an arena $G$, we define the \emph{\bfseries Player (P-) view} $\lceil \bm{s} \rceil_G$ and the \emph{\bfseries Opponent (O-) view} $\lfloor \bm{s} \rfloor_G$ by induction on the length of $\bm{s}$: 
\begin{itemize}

\item $\lceil \bm{\epsilon} \rceil_G \stackrel{\mathrm{df. }}{=} \bm{\epsilon}$

\item $\lceil \bm{s} m \rceil_G \stackrel{\mathrm{df. }}{=} \lceil \bm{s} \rceil_G . m$, if $m$ is a P-move

\item $\lceil \bm{s} m \rceil_G \stackrel{\mathrm{df. }}{=} m$, if $m$ is initial 

\item $\lceil \bm{s} m \bm{t} n \rceil_G \stackrel{\mathrm{df. }}{=} \lceil \bm{s} \rceil_G . m n$, if $n$ is an O-move with $\mathcal{J}_{\bm{s} m \bm{t} n}(n) = m$

\item $\lfloor \bm{\epsilon} \rfloor_G \stackrel{\mathrm{df. }}{=} \bm{\epsilon}$, $\lfloor \bm{s} m \rfloor_G \stackrel{\mathrm{df. }}{=} \lfloor \bm{s} \rfloor_G . m$, if $m$ is an O-move

\item $\lfloor \bm{s} m \bm{t} n \rfloor_G \stackrel{\mathrm{df. }}{=} \lfloor \bm{s} \rfloor_G . m n$, if $n$ is a P-move with $\mathcal{J}_{\bm{s} m \bm{t} n}(n) = m$

\end{itemize}
where justifiers of the remaining moves in $\lceil \bm{s} \rceil_G$ (resp. $\lfloor \bm{s} \rfloor_G$) are unchanged if they occur in $\lceil \bm{s} \rceil_G$ (resp. $\lfloor \bm{s} \rfloor_G$) and undefined otherwise.
\end{definition}

\begin{definition}[Legal positions \cite{abramsky1999game,mccusker1998games}]
\label{DefLegalPositions}
A \emph{\bfseries legal position} in an arena $G$ is a finite sequence $\bm{s} \in M_G^*$ (equipped with justifiers) that satisfies the following conditions:
\begin{itemize}
\item \textsf{(Justification)} $\bm{s}$ is a j-sequence in $G$

\item \textsf{(Alternation)} If $\bm{s} = \bm{s}_1 m n \bm{s}_2$, then $\lambda_G^\mathsf{OP} (m) \neq \lambda_G^\mathsf{OP} (n)$


\item \textsf{(Visibility)} If $\bm{s} = \bm{t} m \bm{u}$ with $m$ non-initial, then $\mathcal{J}_{\bm{s}}(m)$ occurs in $\lceil \bm{t} \rceil_{G}$ if $m$ is a P-move, and it occurs in $\lfloor \bm{t} \rfloor_{G}$ if $m$ is an O-move.

\end{itemize}

\end{definition}

\begin{notation}
We write $\mathscr{L}_G$ for the set of all legal positions in an arena $G$.
\end{notation}

\begin{definition}[Threads \cite{abramsky1999game,mccusker1998games}]
Let $G$ be an arena, and $\bm{s} \in \mathscr{L}_G$. Assume that $m$ is an occurrence of a move in $\bm{s}$. The \emph{\bfseries chain of justifiers} from $m$ is a sequence $m_0 m_1 \dots m_k m$ of justifiers, i.e., $m_0 m_1 \dots m_k m \in M_G^\ast$ that satisfies $\mathcal{J}(m) = m_k, \mathcal{J}(m_k) = m_{k-1}, \dots, \mathcal{J}(m_1) = m_0$, where $m_0$ is initial. In this case, we say that $m$ is \emph{\bfseries hereditarily justified} by $m_0$. The subsequence of $\bm{s}$ consisting of the chains of justifiers in which $m_0$ occurs is called the \emph{\bfseries thread} of $m_0$ in $\bm{s}$. 
An occurrence of an initial move is called an \emph{\bfseries initial occurrence}. 
\end{definition}

\begin{notation}
Let $G$ be an arena, and $\bm{s} \in \mathscr{L}_G$.
$\mathsf{InitOcc}(\bm{s})$ denotes the set of all initial occurrences in $\bm{s}$.
We write $\bm{s} \upharpoonright I$, where $I \subseteq \mathsf{InitOcc}(\bm{s})$, for the subsequence of $\bm{s}$ consisting of threads of initial occurrences in $I$, but we rather write $\bm{s} \upharpoonright m_0$ for $\bm{s} \upharpoonright \{ m_0 \}$.
\end{notation}

We are now ready to define the notion of \emph{games}:
\begin{definition}[Games \cite{abramsky1999game,mccusker1998games}]
\label{Games}
A \emph{\bfseries game} is a quadruple $G = (M_G, \lambda_G, \vdash_G, P_G)$ such that the triple $(M_G, \lambda_G, \vdash_G)$ forms an arena (also denoted by $G$), and $P_G$ is a subset of $\mathscr{L}_G$ whose elements are called \emph{\bfseries (valid) positions} in $G$ satisfying:
\begin{itemize}
\item \textsf{(V1)} $P_G$ is non-empty and ``\emph{prefix-closed}'': $\bm{s}m \in P_G \Rightarrow \bm{s} \in P_G$
\item \textsf{(V2)} If $\bm{s} \in P_G$ and $I \subseteq \mathsf{InitOcc}(\bm{s})$, then $\bm{s} \upharpoonright I \in P_G$.

\end{itemize}
A \emph{\bfseries play} in $G$ is a (finite or infinite) sequence $\bm{\epsilon}, m_1, m_1m_2, \dots$ of positions in $G$.
\end{definition}

\begin{convention}
For technical convenience, we assume, without loss of any important generality, that every game $G$ is \emph{\bfseries economical}: Every move $m \in M_G$ appears in a position in $G$, and every ``enabling pair'' $m \vdash_G n$ occurs as a non-initial move $n$ and its justifier $m$ in a position in $G$.
Consequently, a game $G$ is completely determined by the set $P_G$ of its positions. 
\end{convention}

The following is a natural ``substructure-relation'' between games:
\begin{definition}[Subgames \cite{yamada2016game}]
\label{Subgames}
A \emph{\bfseries subgame} of a game $G$ is a game $H$, written $H \trianglelefteq G$, that satisfies $M_H \subseteq M_G$, $\lambda_H = \lambda_G \upharpoonright M_H$, $\vdash_H \ \subseteq \ \vdash_G \cap \ ((\{ \star \} + M_H) \! \times \! M_H )$, and $P_H \subseteq P_G$.
\end{definition}

Later, we shall focus on games that satisfy the following two conditions:
\begin{definition}[Well-openness \cite{abramsky1999game,mccusker1998games}]
A game $G$ is \emph{\bfseries well-opened (wo)} if $\bm{s} m \in P_G$ with $m$ initial implies $\bm{s} = \bm{\epsilon}$.
\end{definition}

\begin{definition}[Well-foundness \cite{clairambault2010totality}]
A game $G$ is \emph{\bfseries well-founded (wf)} if so is the enabling relation $\vdash_G$, i.e., there is no infinite sequence $\star \vdash_G m_0 \vdash_G m_1 \vdash_G m_2 \dots$ of ``enabling pairs''.
\end{definition}

Next, let us recall the standard constructions on games.
\begin{notation}
For brevity, we usually omit the ``tags'' for disjoint union of sets of moves in the following constructions, e.g., we write $a \in A + B$, $b \in A + B$ if $a \in A$, $b \in B$, and given relations $R_A \subseteq A \times A$, $R_B \subseteq B \times B$, we write $R_A + R_B$ for the relation on $A + B$ such that $(x, y) \in R_A + R_B \stackrel{\mathrm{df. }}{\Leftrightarrow} (x, y) \in R_A \vee (x, y) \in R_B$, and so on. 
However, in contrast, we explicitly define tags for certain disjoint union in Section~\ref{PredicativeGames} because they are a part of our proposed structures and play an important role in this paper.
\end{notation}

\begin{definition}[Tensor product \cite{abramsky1999game,mccusker1998games}]
Given games $A, B$, we define their \emph{\bfseries tensor product} $A \otimes B$ as follows:  
\begin{itemize}

\item $M_{A \otimes B} \stackrel{\mathrm{df. }}{=} M_A + M_B$ 

\item $\lambda_{A \otimes B} \stackrel{\mathrm{df. }}{=} [\lambda_A, \lambda_B]$

\item $\vdash_{A \otimes B} \stackrel{\mathrm{df. }}{=} \ \! \vdash_A + \vdash_B$ 

\item $P_{A \otimes B} \stackrel{\mathrm{df. }}{=} \{ \bm{s} \in L_{A \otimes B} \ \! | \ \! \bm{s} \upharpoonright A \in P_A, \bm{s} \upharpoonright B \in P_B \}$
\end{itemize}
where $\bm{s} \upharpoonright A$ (resp. $\bm{s} \upharpoonright B$) denotes the subsequence of $\bm{s}$ that consists of moves of $A$ (resp. $B$) equipped with the justifiers in $\bm{s}$.
\end{definition}

\begin{definition}[Linear implication \cite{abramsky1999game,mccusker1998games}]
Given games $A, B$, we define their \emph{\bfseries linear implication} $A \multimap B$ as follows:
\begin{itemize}

\item $M_{A \multimap B} \stackrel{\mathrm{df. }}{=} M_{A} + M_B$

\item $\lambda_{A \multimap B} \stackrel{\mathrm{df. }}{=} [\overline{\lambda_A}, \lambda_B]$, where $\overline{\lambda_A} \stackrel{\mathrm{df. }}{=} \langle \overline{\lambda_A^\textsf{OP}}, \lambda_A^\textsf{QA} \rangle$ and $\overline{\lambda_A^\textsf{OP}} (m) \stackrel{\mathrm{df. }}{=} \begin{cases} \mathsf{P} \ &\text{if $\lambda_A^\textsf{OP} (m) = \textsf{O}$} \\ \textsf{O} \ &\text{otherwise} \end{cases}$

\item $\star \vdash_{A \multimap B} m \stackrel{\mathrm{df. }}{\Leftrightarrow} \star \vdash_B m$

\item $m \vdash_{A \multimap B} n \ (m \neq \star) \stackrel{\mathrm{df. }}{\Leftrightarrow} (m \vdash_A n) \vee (m \vdash_B n) \vee (\star \vdash_B m \wedge \star \vdash_A n)$ 

\item $P_{A \multimap B} \stackrel{\mathrm{df. }}{=} \{ \bm{s} \in L_{A \multimap B} \ \! | \! \ \bm{s} \upharpoonright A \in P_A, \bm{s} \upharpoonright B \in P_B \}$.

\end{itemize} 
\end{definition}

\begin{definition}[Product \cite{abramsky1999game,mccusker1998games}]
Given games $A, B$, we define their \emph{\bfseries product} $A \& B$ as follows:
\begin{itemize}

\item $M_{A \& B} \stackrel{\mathrm{df. }}{=} M_A + M_B$

\item $\lambda_{A \& B} \stackrel{\mathrm{df. }}{=} [\lambda_A, \lambda_B]$

\item $\vdash_{A \& B} \stackrel{\mathrm{df. }}{=} \ \! \vdash_A + \vdash_B$ 

\item $P_{A \& B} \stackrel{\mathrm{df. }}{=} \{ \bm{s} \in L_{A \& B} \ \! | \ \! \bm{s} \upharpoonright A \in P_A, \bm{s} \upharpoonright B = \bm{\epsilon} \ \! \} \cup \{ \bm{s} \in L_{A \& B} \ \! | \ \! \bm{s} \upharpoonright A = \bm{\epsilon}, \bm{s} \upharpoonright B \in P_B \}$.
\end{itemize}
\end{definition}

\begin{definition}[Exponential \cite{abramsky1999game,mccusker1998games}]
For any game $A$, we define its \emph{\bfseries exponential} $!A$ as follows: The arena $!A$ is just the arena $A$, and $P_{!A} \stackrel{\mathrm{df. }}{=} \{ \bm{s} \in L_{!A} \ \! | \ \! \forall m \in \mathsf{InitOcc}(\bm{s}) . \ \! \bm{s} \upharpoonright m \in P_A \ \! \}$.
\end{definition}

In addition, \cite{yamada2016game} has introduced ``composition of games'', which is a natural generalization of the composition of strategies \cite{abramsky1999game,mccusker1998games}:
\begin{definition}[Composition of games \cite{yamada2016game}]
\label{DefCompositionOfGames}
Given games $A, B, C$, $J \trianglelefteq A \multimap B^{[1]}$, $K \trianglelefteq B^{[2]} \multimap C$, where the superscripts $[1], [2]$ are to distinguish two copies of $B$, we define their \emph{\bfseries composition} $K \circ J$ (or written $J ; K$) by:
\begin{itemize}

\item $M_{K \circ J} \stackrel{\mathrm{df. }}{=} (M_J \upharpoonright A) + (M_K \upharpoonright C)$

\item $\lambda_{K \circ J} \stackrel{\mathrm{df. }}{=} [\lambda_J \upharpoonright M_A, \lambda_K \upharpoonright M_C]$ 

\item $\star \vdash_{K \circ J} m \stackrel{\mathrm{df. }}{\Leftrightarrow} \star \vdash_K m$

\item $m \vdash_{K \circ J} n \ (m \neq \star) \stackrel{\mathrm{df. }}{\Leftrightarrow} m \vdash_J n \vee m \vdash_K n \vee \exists b \in M_{B} . \ \! m \vdash_{K} b^{[2]} \wedge b^{[1]} \vdash_J n$

\item $P_{K \circ J} \stackrel{\mathrm{df. }}{=} \{ \bm{s} \upharpoonright A, C \ \! | \ \! \bm{s} \in P_J \ddagger P_K \}$ 

\end{itemize} 
where $P_J \ddagger P_K \stackrel{\mathrm{df. }}{=} \{ \bm{s} \in (M_J + M_K)^\ast \ \! | \ \! \bm{s} \upharpoonright J \in P_J, \bm{s} \upharpoonright K \in P_K, \bm{s} \upharpoonright B^{[1]}, B^{[2]} \in \mathsf{pr}_B \}$, $\mathsf{pr}_B \stackrel{\mathrm{df. }}{=} \{ \bm{s} \in P_{B^{[1]} \multimap B^{[2]}} \ \! | \ \! \forall \bm{t} \preceq{\bm{s}}. \ \mathsf{even}(\bm{t}) \Rightarrow \bm{t} \upharpoonright B^{[1]} = \bm{t} \upharpoonright B^{[2]} \}$, $b^{[1]} \in M_{B^{[1]}}, b^{[2]} \in M_{B^{[2]}}$ are $b \in M_B$ equipped with respective tags for $B^{[1]}, B^{[2]}$, $M_J \upharpoonright A \stackrel{\mathrm{df. }}{=} \{ m \in M_J \ \! | \ \! m \upharpoonright A \neq \bm{\epsilon} \ \! \}$, $M_K \upharpoonright C \stackrel{\mathrm{df. }}{=} \{ n \in M_K \ \! | \ \! n \upharpoonright C \neq \bm{\epsilon} \ \! \}$, and $\bm{s} \upharpoonright A, C$ is $\bm{s} \upharpoonright M_A + M_C$ equipped with the pointer defined by $m \leftarrow n$ in $\bm{s} \upharpoonright A, C$ if and only if $\exists m_1, m_2, \dots, m_k \in M_{((A \multimap B^{[1]}) \multimap B^{[2]}) \multimap C} \setminus M_{A \multimap C} . \ \! m \leftarrow m_1 \leftarrow m_2 \leftarrow \dots \leftarrow m_k \leftarrow n$ in $\bm{s}$ ($\bm{s} \upharpoonright B^{[1]}, B^{[2]}$ is defined analogously).
\end{definition}

It has been shown in \cite{abramsky1999game,mccusker1998games,yamada2016game} that these constructions are all well-defined.

\subsubsection{Strategies}
Let us turn our attention to strategies.
A \emph{strategy} on a game, roughly, is what tells Player which move she should make next at each of her turns in the game, and so in game semantics it interprets a \emph{term} of the type (resp. a \emph{proof} of the proposition) which the game interprets. 

In conventional game semantics, a strategy on a game $G$ is defined as a certain set of even-length positions in $G$ \cite{abramsky1999game,mccusker1998games}.
However, it prevents us from talking about strategies \emph{without underlying games}.
To overcome this point, \cite{yamada2016game} reformulates strategies as follows:

\begin{definition}[Strategies \cite{yamada2016game}]
\label{DefStrategiesAsWellOpenedDeterministicGames}
A \emph{\bfseries strategy} is a wo-game $\sigma$ that is ``\emph{deterministic}'': $\bm{s}mn, \bm{s}mn' \in P_\sigma^{\mathsf{even}}$ implies $\bm{s}mn = \bm{s}mn'$. 
Given a wo-game $G$, we say that $\sigma$ is \emph{on} $G$ and write $\sigma : G$ if $\sigma \trianglelefteq G$ and $P_\sigma$ is ``\emph{O-inclusive}'' with respect to $P_G$: $\bm{s} \in P_\sigma^{\mathsf{even}} \wedge \bm{s}m \in P_G$ implies $\bm{s}m \in P_\sigma$.
\end{definition} 

This notion of strategies corresponds to conventional strategies on wo-games by the \emph{\bfseries canonical bijection} $\Phi : \sigma \stackrel{\sim}{\mapsto} P_\sigma^{\mathsf{even}}$, and $\sigma : G$ in the sense of Definition~\ref{DefStrategiesAsWellOpenedDeterministicGames} if and only if $\Phi(\sigma) : G$ in the conventional sense \cite{abramsky1999game,mccusker1998games}, under the assumption that every game is economical (see \cite{yamada2016game} for the proof of this fact). 
We need to focus on \emph{wo}-games here since otherwise the inverse $\Phi^{-1}$ is not well-defined due to the axiom \textsf{V2} of games (though in any case objects of the cartesian closed category of games are usually wo \cite{abramsky1999game,mccusker1998games,abramsky2000full,hyland2000full}).

Moreover, we may reformulate standard constraints and constructions on strategies:
\begin{definition}[Constraints on strategies \cite{yamada2016game}]
A strategy $\sigma : G$ is:
\begin{itemize}

\item \emph{\bfseries innocent} if $\bm{s}mn, \bm{t} \in P_\sigma^{\mathsf{even}} \wedge \bm{t} m \in P_G \wedge \lceil \bm{t} m \rceil_G = \lceil \bm{s} m \rceil_G$ implies $\bm{t}mn \in P_\sigma$

\item \emph{\bfseries well-bracketed (wb)} if, whenever $\bm{s} q \bm{t} a \in P_\sigma^{\mathsf{even}}$, where $q$ is a question that justifies an answer $a$, every question in $\bm{\tilde{t}}$, where $\lceil \bm{s} q \bm{t} \rceil_G = \bm{\tilde{s}} q \bm{\tilde{t}}$, justifies an answer in $\bm{\tilde{t}}$

\item \emph{\bfseries total} if $\bm{s} \in P_\sigma^{\mathsf{even}} \wedge \bm{s} m \in P_G$ implies $\bm{s} m n \in P_\sigma$ for some (unique) $n \in M_G$

\item \emph{\bfseries noetherian} if $P_\sigma$ does not contain any strictly increasing (with respect to $\preceq$) infinite sequence of P-views of positions in $G$.

\end{itemize}
\end{definition}

\begin{definition}[Constructions on strategies \cite{yamada2016game}]
The \emph{\bfseries composition} $\circ$, \emph{\bfseries tensor (product)} $\otimes$, \emph{\bfseries pairing} $\langle \_, \_ \rangle$ and \emph{\bfseries promotion} $(\_)^\dagger$ of strategies are defined to be the composition $\circ$, tensor product $\otimes$, product $\&$ and exponential $!$ of games, respectively. 
\end{definition}

\begin{definition}[Copy-cats and derelictions \cite{yamada2016game}]
Given a game $A$, the \emph{\bfseries copy-cat} $\mathit{cp}_A$ is the strategy on $A \multimap A$ whose arena is $A \multimap A$ and positions are given by $P_{\mathit{cp}_A} \stackrel{\mathrm{df. }}{=} \mathsf{pr}_A$.
If $A$ is wo, the \emph{\bfseries dereliction} $\mathit{der}_A : \ !A \multimap A$ is $\mathit{cp}_A$ up to tags for the disjoint union for $!A$.
\end{definition}

These constrains and constructions on strategies coincide with the standard ones \cite{abramsky1999game,clairambault2010totality} with respect to the canonical bijection $\Phi$; see \cite{yamada2016game} for the proofs.

Note that \textsf{MLTT} is a \emph{total} type theory, i.e., its computation terminates in a finite period of time; thus it makes sense to focus on total strategies. 
However, it is well-known that totality of strategies is not preserved under composition due to the ``\emph{infinite chattering}'' between strategies \cite{clairambault2010totality}. For this problem, we further impose \emph{noetherianity} \cite{clairambault2010totality} on strategies.

Nevertheless, the dereliction $\mathit{der}_A$, the identity on each object $A$ in our category of games, is in general not noetherian.
This motivates us to focus on wf-games because:

\begin{lemma}[Well-defined derelictions \cite{yamada2016game}]
\label{LemWellDefinedDerelictions}
For any wo-game $A$, the dereliction $\mathit{der}_A$ is an innocent, wb and total strategy on $!A \multimap A$. It is noetherian if $A$ is additionally wf.
\end{lemma}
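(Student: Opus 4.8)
Since by definition $\mathit{der}_A$ is $\mathit{cp}_A$ up to the tags for the disjoint union in $!A$, and $P_{\mathit{cp}_A} = \mathsf{pr}_A$, the plan is to run the whole argument at the level of the copy-cat discipline $\mathsf{pr}_A$ on $!A \multimap A$. First I would check that $\mathit{der}_A$ is a strategy on $!A \multimap A$ in the sense of Definition~\ref{DefStrategiesAsWellOpenedDeterministicGames}. For $\mathit{der}_A \trianglelefteq !A \multimap A$ the arena is $!A \multimap A$ by definition, so the only point to verify is $\mathsf{pr}_A \subseteq P_{!A \multimap A}$: the synchronization clause (every even prefix $\bm{t}$ of a position satisfies $\bm{t} \upharpoonright A^{[1]} = \bm{t} \upharpoonright A^{[2]}$) forces the $!A$-component of a position to mirror the codomain play, which, since $A$ is wo, is single-threaded, so that component lies in $P_{!A}$. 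Well-openedness follows because the initial moves of $!A \multimap A$ are exactly those of the codomain $A$, and the copy-cat response to such a move is its copy in $!A$, which is non-initial in $!A \multimap A$; hence no position of $\mathit{der}_A$ has an initial occurrence other than its first move. Determinism and O-inclusivity are immediate from the synchronization clause: at an even position it pins the Player response to an O-move $m$ to the identical move in the dual component with the mirrored justifier (whence uniqueness), and appending an O-move to a position of $\mathsf{pr}_A$ creates no new even prefix, so membership in $\mathsf{pr}_A$ is preserved.

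Next I would treat totality, innocence and well-bracketing. Totality is the existence half of the preceding point: given an even $\bm{s} \in \mathsf{pr}_A$ and an O-move $m$ with $\bm{s}m \in P_{!A \multimap A}$, the copy $n$ of $m$ in the dual component (justified by the copy of $\mathcal{J}_{\bm{s}m}(m)$, or by $m$ itself when $m$ is the initial move of the codomain) is available and $\bm{s}mn \in \mathsf{pr}_A$, restoring synchronization. Innocence follows because this response $n$ depends only on $m$ and its justifier, both of which occur in $\lceil \bm{s}m \rceil_{!A \multimap A}$, so $\lceil \bm{t}m \rceil_{!A \multimap A} = \lceil \bm{s}m \rceil_{!A \multimap A}$ with $\bm{t}$ a copy-cat position and $\bm{t}m$ legal forces $\bm{t}mn \in \mathsf{pr}_A$. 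For well-bracketing I would first establish the structural fact that the P-view of any $\bm{s} \in \mathsf{pr}_A$ is an alternating ``zigzag'': writing $\hat{m}$ for the underlying move of $m$ in $A$, the P-view reads $m_1 m_2 m_3 m_4 \dots$ with $\hat{m}_1 = \hat{m}_2$ initial, $\hat{m}_{2i+1} = \hat{m}_{2i+2}$, $\hat{m}_{2i-1} \vdash_A \hat{m}_{2i+1}$, and the two copies of $A$ alternating at each consecutive pair. On such a sequence (E2) gives correct bracketing: when Player answers a question $q$ with $a$ (copying Opponent's answer to the copy $q^\ast$ of $q$), the segment of the P-view strictly between $q$ and $a$ is exactly $q^\ast$ followed by its answer, so every question there is answered there; hence $\mathit{der}_A$ is wb.

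Finally, assume $A$ is additionally wf and, for contradiction, that $\mathit{der}_A$ is not noetherian, i.e.\ there is an infinite $\preceq$-increasing chain $\bm{v}_1 \precneqq \bm{v}_2 \precneqq \dots$ of P-views lying in $P_{\mathit{der}_A}$; since each $\bm{v}_n$ is a P-view it is its own P-view (idempotence), hence a position of $\mathit{der}_A$. Its supremum $\bm{v}_\infty$ is then an infinite play of $\mathit{der}_A$, all of whose finite prefixes are positions (prefix-closure), and passing to the limit in the zigzag description above and reading off its underlying moves yields an infinite chain $\star \vdash_A \hat{m}_1 \vdash_A \hat{m}_3 \vdash_A \hat{m}_5 \vdash_A \dots$ of enabling pairs in $A$, contradicting well-foundedness of $\vdash_A$. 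I expect the main obstacle to be proving the zigzag characterization of copy-cat P-views rigorously: one must track carefully the $\mathsf{O}/\mathsf{P}$- and $\mathsf{Q}/\mathsf{A}$-labellings, the switching conditions built into $\multimap$ and $!$, and the visibility condition, and for the noetherian clause verify that a \emph{strictly} increasing family of P-views genuinely prolongs the underlying $\vdash_A$-chain rather than branching to a sibling (which would truncate the P-view). Modulo this lemma the rest is routine bookkeeping, and as an alternative one may transport each claim across the canonical bijection $\Phi$ from the conventional treatment in \cite{mccusker1998games, clairambault2010totality}.
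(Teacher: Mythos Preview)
Your proposal is correct and follows essentially the same route as the paper. The paper declares the innocence, well-bracketing and totality claims ``trivial'' and proves only the noetherian clause, by exactly your zigzag characterization of P-views: any P-view $\lceil \bm{s}m\rceil$ in $\mathit{der}_A$ has the form $m_1 m_1 m_2 m_2 \dots m_k m_k m$ with $\star \vdash_A m_1 \vdash_A \dots \vdash_A m_k \vdash_A m$, so an infinite increasing chain of P-views yields an infinite $\vdash_A$-chain, contradicting wf. Your additional worry about ``branching to a sibling'' is unfounded in this setting since the chain is $\preceq$-increasing, but otherwise you have simply filled in the details the paper omits.
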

\begin{proof}
We just show that $\mathit{der}_A$ is noetherian if $A$ is wf, as the other statements are trivial. For any $\bm{s}mm \in \mathit{der}_A$, it is easy to see by induction on the length of $\bm{s}$ that  the P-view $\lceil \bm{s} m \rceil_{!A \multimap A}$ is of the form $m_1 m_1 m_2 m_2 \dots m_k m_k m$, and there is a sequence $\star \vdash_A m_1 \vdash_A m_2 \dots \vdash_A m_k \vdash_A m$. Therefore if $A$ is wf, then $\mathit{der}_A$ must be noetherian. 
\end{proof}

In addition to totality and noetherianity, we shall later impose \emph{innocence} and \emph{well-bracketing} on strategies since \textsf{MLTT} is a \emph{functional} programming language \cite{abramsky1999game}.

The following two lemmata are immediate consequences of our definitions and established facts on (conventional) strategies in the literature.

\begin{lemma}[Well-defined composition \cite{yamada2016game}]
\label{LemWellDefinedComposition}
Given strategies $\sigma : A \multimap B$, $\tau : B \multimap C$, their composition $\tau \circ \sigma$ (or written $\sigma ; \tau$) forms a strategy on the game $A \multimap C$. If $\sigma$ and $\tau$ are both innocent, wb, total and noetherian, then so is $\tau \circ \sigma$.
\end{lemma}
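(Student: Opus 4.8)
The plan is to reduce everything to the corresponding facts about \emph{conventional} strategies via the canonical bijection $\Phi : \sigma \stackrel{\sim}{\mapsto} P_\sigma^{\mathsf{even}}$ of Definition~\ref{DefStrategiesAsWellOpenedDeterministicGames}. First I would check that the composition of games $\tau \circ \sigma$ of Definition~\ref{DefCompositionOfGames}, when $\sigma$ and $\tau$ happen to be strategies, again yields a \emph{well-opened} and \emph{deterministic} game, hence a strategy in our sense: well-openness is immediate because the initial moves of $A \multimap C$ are exactly those of $C$ and $\tau$ is well-opened, while determinism of $P_{\tau \circ \sigma}^{\mathsf{even}}$ follows from determinism of $P_\sigma^{\mathsf{even}}$ and $P_\tau^{\mathsf{even}}$ together with the standard ``the last move of an interaction sequence is never hidden'' analysis. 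I would then verify $\tau \circ \sigma \trianglelefteq A \multimap C$ and that $P_{\tau \circ \sigma}$ is O-inclusive with respect to $P_{A \multimap C}$, both of which are routine from the definitions.

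The next step is to observe that, under $\Phi$, the set $P_{\tau \circ \sigma}^{\mathsf{even}}$ coincides with the conventional composition $\Phi(\tau) \circ \Phi(\sigma)$, i.e.\ parallel composition in $((A \multimap B^{[1]}) \multimap B^{[2]}) \multimap C$ followed by hiding of the internal $B$-moves, including the bookkeeping of justification pointers spelled out in Definition~\ref{DefCompositionOfGames}. Granting this identification (which is part of the general coincidence of our constructions with the standard ones, cf.\ \cite{yamada2016game}), the fact that $\Phi(\tau) \circ \Phi(\sigma)$ is a strategy on $A \multimap C$ is precisely McCusker's composition lemma \cite{mccusker1998games,abramsky1999game}, and it transports back through $\Phi^{-1}$ since, as recorded after Definition~\ref{DefStrategiesAsWellOpenedDeterministicGames}, $\Phi$ restricts to a bijection between strategies on $A \multimap C$ in our sense and in the conventional sense.

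For the second assertion I would again appeal to the corresponding classical results: preservation of \emph{innocence} and \emph{well-bracketing} under composition is standard \cite{hyland2000full,mccusker1998games,abramsky1999game} and transports along $\Phi$ by the coincidence of the two sets of notions. \emph{Totality} on its own is famously not preserved — two total strategies may engage in \emph{infinite chattering} in the hidden component $B$ — which is exactly why \emph{noetherianity} was imposed; the statement that totality \emph{and} noetherianity together are preserved is Clairambault's theorem \cite{clairambault2010totality} that the subcategory of innocent, well-bracketed, total, noetherian strategies is well-defined, which I would invoke after translating through $\Phi$.

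I expect the main obstacle to be not a single hard argument but making the reduction airtight: precisely matching our underlying-game-free composition of Definition~\ref{DefCompositionOfGames} with the conventional ``interaction plus hiding'' composition, pointers and all, and checking that no edge case (the treatment of initial moves of $B^{[1]}, B^{[2]}$, or positions where hiding removes the final move) breaks determinism or well-openness. The substantive mathematical input — ruling out infinite chattering — is entirely outsourced to \cite{clairambault2010totality}.
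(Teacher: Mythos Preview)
Your proposal is correct and follows essentially the same approach as the paper. The paper treats this lemma as an immediate consequence of the established literature, noting that closure under composition and preservation of innocence and well-bracketing are in \cite{abramsky1999game,mccusker1998games}, while closure of innocent, total and noetherian strategies under composition is in \cite{clairambault2010totality}; your write-up simply spells out the reduction via the canonical bijection $\Phi$ in more detail, but the substance is identical.
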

\if0
\begin{proof}
It is shown in \cite{abramsky1999game,mccusker1998games} that strategies are closed under composition, and composition preserves innocence and wb of strategies. Also, it is proved in \cite{clairambault2010totality} that innocent, total and noetherian strategies are closed under composition.
\end{proof}
\fi

\if0
\begin{lemma}[Well-defined tensor product]
Given strategies $\lambda : A \multimap C$, $\gamma : B \multimap D$, their tensor product $\lambda \otimes \gamma$ is a well-defined strategy on the game $A \otimes B \multimap C \otimes D$. If $\lambda$ and $\gamma$ are innocent (resp. wb, total or noetherian), then so is $\lambda \otimes \gamma$.
\end{lemma}
\fi

\begin{lemma}[Well-defined pairing, tensor and promotion \cite{yamada2016game}]
\label{LemWellDefinedPairingTensorPromotion}
Given strategies $\sigma : C \multimap A$, $\tau : C \multimap B$, $\lambda : A \multimap C$, $\gamma : B \multimap D$, $\phi : \ !A \multimap B$, the pairing $\langle \sigma, \tau \rangle$, tensor $\lambda \otimes \gamma$ and promotion $\phi^\dagger$ form strategies on $C \multimap A \& B$, $A \otimes B \multimap C \otimes D$ and $!A \multimap B$, respectively. 
They are innocent (resp. wb, total, noetherian) if so are the respective component strategies.
\end{lemma}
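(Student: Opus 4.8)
The plan is to observe that, by the definition of the constructions on strategies, $\langle\sigma,\tau\rangle$, $\lambda\otimes\gamma$ and $\phi^\dagger$ are literally the games $\sigma\mathbin{\&}\tau$, $\lambda\otimes\gamma$ and $!\phi$ (suitably re-tagged so as to sit inside $C\multimap A\mathbin{\&}B$, $A\otimes B\multimap C\otimes D$ and $!A\multimap{!}B$), and then to split the work into two halves: (i) everything except noetherianity is transported from the corresponding, already-established facts about \emph{conventional} strategies across the canonical bijection $\Phi$; (ii) noetherianity, being the one non-standard constraint, is checked directly by reducing a putative counterexample to one of the component strategies.

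For half (i), I would first verify that each of $\langle\sigma,\tau\rangle$, $\lambda\otimes\gamma$, $\phi^\dagger$ is a strategy in the sense of Definition~\ref{DefStrategiesAsWellOpenedDeterministicGames}: well-openedness holds because any position carrying an initial move, once restricted to the component hosting that move, is a position of $\sigma$ or $\tau$ (resp.\ of $\lambda$ or $\gamma$; resp.\ of $\phi$ inside a single thread), so well-openedness of the component forces it to be $\bm{\epsilon}$; determinism is inherited the same way, since two even-length extensions $\bm{s}mn$, $\bm{s}mn'$ are routed into the same component by $m$. That each is \emph{on} the target game displayed in the statement (a subgame, and $\mathsf{O}$-inclusive) is exactly the content of the usual pairing/tensor/promotion statements for conventional strategies \cite{abramsky1999game,mccusker1998games}, transported along $\Phi$ using the already-recorded fact that $\Phi$ commutes with these operations. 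Likewise, preservation of innocence, well-bracketing and totality follows from the classical results (McCusker's thesis and Abramsky--McCusker for innocence and well-bracketing, Clairambault--Harmer \cite{clairambault2010totality} for totality) together with the recorded coincidence of our constraints with the standard ones with respect to $\Phi$.

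For half (ii), I argue that a strictly $\preceq$-increasing infinite sequence of P-views of positions in the composite strategy would, by a thread/component decomposition, project to such a sequence in one of the components, contradicting its noetherianity. For pairing this is immediate: a play of $A\mathbin{\&}B$ never leaves the component of its first move, so every P-view of a position of $C\multimap A\mathbin{\&}B$ lies entirely in $C\multimap A$ or in $C\multimap B$, hence an infinite increasing chain in $\langle\sigma,\tau\rangle$ yields one in $\sigma$ or in $\tau$. For the tensor one uses that in $A\otimes B$ (hence in $A\otimes B\multimap C\otimes D$) only Opponent may switch between the two tensor factors, so that every P-view is already a P-view within $\lambda$ or within $\gamma$ (grounded by the shared initial move on the $C\otimes D$ side), and the same contradiction applies. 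For promotion $\phi^\dagger : {!}A\multimap{!}B$, the P-view of a position lives within the single thread of its hereditarily justifying initial $!B$-move, and its restriction to that thread is precisely a P-view of a position of $\phi$; an infinite increasing chain therefore descends to $\phi$.

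The step I expect to be the main obstacle is this last one — noetherianity of the promotion — where one must be careful about how P-views interact with the thread structure of ${!}A$ and ${!}B$: that the P-view of a position of ${!}A\multimap{!}B$ never crosses between threads on the $!B$-side, and that restricting it to the relevant thread computes exactly the P-view taken in ${!}A\multimap B$ (equivalently in $\phi$). This is the same phenomenon exploited in the proof of Lemma~\ref{LemWellDefinedDerelictions}, now generalized from the copy-cat to an arbitrary noetherian $\phi$, and I would establish it by induction on the length of the position, tracking how each move's justifier sits relative to thread boundaries. Everything else reduces cleanly to the transport along $\Phi$ set up in half (i).
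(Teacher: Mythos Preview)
Your proposal is correct and follows essentially the same approach as the paper. The paper's own proof (which is in fact suppressed by an \verb|\if0...\fi| block) simply defers to \cite{abramsky1999game,mccusker1998games} for well-definedness and preservation of innocence and well-bracketing, and declares preservation of totality and noetherianity ``easy to verify''; your half~(i) is exactly this deferral made explicit via the canonical bijection $\Phi$, and your half~(ii) supplies the noetherianity argument the paper omits, with the P-view/thread analysis for promotion correctly flagged as the only point requiring care.
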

\if0
\begin{proof}
See \cite{abramsky1999game,mccusker1998games} for well-definedness and the preservation of innocence and well-bracketing. It is then easy to verify the preservation of totality and noetherianity.
\end{proof}
\fi

\subsection{Review: Predicative Games}
\label{PredicativeGames}
Now, we are ready to recall the variant of games and strategies in \cite{yamada2016game} that has established a game semantics for \textsf{MLTT} with the (cumulative hierarchy of) universes. 

\subsubsection{Motivating Observation}
Let us first give a characterization of wo-games as \emph{sets of strategies with some constraint} as it motivates and justifies the notion of predicative games introduced in the next section.

Then what constraint do we need? First, strategies on the same game must be \emph{consistent}:
\begin{definition}[Consistency \cite{yamada2016game}]
A set $\mathcal{S}$ of strategies is \emph{\bfseries consistent} if, for any $\sigma, \tau \in \mathcal{S}$, (i) $\lambda_\sigma (m) = \lambda_\tau (m)$ for all $m \in M_\sigma \cap M_\tau$; (ii) $\star \vdash_\sigma m \Leftrightarrow \star \vdash_\tau m$ and $m \vdash_\sigma n \Leftrightarrow m \vdash_\tau n$ for all $m, n \in M_\sigma \cap M_\tau$; and (iii) $\bm{s} m \in P_\sigma \Leftrightarrow \bm{s} m \in P_\tau$ for all $\bm{s} \in (P_\sigma \cap P_\tau)^{\mathsf{even}}$, $\bm{s} m \in P_\sigma \cup P_\tau$.
\end{definition}

Such a set $\mathcal{S}$ induces a game $\bigcup \mathcal{S} \stackrel{\mathrm{df. }}{=} (\bigcup_{\sigma \in \mathcal{S}}M_\sigma, \bigcup_{\sigma \in \mathcal{S}}\lambda_\sigma, \bigcup_{\sigma \in \mathcal{S}} \! \vdash_{\sigma}, \bigcup_{\sigma \in \mathcal{S}}P_\sigma)$ thanks to the first two conditions. 
Conversely, $\bigcup \mathcal{S}$ is not well-defined if $\mathcal{S}$ does not satisfy either. 
The third condition ensures the ``consistency of possible O-positions'' among strategies in $\mathcal{S}$.

However, some strategies on $\bigcup \mathcal{S}$ may not exist in $\mathcal{S}$. For this, we further require:
\begin{definition}[Completeness \cite{yamada2016game}]
A consistent set $\mathcal{S}$ of strategies is \emph{\bfseries complete} if, for any subset $\mathcal{A} \subseteq \bigcup_{\sigma \in \mathcal{S}}P_\sigma$ that satisfies $\mathcal{A} \neq \emptyset \wedge \forall \bm{s} \in \mathcal{A} . \ \! \bm{t} \preceq \bm{s} \Rightarrow \bm{t} \in \mathcal{A}$, $\forall \bm{s}mn, \bm{s}mn' \in \mathcal{A}^{\mathsf{even}} . \ \! smn = smn'$, and $\forall \bm{s} \in \mathcal{A}^{\mathsf{even}} . \ \! \bm{s}m \in \bigcup_{\sigma \in \mathcal{S}}P_\sigma \Rightarrow \bm{s}m \in \mathcal{A}$, the strategy $\Phi^{-1}(\mathcal{A}^{\mathsf{even}})$ exists in $\mathcal{S}$.
\end{definition}

Intuitively, the completeness of a set $\mathcal{S}$ of strategies means its closure under the ``patchwork of strategies'' $\mathcal{A} \subseteq \bigcup_{\sigma \in \mathcal{S}}P_\sigma$.
For example, consider a consistent set $\mathcal{S} = \{ \sigma, \tau \}$ of strategies $\sigma = \mathsf{pref}(\{ ac, bc \})$, $\tau = \mathsf{pref}(\{ ad, bd \})$ (for brevity, here we specify strategies by their positions).
Then we may take a subset $\phi = \mathsf{pref}(\{ ac, bd \}) \subseteq \sigma \cup \tau$, but it does not exist in $\mathcal{S}$, showing that $\mathcal{S}$ is not complete. 
Note that there are total nine strategies on the game $\bigcup \mathcal{S}$, and so we need to add $\phi$ and the remaining six to $\mathcal{S}$ to make it complete.

Now, we have arrived at the desired characterization:
\begin{theorem}[Games as collections of strategies \cite{yamada2016game}]
\label{ThmGamesAsCollectionsOfStrategies}
There is a one-to-one correspondence between wo-games and complete sets of strategies:
\begin{enumerate}

\item For any wo-game $G$, the set $\{ \sigma \ \! | \ \! \sigma : G \ \! \}$ is complete, and $G = \bigcup \{ \sigma \ \! | \ \! \sigma : G \ \! \}$.

\item For any complete set $\mathcal{S}$ of strategies, the strategies on $\bigcup \mathcal{S}$ are precisely the ones in $\mathcal{S}$.

\end{enumerate}
\end{theorem}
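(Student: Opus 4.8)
The plan is to prove the two numbered items separately and then read off the bijection from them: item~1 shows that $G \mapsto \{\sigma \mid \sigma : G\}$ is a section of $\mathcal{S} \mapsto \bigcup\mathcal{S}$, and item~2 shows it is also a retraction, so the two maps are mutually inverse and the asserted one-to-one correspondence follows at once.

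For item~1, I would first check that $\mathcal{G} \stackrel{\mathrm{df. }}{=} \{\sigma \mid \sigma : G\}$ is consistent. Clauses (i) and (ii) of consistency are immediate because every $\sigma : G$ satisfies $\sigma \trianglelefteq G$ (Definition~\ref{DefStrategiesAsWellOpenedDeterministicGames}), so its labelling and enabling data are restrictions of those of $G$; for clause (iii), if $\bm{s} \in (P_\sigma \cap P_\tau)^{\mathsf{even}}$ and $\bm{s} m \in P_\sigma \cup P_\tau \subseteq P_G$, then $m$ is an O-move by alternation, so O-inclusivity of $P_\sigma$ and of $P_\tau$ each force $\bm{s}m$ into it. Next, $G = \bigcup\mathcal{G}$ is verified componentwise; the only inclusion that requires work is $P_G \subseteq \bigcup_{\sigma \in \mathcal{G}} P_\sigma$ (the reverse inclusions and the agreement of moves, labels and enabling follow from $\sigma \trianglelefteq G$ together with the economical convention, which puts every move and every enabling pair into some position, hence into some $\sigma \in \mathcal{G}$). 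For this I would show each $\bm{s} \in P_G$ lies in some strategy on $G$: fix a function choosing, at every odd-length $\bm{t}n \in P_G$, either one $\preceq$-immediate extension inside $P_G$ or ``no extension'', subject only to agreeing with $\bm{s}$ along the odd-length prefixes of $\bm{s}$ (a consistent demand since $\bm{s}$ is a single sequence, along which each such choice is forced to be the unique next move of $\bm{s}$); the induced subgame of $G$ is well-opened (a subgame of a wo-game), deterministic and O-inclusive, hence a strategy on $G$ containing $\bm{s}$. Finally, completeness of $\mathcal{G}$ is essentially a restatement of the definitions: a subset $\mathcal{A} \subseteq P_G$ satisfying the three bulleted conditions in the definition of completeness is exactly the position set of a well-opened, deterministic, O-inclusive subgame of $G$, i.e.\ of the strategy $\Phi^{-1}(\mathcal{A}^{\mathsf{even}})$ on $G$.

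For item~2, I would use that consistency of $\mathcal{S}$ already makes $\bigcup\mathcal{S}$ a well-defined game (as noted after the definition of consistency), which is moreover well-opened and well-founded because each of its members is. The inclusion $\mathcal{S} \subseteq \{\sigma \mid \sigma : \bigcup\mathcal{S}\}$ is then direct: clauses (i) and (ii) of consistency give $\sigma \trianglelefteq \bigcup\mathcal{S}$ for each $\sigma \in \mathcal{S}$, and O-inclusivity of $P_\sigma$ relative to $P_{\bigcup\mathcal{S}}$ follows from clause (iii) --- if $\bm{s} \in P_\sigma^{\mathsf{even}}$ and $\bm{s}m \in P_{\bigcup\mathcal{S}}$, pick $\tau \in \mathcal{S}$ with $\bm{s}m \in P_\tau$; then $\bm{s} \in (P_\sigma \cap P_\tau)^{\mathsf{even}}$, and clause (iii) puts $\bm{s}m$ in $P_\sigma$. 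For the reverse inclusion, given any $\sigma : \bigcup\mathcal{S}$ I would invoke completeness of $\mathcal{S}$ with $\mathcal{A} \stackrel{\mathrm{df. }}{=} P_\sigma$: it is a nonempty, prefix-closed subset of $\bigcup_{\tau \in \mathcal{S}} P_\tau = P_{\bigcup\mathcal{S}}$, its determinism hypothesis is precisely the determinism of $\sigma$, and its closure hypothesis is precisely the O-inclusivity of $\sigma$; hence $\Phi^{-1}(P_\sigma^{\mathsf{even}})$ lies in $\mathcal{S}$, and since $\Phi$ is a bijection this strategy is $\sigma$ itself, so $\sigma \in \mathcal{S}$. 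Combining the two items yields $\bigcup\{\sigma \mid \sigma : G\} = G$ and $\{\sigma \mid \sigma : \bigcup\mathcal{S}\} = \mathcal{S}$, which is the correspondence.

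The step I expect to cost the most care is the construction in item~1 of a strategy containing a prescribed position $\bm{s} \in P_G$: one must check, using the economical convention and well-openedness of $G$, that the off-$\bm{s}$ choices can always be arranged so that determinism and O-inclusivity hold simultaneously, and --- relatedly --- that the passage $\mathcal{A} \leftrightarrow \Phi^{-1}(\mathcal{A}^{\mathsf{even}})$ between eligible subsets of $P_G$ and strategies really inverts $\Phi$, i.e.\ that the three completeness conditions on $\mathcal{A}$ together say exactly ``$\mathcal{A}$ is the position set of a strategy on the ambient game''. Everything else is a mechanical unfolding of the definitions of subgame, strategy, consistency and completeness.
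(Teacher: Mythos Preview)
The paper does not actually supply a proof of this theorem; it is quoted verbatim from \cite{yamada2016game} and used only as motivation for the definition of predicative games. So there is no ``paper's own proof'' to compare against here.

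That said, your argument is correct and is the natural unfolding of the definitions. A couple of small remarks: in item~2 you write that $\bigcup\mathcal{S}$ is ``well-opened and well-founded because each of its members is'' --- well-foundedness is neither assumed of strategies nor needed for the statement, so drop that clause; well-openness alone is what matters and your justification for it (inherited from each $\sigma \in \mathcal{S}$) is fine. Also, in item~1, the step you flag as delicate --- producing a strategy on $G$ containing a prescribed $\bm{s} \in P_G$ --- is indeed the only place where something beyond definition-chasing happens, and your sketch (extend the partial ``next P-move'' function determined by $\bm{s}$ arbitrarily, then close under O-moves) is the standard way to do it; well-openness of $G$ is what guarantees the result is again wo. Everything else is, as you say, mechanical.
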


Thus, any wo-game is of the form $\bigcup \mathcal{S}$ with its strategies in $\mathcal{S}$, where $\mathcal{S}$ is complete. 
Importantly, there is no essential difference between $\bigcup \mathcal{S}$ and $\Sigma \mathcal{S}$ defined by:
\begin{itemize}

\item $M_{\Sigma \mathcal{S}} \stackrel{\mathrm{df. }}{=} \{ q_{\mathcal{S}} \} \cup \mathcal{S} \cup \{ (m, \sigma) \ \! | \ \! \sigma \in \mathcal{S}, m \in M_\sigma \}$, where $q_{\mathcal{S}}$ is any element

\item $\lambda_{\Sigma \mathcal{S}} : q_{\mathcal{S}} \mapsto \mathsf{OQ}, (\sigma \in \mathcal{S}) \mapsto \mathsf{PA}, (m, \sigma) \mapsto \lambda_\sigma(m)$

\item $\vdash_{\Sigma \mathcal{S}} \ \stackrel{\mathrm{df. }}{=} \{ (\star, q_{\mathcal{S}}) \} \cup \{ (q_{\mathcal{S}}, \sigma) \ \! | \ \! \sigma \in \mathcal{S} \ \!  \} \cup \{ (\sigma, (m, \sigma)) \ \! | \ \! \sigma \in \mathcal{S}, \star \vdash_{\sigma} m \ \! \} \\ \cup \{ ((m, \sigma), (n, \sigma)) \ \! | \ \! \sigma \in \mathcal{S}, m \vdash_\sigma n \ \! \}$

\item $P_{\Sigma \mathcal{S}} \stackrel{\mathrm{df. }}{=} \mathsf{pref} (\{ q_{\mathcal{S}} . \sigma . (m_1, \sigma) . (m_2, \sigma) \dots (m_k, \sigma) \ \! | \! \ \sigma \in \mathcal{S}, m_1 . m_2 \dots m_k \in P_\sigma \})$.

\end{itemize}

A position in $\Sigma \mathcal{S}$ is essentially a position $\bm{s}$ in $\bigcup \mathcal{S}$ equipped with the initial two moves $q_{\mathcal{S}} . \sigma$ and the tag $(\_, \sigma)$ on subsequent moves, where $\sigma$ is any (not unique) $\sigma \in \mathcal{S}$ such that $\bm{s} \in P_\sigma$; the difference between $\bigcup \mathcal{S}$ and $\Sigma \mathcal{S}$ is whether to specify such $\sigma$ for each $\bm{s} \in P_{\bigcup \mathcal{S}}$.

Intuitively, a play in the game $\Sigma \mathcal{S}$ proceeds as follows. \emph{Judge} of the game first asks Player about her strategy in mind, and Player answers a strategy $\sigma \in \mathcal{S}$; and then an actual play between Opponent and Player begins as in $\bigcup \mathcal{S}$ except that Player must follow $\sigma$. 

To be fair, the declared strategy should be ``invisible'' to Opponent, and he has to declare an ``\emph{anti-strategy}'' beforehand, which is ``invisible'' to Player, and plays by following it as well.
To be precise, an \emph{\bfseries anti-strategy} $\tau$ on a game $G$ is a subgame $\tau \trianglelefteq G$ that is ``\emph{P-inclusive}'' (dual to ``O-inclusive'') with respect to $P_G$ and ``\emph{deterministic}'' on odd-length positions. 
Clearly, we may achieve the ``invisibility'' of Player's strategy to Opponent by requiring that \emph{anti-strategies cannot ``depend on the tags''}, i.e., any anti-strategy $\tau$ on $\Sigma \mathcal{S}$ must satisfy 
\begin{equation*}
q_{\mathcal{S}} . \sigma_1 . (m_1, \sigma_1) . (m_2, \sigma_1) \dots (m_{2k+1}, \sigma_1) \in P_\tau \Leftrightarrow q_{\mathcal{S}} . \sigma_2 . (m_1, \sigma_2) . (m_2, \sigma_2) \dots (m_{2k+1}, \sigma_2) \in P_\tau 
\end{equation*}
for any $\sigma_1, \sigma_2 \in \mathcal{S}$, $q_{\mathcal{S}} . \sigma_i . (m_1, \sigma_i) . (m_2, \sigma_i) \dots (m_{2k}, \sigma_i) \in P_\tau$ for $i = 1, 2$, $m_{2k+1} \in M_{\sigma_1} \cup M_{\sigma_2}$.
However, since the ``spirit'' of game semantics is not to restrict Opponent's computational power at all, we choose not to incorporate the declaration of anti-strategies or their ``invisibility condition'' into games. 
Consequently, Player cannot see Opponent's ``declaration'' either. 

Therefore we may reformulate any wo-game in the form $\Sigma \mathcal{S}$, where $\mathcal{S}$ is a complete set of strategies.
Moreover, $\Sigma \mathcal{S}$ is a generalization of a wo-game if we do not require the completeness of $\mathcal{S}$; such $\mathcal{S}$ may not have all the strategies on $\bigcup \mathcal{S}$. Furthermore, since we take a disjoint union of sets of moves for $\Sigma \mathcal{S}$, \emph{it is trivially a well-defined game even if we drop the consistency of $\mathcal{S}$}; then $\Sigma \mathcal{S}$ can be thought of as a ``family of games'' as its strategies may have different underlying games, in which Player has an additional opportunity to declare a strategy that simultaneously specifies a component game to play (see Example~\ref{ExFamilyOfGames} below). 

This is the idea behind the notion of \emph{predicative games}: A predicative game, roughly, is a game of the form $\Sigma \mathcal{S}$, where $\mathcal{S}$ is a (not necessarily complete or consistent) set of strategies.

However, assuming there is a \emph{name} $|G|$ of each game $G$ and a strategy $\underline{G}$ given by $P_{\underline{G}} \stackrel{\mathrm{df. }}{=} \mathsf{pref}(\{ q . |G| \})$, we may form a set $\mathcal{P}$ of strategies by $\mathcal{P} \stackrel{\mathrm{df. }}{=} \{ \underline{G} \ \! | \ \! \text{$G$ is a game}, (|G|, \underline{G}) \not \in M_G \}$. Then the induced game $\mathscr{P} \stackrel{\mathrm{df. }}{=} \Sigma \mathcal{P}$ gives rise to a \emph{Russell-like paradox}: If $(|\mathscr{P}|, \underline{\mathscr{P}}) \in M_{\mathscr{P}}$, then $(|\mathscr{P}|, \underline{\mathscr{P}}) \not \in M_{\mathscr{P}}$, and vice versa.
Our solution is the \emph{ranks} of games in the next section.

\subsubsection{Predicative Games}
Now, let us proceed to define the central notion of \emph{predicative games}.

\begin{definition}[Ranked moves \cite{yamada2016game}]
A move of a game is \emph{\bfseries ranked} if it is a pair $(m, r)$ of some object $m$ and a natural number $r \in \mathbb{N}$, which is usually written $[m]_r$.
A ranked move $[m]_r$ is more specifically called an \emph{\bfseries $\bm{r}^{\textit{th}}$-rank move}, and $r$ is said to be the \emph{\bfseries rank} of the move.
\end{definition}

\if0
\begin{notation}
Often, we just write $m$ for $[m]_{r}$ when it is not necessary to exhibit the rank $r$. 
\end{notation}
\fi

\begin{notation}
For a sequence $[\bm{s}]_{\bm{r}} = [m_1]_{r_1} . [m_2]_{r_2} \dots [m_k]_{r_k}$ of ranked moves and an element $\square$, we define $[\bm{s}]_{\bm{r}}^\square \stackrel{\mathrm{df. }}{=} [m_1]_{r_1}^\square . [m_2]_{r_2}^\square \dots [m_k]_{r_k}^\square \stackrel{\mathrm{df. }}{=} [(m_1, \square)]_{r_1} . [(m_2, \square)]_{r_2} \dots [(m_k, \square)]_{r_k}$. 
\end{notation}

Our intention is as follows: A $0^{\text{th}}$-rank move is just a move of a game in the conventional sense, and an $(r+1)^{\text{st}}$-rank move is the \emph{name} of another game (whose moves are all ranked) such that the supremum of the ranks of its moves is $r$:
\begin{definition}[Ranked games \cite{yamada2016game}]
\label{RankedGames}
A \emph{\bfseries ranked game} is a game whose moves are all ranked.  
The \emph{\bfseries rank} $\mathcal{R}(G)$ of a ranked game $G$ is defined by $\mathcal{R}(G) \stackrel{\mathrm{df. }}{=}\mathsf{sup}(\{r \ \! | \! \ [m]_r \! \in M_G \}) + 1$ if $M_G \neq \emptyset$, and $\mathcal{R}(G) \stackrel{\mathrm{df. }}{=} 1$ otherwise.
$G$ is particularly called an \emph{\bfseries $\bm{\mathcal{R}(G)}^{\textit{th}}$-rank game}.
\end{definition}

\begin{definition}[Name of games \cite{yamada2016game}]
The \emph{\bfseries name} of a ranked game $G$, written $\mathcal{N}(G)$, is the pair $[G] _{\mathcal{R}(G)}$ of $G$ (as a set) itself and its rank $\mathcal{R}(G)$ if $\mathcal{R}(G) \in \mathbb{N}$, and undefined otherwise.
\end{definition}

\begin{remark}
As we shall see shortly, the rank of each predicative game is finite.
\end{remark}

The name of a ranked game can be a move of a ranked game, but that name cannot be a move of the game itself by its rank, which prevents the paradox described above.

\begin{definition}[Predicative games \cite{yamada2016game}]
\label{DefPredicativeGames}
For each integer $k \geqslant 1$, a \emph{\bfseries $\bm{k}$-predicative game} is a $k^{\text{th}}$-rank game $G$ equipped with a set $\mathsf{st}(G)$ of ranked strategies $\sigma$ with $M_\sigma \subseteq (\mathbb{N} \times \{ 0 \}) \cup \{ \mathcal{N}(H) \ \! | \ \! \textit{$H$ is an $l$-predicative game}, l < k \ \! \}$, that satisfies:
\begin{itemize}

\item $M_G = \Sigma_{\sigma \in \mathsf{st}(G)}M_\sigma \stackrel{\mathrm{df. }}{=} \{ [m]_r^\sigma \ \! | \ \! \sigma \in \mathsf{st}(G), [m]_r \in M_\sigma \}$; $\lambda_G : [m]_r^\sigma \mapsto \lambda_\sigma([m]_r)$

\item $\vdash_G \ = \{ (\star, [m]_r^\sigma) \ \! | \ \! \sigma \in \mathsf{st}(G), \star \vdash_\sigma [m]_r \} \cup \{ ([m]_r^\sigma, [m']_{r'}^\sigma) \ \! | \ \! \sigma \in \mathsf{st}(G), [m]_r \vdash_\sigma [m']_{r'} \}$

\item $P_G = \mathsf{pref} (\{ q_G . \mathcal{N}(\sigma) . [\bm{s}]_{\bm{r}}^\sigma \ \! | \! \ \sigma \in \mathsf{st}(G), [\bm{s}]_{\bm{r}} \in P_\sigma \})$, where $q_G \stackrel{\mathrm{df. }}{=} [0]_0$. 

\end{itemize}
A \emph{\bfseries predicative game} is a $k$-predicative game for some $k \geqslant 1$. A \emph{\bfseries strategy on a predicative game} $G$ is any element in $\mathsf{st}(G)$, and $\sigma : G$ denotes $\sigma \in \mathsf{st}(G)$. 
\end{definition}

\begin{notation}
We write $\mathcal{PG}_k$ (resp. $\mathcal{PG}_{\leqslant k}$) for the set of all $k$-predicative games (resp. $i$-predicative games with $1 \leqslant i \leqslant k$). Similar notations $\mathcal{ST}_k, \mathcal{ST}_{\leqslant k}$ are used for strategies.
\end{notation}

That is, predicative games $G$ are essentially the games $\Sigma \mathsf{st}(G)$ inductively defined along with their ranks except that the elements $g_G, \mathcal{N}(\sigma)$ are not included in $M_G$.

\begin{remark}
Strictly speaking, a predicative game $G$ is not a game because the elements $q_G, \mathcal{N}(\sigma)$ are not counted as moves of $G$, they do not have labels, and $\mathcal{N}(\sigma)$ occurs in a position without a justifier. We have defined $G$ as above, however, for an ``\emph{initial protocol}'' $q_G . \mathcal{N}(\sigma)$ between Judge and Player is not an actual play between Opponent and Player, and it should not appear in O-views.
Also, it prevents the name $\mathcal{N}(\sigma)$ of a strategy $\sigma : G$ from affecting the rank $\mathcal{R}(G)$.
Except these points, $G$ is a particular type of a ranked wo-game.
\end{remark}

Intuitively, a play in a predicative game $G$ proceeds as follows. At the beginning, Player has an opportunity to ``declare'' a strategy $\sigma : G$ to Judge, and then a play between the participants follows, where Player is forced to play by $\sigma$. 
The point is that $\sigma : G$ may range over strategies on different games in the conventional game semantics sense, and so Player may \emph{choose an underlying game} when she selects a strategy.
As a consequence, a predicative game may be a ``family of games'' (see Ex.~\ref{ExFamilyOfGames} below) and interpret type dependency.

In light of Theorem.~\ref{ThmGamesAsCollectionsOfStrategies}, we may generalize the subgame relation as follows:
\begin{definition}[Subgames of predicative games \cite{yamada2016game}]
A \emph{\bfseries subgame} of a predicative game $G$ is a predicative game $H$ that satisfies $\mathsf{st}(H) \subseteq \mathsf{st}(G)$. In this case, we write $H \trianglelefteq G$.
\end{definition}

\begin{example}
The \emph{\bfseries flat game} $\mathit{flat}(A)$ on a set $A$ is given by:
\begin{itemize} 

\item $M_{N} \stackrel{\mathrm{df. }}{=} \{ q \} \cup A$, $q \stackrel{\mathrm{df. }}{=} 0$

\item $\lambda_{N} : q \mapsto \mathsf{OQ}, (a \in A) \mapsto \mathsf{PA}$

\item $\vdash_{N} \stackrel{\mathrm{df. }}{=} \{ (\star, q) \} \cup \{ (q, a) \ \! | \! \ a \in A \ \! \}$

\item $P_{N} \stackrel{\mathrm{df. }}{=} \mathsf{pref} (\{ q a \ \! | \! \ a \in A\ \! \})$.

\end{itemize}

Then the \emph{\bfseries natural number game} $N$ is defined by $N  \stackrel{\mathrm{df. }}{=} \mathit{flat}(\mathbb{N})$, where $\mathbb{N}$ is the set of all natural numbers.
For each $n \in \mathbb{N}$, let $\underline{n}$ denote the strategy on $N$ such that $P_{\underline{n}} = \mathsf{pref} (\{ q n \})$. 
A maximal position of the corresponding 1-predicative game $N_1$ is of the form 
\begin{equation*}
q_{N_1} . \mathcal{N}(\underline{n}_0) . [q]_0^{\underline{n}_0} . [n]_0^{\underline{n}_0}
\end{equation*}
where $\underline{n}_0$ is obtained from $\underline{n}$ by changing each move $m$ to the $0^{\text{th}}$-rank move $[m]_0$. 
For readability, we usually abbreviate the play as 
\begin{equation*}
q_{N_1} . \mathcal{N}(\underline{n}) . q . n
\end{equation*}
which is essentially the play $q n$ in $N$. 
Below, we usually abbreviate $N_1$ as $N$.

Also, there are the \emph{\bfseries unit game} $\bm{1} \stackrel{\mathrm{df. }}{=} \mathit{flat}(\{ \checkmark \})$, the \emph{\bfseries terminal game} $I \stackrel{\mathrm{df. }}{=} (\emptyset, \emptyset, \emptyset, \{ \bm{\epsilon} \})$, and the \emph{\bfseries empty game} $\bm{0} \stackrel{\mathrm{df. }}{=} \mathit{flat}(\emptyset)$.
We then define the obvious strategies $\underline{\checkmark} : \bm{1}$, $\_ : I$ $\bot : \bm{0}$, where $\underline{\checkmark}$ and $\_$ are total, while $\bot$ is not.
Again, abusing notation, we write $\bm{1}, I, \bm{0}$ for the corresponding 1-predicative games $\bm{1}_1, I_1, \bm{0}_1$, respectively. 
\if0
Their typical plays may be depicted as follows:
\begin{center}
\begin{tabular}{cccccccccc}
$N$ &&& $\bm{1}$ &&&$I$ &&& $\bm{0}$ \\ \cline{1-1} \cline{4-4} \cline{7-7} \cline{10-10}
$q_{N}$&&&$q_{\bm{1}}$&&&$q_{I}$&&&$q_{\bm{0}}$ \\
$\mathcal{N}(\underline{n})$&&&$\mathcal{N}(\underline{\checkmark})$&&&$\mathcal{N}(\_)$&&&$\mathcal{N}(\bot)$ \\
$q$&&&$q$&&&&&&$q$ \\
$n$&&&$\checkmark$&&&&&&
\end{tabular}
\end{center}
\fi
\end{example}

\if0
One may wonder whether initial protocols for predicative games are appropriate. A possible criticism is: It seems rather restrictive and disadvantageous for Player to declare how she will play at the beginning of a play. However, since even a strategy that determines how to play \emph{as a play proceeds} can be represented by a single fixed strategy, it does not restrict Player's freedom at all. Also, since such an initial protocol is (assumed to be) ``invisible'' to Opponent, it should not be a disadvantage for Player. 
Another possible criticism is: The nature of game semantics is to \emph{gradually} reveal an interaction between Player and Opponent, but the protocol ignores it. However, because their initial protocols are not ``visible'' to each other, the nature of game semantics should not be lost.

\begin{remark}
Games are usually more \emph{primitive} than strategies.
However, our games and strategies have the opposite relation, which corresponds to the philosophy underlying \textsf{MLTT} that \emph{defines} a proposition as a set of its proofs \cite{martin1982constructive, martin1984intuitionistic}.
\end{remark}
\fi

\begin{definition}[Parallel union \cite{yamada2016game}]
For $k \geqslant 1$, $\mathcal{S} \subseteq \mathcal{PG}_{\leqslant k}$, the \emph{\bfseries parallel union} $\int \! \mathcal{S}$ is given by: 
\begin{itemize}

\item $\mathsf{st}(\int \! \mathcal{S}) \stackrel{\mathrm{df. }}{=} \bigcup_{G \in \mathcal{S}} \mathsf{st}(G)$; $M_{\int \! \mathcal{S}} \! \stackrel{\mathrm{df. }}{=} \bigcup_{G \in \mathcal{S}} M_G$; $\lambda_{\int \! \mathcal{S}} : ([m]_r \in M_G) \mapsto \lambda_G([m]_r)$

\item $\vdash_{\int \! \mathcal{S}} \stackrel{\mathrm{df. }}{=} \{ (\star, [m]_r) \ \! | \ \! \exists G \in \mathcal{S}, \star \vdash_G [m]_r \} \cup \{ ([m]_r, [m']_{r'}) \ \! | \ \! \exists G \in \mathcal{S}, [m]_r \vdash_G [m']_{r'} \}$

\item $P_{\int \! \mathcal{S}} \! \stackrel{\mathrm{df. }}{=} \! \mathsf{pref} (\{ q_{\int \! \mathcal{S}} . \mathcal{N}(\sigma) . \bm{s} \ \! | \! \ \exists G \in \mathcal{S} . \ \! q_G . \mathcal{N}(\sigma) . \bm{s} \in P_G \})$, where $q_{\int \! \mathcal{S}} \stackrel{\mathrm{df. }}{=} [0]_0$.

\end{itemize}
\end{definition}

\begin{definition}[Predicative union \cite{yamada2016game}]
For $k \geqslant 1$, $\mathcal{S} \subseteq \mathcal{ST}_{\leqslant k}$, the \emph{\bfseries predicative union} $\oint \! \mathcal{S}$ is given by:
\begin{itemize}

\item $\mathsf{st}(\oint \! \mathcal{S}) \stackrel{\mathrm{df. }}{=} \mathcal{S}$; $M_{\oint \! \mathcal{S}} \! \stackrel{\mathrm{df. }}{=} \Sigma_{\sigma \in \mathcal{S}} M_\sigma$; $\lambda_{\oint \! \mathcal{S}} : [m]_r^\sigma \mapsto \lambda_\sigma([m]_r)$

\item $\vdash_{\oint \! \mathcal{S}} \stackrel{\mathrm{df. }}{=} \{ (\star, [m]_r^\sigma) \ \! | \ \! \sigma \in \mathcal{S}, \star \vdash_\sigma [m]_r \} \cup \{ ([m]_r^\sigma, [m']_{r'}^\sigma) \ \! | \ \! \sigma \in \mathcal{S}, [m]_r \vdash_\sigma [m']_{r'} \}$

\item $P_{\oint \! \mathcal{S}} \stackrel{\mathrm{df. }}{=} \mathsf{pref} (\{ q_{\oint \! \mathcal{S}} .  \mathcal{N}(\sigma) . [\bm{s}]_{\bm{r}}^\sigma \ \! | \! \ \sigma \in \mathcal{S} . \ \! [\bm{s}]_{\bm{r}} \in P_{\sigma} \})$, where $q_{\oint \! \mathcal{S}} \stackrel{\mathrm{df. }}{=} [0]_0$.

\end{itemize}
\end{definition}

Clearly, parallel and predicative unions are well-defined predicative games.

\begin{example}
\label{ExFamilyOfGames}
In the 1-predicative game $\oint \{ \underline{100}, \underline{\checkmark}, \_, \bot \}$, a play proceeds as either of:
\begin{center}
\begin{tabular}{cccccccccc}
$\oint \{ \underline{100}, \underline{\checkmark}, \_, \bot \}$ &&& $\oint \{ \underline{100}, \underline{\checkmark}, \_, \bot \}$ &&& $\oint \{ \underline{100}, \underline{\checkmark}, \_, \bot \}$ &&& $\oint \{ \underline{100}, \underline{\checkmark}, \_, \bot \}$ \\ \cline{1-1} \cline{4-4} \cline{7-7} \cline{10-10}
$q_{\oint \{ \underline{100}, \underline{\checkmark}, \_, \bot \}}$&&&$q_{\oint \{ \underline{100}, \underline{\checkmark}, \_, \bot \}}$&&&$q_{\oint \{ \underline{100}, \underline{\checkmark}, \_, \bot \}}$&&&$q_{\oint \{ \underline{100}, \underline{\checkmark}, \_, \bot \}}$ \\
$\mathcal{N}(\underline{100})$&&&$\mathcal{N}(\underline{\checkmark})$&&&$\mathcal{N}(\_)$&&&$\mathcal{N}(\bot)$ \\
$q$&&&$q$&&&&&&$q$ \\
$100$&&&$\checkmark$&&&&&&
\end{tabular}
\end{center}
This game illustrates the point that a predicative game can be a ``family of games''.
\end{example}

Let us now define a particular kind of predicative games to interpret universes:
\begin{definition}[Universe games \cite{yamada2016game}]
For each $k \in \mathbb{N}$, we define the \emph{\bfseries $\bm{k}^\textit{th}$-universe game} $\mathcal{U}_k$ by $\mathcal{U}_k \stackrel{\mathrm{df. }}{=} \oint \{ \underline{G} \ \! | \ \! G \in \mathcal{PG}_{\leqslant k+1} \}$, where $\underline{G} \stackrel{\mathrm{df. }}{=} \mathit{flat}(\{ \mathcal{N}(G) \})$.
A \emph{\bfseries universe game} is the $k^\text{th}$-universe game $\mathcal{U}_k$ for some $k \in \mathbb{N}$, and we often write it by $\mathcal{U}$ when $k \in \mathbb{N}$ is not very important.
\end{definition}

\begin{notation}
Given total $\mu : \mathcal{U}$, we write $\mathit{El}(\mu)$ for the predicative game such that $\underline{\mathit{El}(\mu)} = \mu$
\end{notation}

As $G \in \mathcal{PG}_{\leqslant k+1} \Leftrightarrow \underline{G} : \mathcal{U}_k$, $\mathcal{U}_k$ is a ``universe'' of all $i$-predicative games with $1 \leqslant i \leqslant k+1$.
Also, we obtain a cumulative hierarchy: $\underline{\mathcal{U}_i} : \mathcal{U}_j$ if $i < j$.
An ``actual play'' in $\mathcal{U}$ starts with the question $q$ \emph{``What is your game?''}, followed by an answer $\mathcal{N}(G)$, meaning \emph{``It is $G$!''}.


\subsubsection{The Category of Well-founded Predicative Games}
This section generalizes the existing constructions on games  (in Section~\ref{McGamesAndStrategies}) so that they preserve \emph{predicativity} of games, based on which we define the category $\mathcal{WPG}$ of wf-predicative games.

However, there is a technical problem for linear implication:
The interpretation of a $\Pi$-type $\mathsf{\Pi_{a : A}B(a)}$ must be a generalization of the implication $A \to B = \ !A \multimap B$ of games, where $B$ may depend on a strategy on $A$ which Opponent chooses. 
Naively, it seems that we may interpret it by the subgame of $A \to \int \! \{ B(\sigma) \ \! | \ \! \sigma : A \ \! \}$ whose strategies $f$ satisfy $f \circ \sigma_0^\dagger : B(\sigma_0)$ for all $\sigma_0 : A$. Then the ``initial protocol'' bocomes $q_B . q_A . \mathcal{N}(\sigma_0) . \mathcal{N}(f \circ \sigma_0^\dagger)$, and a play in the subgame $\sigma_0 \to f \circ \sigma_0^\dagger \trianglelefteq \sigma_0 \to B(\sigma_0)$ follows. 
This nicely captures the phenomenon of $\Pi$-types, but imposes another challenge: The implication $A \to \int \! \{ B(\sigma) \ \! | \ \! \sigma : A \}$ no longer has a protocol since the second move $q_A$ is not the name of the strategy to follow. 

Our solution, which is one of the main achievements of the paper, is the following:
\begin{definition}[Products of PLIs \cite{yamada2016game}]
A \emph{\bfseries product of point-wise linear implications (PLIs)} between predicative games $A, B$ is a strategy of the form $\phi = \&_{\sigma : A} \phi_\sigma$, where $(\phi_\sigma)_{\sigma : A}$ is a family of strategies $\phi_\sigma : \sigma \multimap \pi_\phi (\sigma)$ with $\pi_\phi \in \mathsf{st}(B)^{\mathsf{st}(A)}$ that is ``\emph{uniform}'': $\bm{s}mn \in P_{\phi_{\sigma}} \Leftrightarrow \bm{s}mn \in P_{\phi_{\sigma'}}$ for all $\sigma, \sigma' : A, \bm{s}m \in P_{\phi_\sigma}^{\mathsf{odd}} \cap P_{\phi_{\sigma'}}^{\mathsf{odd}}, smn \in P_{\phi_\sigma} \cup P_{\phi_{\sigma'}}$, which is defined by:
\begin{itemize}

\item $M_{\&_{\sigma : A} \phi_\sigma} \stackrel{\mathrm{df. }}{=} \{ [m]_r^{\sigma} \ \! | \ \! \sigma : A, [m]_r \in M_{\phi_\sigma} \cap M_{\sigma} \} \cup \{ [m]_{r}^{\pi_\phi(\sigma)} \ \! | \ \! \sigma : A, [m]_{r} \in M_{\phi_\sigma} \cap M_{\pi_\phi(\sigma)} \}$ 

\item $\lambda_{\&_{\sigma : A} \phi_\sigma} : [m]_r^{\sigma} \mapsto \overline{\lambda_{\sigma}}([m]_r),  [m]_{r}^{\pi_\phi(\sigma)} \mapsto \lambda_{\pi_\phi(\sigma)}([m]_{r})$

\item $\vdash_{\&_{\sigma : A} \phi_\sigma} \stackrel{\mathrm{df. }}{=} \{ (\star, [m]_{r}^{\pi_\phi(\sigma)}) \ \! | \ \! \sigma : A, \star \vdash_{\pi_\phi(\sigma)} \! [m]_{r} \} \cup \{ ([m]_{r}^{\pi_\phi(\sigma)}, [n]_{l}^{\pi_\phi(\sigma)}) \ \! | \ \! \sigma : A, [m]_r \vdash_{\pi_\phi(\sigma)} [n]_{l} \} \cup \{ ([m]_r^{\sigma}, [n]_{l}^{\sigma}) \ \! | \ \! \sigma : A, [m]_r \vdash_{\sigma} [n]_{l} \} \cup \{ ([m]_{r}^{\pi_\phi(\sigma)}, [n]_{l}^{\sigma}) \ \! | \ \! \sigma : A, \star \vdash_{\pi_\phi(\sigma)} [m]_{r}, \star \vdash_{\sigma} [n]_{l} \} $ 

\item $P_{\&_{\sigma : A} \phi_\sigma} \stackrel{\mathrm{df. }}{=} \bigcup_{\sigma : A} \{ [m_1]_{r_1}^{\phi_\sigma^{(1)}} [m_2]_{r_2}^{\phi_\sigma^{(2)}} \dots [m_k]_{r_k}^{\phi_\sigma^{(k)}} \ \! | \ \! [m_1]_{r_1} [m_2]_{r_2} \dots [m_k]_{r_k} \in P_{\phi_\sigma} \}$, where $\phi_\sigma^{(i)} \stackrel{\mathrm{df. }}{=} \sigma$ if $[m_i]_{r_i} \in M_\sigma$, and $\phi_\sigma^{(i)} \stackrel{\mathrm{df. }}{=} \pi_\phi(\sigma)$ otherwise, for $i = 1, 2, \dots, k$.

\end{itemize}

\end{definition}

\begin{notation}
The set of all products of PLIs from $A$ to $B$ is written $\mathsf{PLI}(A, B)$.
\end{notation}

Clearly, products $\phi \in \mathsf{PLI}(A, B)$ of PLIs are well-defined strategies.
They are strategies on the linear implication $A \multimap B$ defined in Definition~\ref{DefConstructionsOnPredicativeGames} below. The basic idea is as follows. When Opponent makes an initial move in $A \multimap B$, he needs to determine a strategy $\sigma$ on $A$, which together with Player's declared strategy $\phi_0 : A \multimap B$ in turn determines her strategy $\pi_{\phi_0}(\sigma) $ on $B$. 
Note that $\phi_0$ has to be \emph{uniform} because she should not be able to see Opponent's choice $\sigma$.
In fact, by the uniformity, $\phi_0$ is a natural generalization of strategies on linear implication in the conventional game semantics: If $A, B$ are complete, then there is a bijection $\phi \in \mathsf{PLI}(A, B) \stackrel{\sim}{\mapsto} \bigcup \{ \phi_\sigma \ \! | \ \! \sigma : A \ \! \} : \bigcup \mathsf{st}(A) \multimap \bigcup \mathsf{st}(B)$, where note that $\bigcup \mathsf{st}(A) \multimap \bigcup \mathsf{st}(B)$ is the MC-game corresponding to $A \multimap B$ in Definition~\ref{DefConstructionsOnPredicativeGames} below (see \cite{yamada2016game} for the details).

In this manner, the new linear implication overcomes the problem mentioned above:
\begin{definition}[Constructions on predicative games \cite{yamada2016game}]
\label{DefConstructionsOnPredicativeGames}
Given a family $(G_i)_{i \in I}$ of predicative games, where $I$ is $\{ 1 \}$ or $\{ 1, 2 \}$, we define $G_1 \multimap G_2 \stackrel{\mathrm{df. }}{=} \oint \mathsf{PLI}(G_1, G_2)$ and $\clubsuit_{i \in I}G_i \stackrel{\mathrm{df. }}{=} \oint \{ \clubsuit_{i \in I} \sigma_i \ \! | \! \ \forall i \in I . \ \! \sigma_i : G_i \}$ if $\clubsuit$ is product $\&$, tensor $\otimes$, exponential $!$ or composition $\circ$.
\end{definition}


\begin{theorem}[Well-defined constructions \cite{yamada2016game}]
\label{WellDefinedConstructionsOnPredicativeGames}
Predicative games are closed under all the constructions in Definition~\ref{DefConstructionsOnPredicativeGames} except that tensor and exponential do not preserve well-openness.
\end{theorem}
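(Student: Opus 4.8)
The plan is to reduce the statement, construction by construction, to two routine checks, exploiting the remark recorded immediately after the definition of predicative union: $\oint \mathcal{S}$ is a well-defined predicative game as soon as $\mathcal{S}$ is a \emph{set} of strategies with $\mathcal{S} \subseteq \mathcal{ST}_{\leqslant k}$ for some $k \geqslant 1$. It therefore suffices to prove, for $\mathcal{S} = \mathsf{PLI}(G_1, G_2)$ and for $\mathcal{S} = \{\, \clubsuit_{i \in I} \sigma_i \mid \forall i \in I.\ \sigma_i : G_i \,\}$ with $\clubsuit \in \{ \&, \otimes, !, \circ \}$, that: (a) every element of $\mathcal{S}$ is a deterministic game, and a well-opened one except in the two flagged cases, i.e.\ a strategy in the sense of Definition~\ref{DefStrategiesAsWellOpenedDeterministicGames}; and (b) $\mathcal{S}$ is a set and there is a uniform rank bound $k$ with $\mathcal{S} \subseteq \mathcal{ST}_{\leqslant k}$. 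I would dispatch (a) for $\&$, $\multimap$ and $\circ$ by appeal to what is already in hand: Lemmas~\ref{LemWellDefinedComposition} and~\ref{LemWellDefinedPairingTensorPromotion} supply well-definedness, well-openness and determinism of the McCusker-level product and composition, and the excerpt records that products of PLIs are well-defined strategies; for $\otimes$ and $!$ the McCusker-level constructions are still deterministic games, which is all (a) asks there. The set-hood half of (b) is immediate, since in each case $\mathcal{S}$ is the image, under the operation defining the construction, of a set: $\mathsf{st}(G_1)$, or $\prod_{i \in I} \mathsf{st}(G_i)$, or, for $\mathsf{PLI}(G_1, G_2)$, a subset of the set of all uniform families $(\phi_\sigma)_{\sigma : G_1}$ of strategies $\phi_\sigma : \sigma \multimap \pi(\sigma)$ with $\pi \in \mathsf{st}(G_2)^{\mathsf{st}(G_1)}$.

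The crux, and the part I would spend the most care on, is the rank bound in (b). Assume $G_1, G_2 \in \mathcal{PG}_{\leqslant k}$, so that every $\sigma : G_i$ has all of its moves among the $0^{\text{th}}$-rank moves over $\mathbb{N}$ and the names $\mathcal{N}(H)$ of predicative games $H$ with $\mathcal{R}(H) < k$ (Definition~\ref{DefPredicativeGames}). Reading off the explicit move-sets, labelings and enabling relations of the products of PLIs and of the McCusker-level $\&, \otimes, !, \circ$ (the last being Definition~\ref{DefCompositionOfGames}), one checks that every move of $\clubsuit_{i \in I} \sigma_i$, resp.\ of $\phi = \&_{\sigma : G_1} \phi_\sigma$, arises from a move of some $\sigma_i$, of $\sigma : G_1$, or of $\pi_\phi(\sigma) : G_2$ by at most: disjoint-union tagging, the superscript tagging of the form $[m]_r^{\sigma}$, restriction along $\upharpoonright\! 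A, C$, and the polarity-flipping used in $\multimap$. None of these alters the rank of a move or the predicative game named by a positive-rank move, so, up to this bookkeeping, each such move is again a $0^{\text{th}}$-rank $\mathbb{N}$-move or the name of a predicative game of rank $< k$, whence the constituent lies in $\mathcal{ST}_{\leqslant k}$. This is exactly the step in which the predicativity invariant — and hence the avoidance of the Russell-style paradox flagged just before Definition~\ref{DefPredicativeGames} — is genuinely used. Given (a) and (b), $\oint \mathcal{S} \in \mathcal{PG}_{\leqslant k}$, and unfolding $\oint$ together with Definition~\ref{RankedGames} confirms that its rank, its initial-protocol pseudo-moves $q_{\oint \mathcal{S}}$ and $\mathcal{N}(\sigma)$, its labeling and its enabling relation are precisely as Definition~\ref{DefPredicativeGames} prescribes; in particular $\mathsf{st}(G_1 \multimap G_2) = \mathsf{PLI}(G_1, G_2)$ and $\mathsf{st}(\clubsuit_{i \in I} G_i) = \{\, \clubsuit_{i \in I} \sigma_i \mid \forall i \in I.\ \sigma_i : G_i \,\}$, as intended.

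For $\clubsuit \in \{ \otimes, ! \}$ the whole argument still applies, save the well-openness half of (a): the McCusker-level tensor $\sigma_1 \otimes \sigma_2$ and exponential $!\sigma$ are deterministic but not well-opened — e.g.\ in $\sigma_1 \otimes \sigma_2$ with both factors non-trivial a position may open and answer a thread in $\sigma_1$ and then, after this non-empty prefix, play a fresh initial move of $\sigma_2$, and $!\sigma$ likewise allows opening a new thread after a non-empty prefix — so well-openness is not merely not preserved but generically fails, which is the asserted caveat; accordingly $\oint$ of such constituents meets every clause of Definition~\ref{DefPredicativeGames} except the stipulation that a predicative game be a \emph{well-opened} ranked game. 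The step I expect to be the main obstacle is the rank bound for $\multimap$ and $\circ$: unlike $\&, \otimes, !$, which only tag and take disjoint unions, composition (Definition~\ref{DefCompositionOfGames}) deletes the synchronization moves in the intermediate game and re-routes pointers through them, while $\mathsf{PLI}(G_1, G_2)$ forms a single product over \emph{all} strategies on $G_1$ together with a choice function $\pi_\phi \in \mathsf{st}(G_2)^{\mathsf{st}(G_1)}$; one must verify carefully that neither operation can manufacture a move of rank $\geqslant k$, or one failing to be a tagged copy of the name of a predicative game of rank $< k$, i.e.\ that the predicativity invariant is self-propagating rather than merely preserved on the nose. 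Everything else is a direct unwinding of the definitions and of the McCusker-level facts already cited.
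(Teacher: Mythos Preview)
Your approach matches the paper's: reduce each construction to the well-definedness of the predicative union $\oint \mathcal{S}$ and check that the constituents are ranked strategies of bounded rank. The paper's proof is a two-line sketch that omits the rank-bound bookkeeping you emphasise but singles out one point you do not mention --- that uniformity of products of PLIs is preserved under composition --- though this is needed less for the bare closure statement than for ensuring that composites of linear implications land in the expected $\mathsf{PLI}(A,C)$.
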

\begin{proof}
Since constructions on predicative games are defined in terms of the corresponding ones on strategies, and predicative unions are well-defined, the theorem immediately follows, where uniformity of strategies on linear implication is clearly preserved under composition.
\end{proof}

\begin{definition}[Copy-casts and derelictions \cite{yamada2016game}]
The \emph{\bfseries copy-cat} $\mathit{cp}_G : G \multimap G$ (resp. \emph{\bfseries dereliction} $\mathit{der}_G : \ !G \multimap G$) on a predicative game $G$ is the product $\&_{\sigma : G} \mathit{cp}_\sigma$ (resp. $\&_{\tau : !G} \mathit{der}_{\tau^\star}$) of PLIs, where $\mathit{der}_{\tau^\star} : \tau \multimap \tau^\star$ and $\tau^\star : G$ is obtained from $\tau : \ !G$ by deleting positions with more than one initial move.
\end{definition}

It is not hard to see that if predicative games $G_i$ correspond to MC-games, i.e., the sets $\mathsf{st}(G_i)$ are complete, then the constructions defined above correspond to the usual constructions on MC-games \cite{abramsky1999game,mccusker1998games} given in Seciton~\ref{McGamesAndStrategies} (see \cite{yamada2016game} for the proof). 

\begin{example}
Consider the strategies $\underline{10} : N$, $\mathit{succ}, \mathit{double} : N \multimap N$ whose plays are:
\begin{center}
\begin{tabular}{ccccccccccc}
$N$ && $N$ & $\stackrel{\mathit{succ}}{\multimap}$ & $N$ && $N$ & $\stackrel{\mathit{double}}{\multimap}$ & $N$ \\ \cline{1-1} \cline{3-5} \cline{7-9} 
$q_N$ && &$q_{N \multimap N}$& && &$q_{N \multimap N}$& \\
$\mathcal{N}(\underline{10})$ && &$\mathcal{N}(\mathit{succ})$& && &$\mathcal{N}(\mathit{double})$& \\
$[q]_0^{\underline{10}}$ && &&$[q]_0^{\underline{n+1}, \mathit{succ}}$& && &$[q]_0^{\underline{2m}, \mathit{double}}$ \\
$[0]_0^{\underline{10}}$ && $[q]_0^{\underline{n}, \mathit{succ}}$&& && $[q]_0^{\underline{m}, \mathit{double}}$&& \\
&& $[n]_0^{\underline{n}, \mathit{succ}}$&& && $[m]_0^{\underline{m}, \mathit{double}}$&& \\
&&&&$[n+1]_0^{\underline{n+1}, \mathit{succ}}$&&&&$[2m]_0^{\underline{2m}, \mathit{double}}$
\end{tabular}
\end{center}
The tensor product $\underline{0} \otimes \underline{1} : N \otimes N$ and the composition $\mathit{succ} ; \mathit{double} : N \multimap N$ play as follows:
\begin{center}
\begin{tabular}{ccccccc}
$N$ & $\otimes$ & $N$ && $N$ & $\stackrel{\mathit{succ} ; \mathit{double}}{\multimap}$ & $N$ \\ \cline{1-3} \cline{5-7}
&$q_{N \otimes N}$& && &$q_{N \multimap N}$& \\
&$\mathcal{N}(\underline{0} \otimes \underline{1})$& && &$\mathcal{N}(\mathit{succ} ; \mathit{double})$& \\
$[q]_0^{\underline{0} \otimes \underline{1}}$&&&&&&$[q]_0^{\underline{2(n+1)}, \mathit{succ}; \mathit{double}}$ \\
$[0]_0^{\underline{0} \otimes \underline{1}}$&&&&$[q]_0^{\underline{n}, \mathit{succ}; \mathit{double}}$&& \\
&&$[q]_0^{\underline{0} \otimes \underline{1}}$&&$[n]_0^{\underline{n}, \mathit{succ} ; \mathit{double}}$&& \\
&&$[1]_0^{\underline{0} \otimes \underline{1}}$&&&&$[2(n+1)]_0^{\underline{2(n+1)}, \mathit{succ}; \mathit{double}}$
\end{tabular}
\end{center}
\end{example}

\begin{definition}[The category $\mathcal{WPG}$ \cite{yamada2016game}]
\label{DefCategoryWPG}
The category $\mathcal{WPG}$ is defined as follows:
\begin{itemize}

\item Objects are wf-predicative games

\item Morphisms $A \to B$ are innocent, wb, total and noetherian strategies on $A \to B \stackrel{\mathrm{df. }}{=} \ \! ! A \multimap B$

\item The composition of morphisms $\phi : A \to B$, $\psi : B \to C$ is $\psi \bullet \phi \stackrel{\mathrm{df. }}{=} \psi \circ \phi^\dagger : A \to C$

\item The identity $\mathit{id}_A$ on each object $A$ is the dereliction $\mathit{der}_A : A \to A$.

\end{itemize}
\end{definition}

\begin{remark}
Anti-strategies do not have to be innocent, wb, total or noetherian (we have not formulated these notions, but it should be clear what it means). Thus, in particular, for any morphism $\phi = \&_{\sigma : !A}\phi_\sigma : A \to B$ in $\mathcal{WPG}$, $\sigma$ ranges over \emph{any} strategies on $!A$.
\end{remark}

\begin{corollary}[Well-defined $\mathcal{WPG}$ \cite{yamada2016game}]
The structure $\mathcal{WPG}$ forms a well-defined category.
\end{corollary}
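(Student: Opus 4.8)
The plan is to verify the defining data of Definition~\ref{DefCategoryWPG} — well-definedness of the composite $\psi \bullet \phi$, membership of $\mathit{der}_A$ in the morphism class, and the unit and associativity laws — by reducing each point to a result already on record, so that the only genuine work is bookkeeping about the ``initial protocols'' $q_G.\mathcal{N}(-)$ of predicative games.

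\emph{Composition is well-defined.} Given morphisms $\phi : A \to B$ and $\psi : B \to C$, i.e.\ innocent, wb, total and noetherian strategies on $!A \multimap B$ and $!B \multimap C$, Lemma~\ref{LemWellDefinedPairingTensorPromotion} yields that the promotion $\phi^\dagger$ is a strategy on $!A \multimap\, !B$ with the same four properties, and then Lemma~\ref{LemWellDefinedComposition} yields that $\psi \bullet \phi = \psi \circ \phi^\dagger$ is such a strategy on $!A \multimap C = A \to C$; since $A \to C$ is a predicative game by Theorem~\ref{WellDefinedConstructionsOnPredicativeGames}, $\psi \bullet \phi$ is a bona fide morphism $A \to C$. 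As in the proof of that theorem, one also checks that the uniformity condition on the products of PLIs involved is preserved by $(-)^\dagger$ and by $\circ$. \emph{Identities are morphisms.} For a wf-predicative game $A$, Lemma~\ref{LemWellDefinedDerelictions}, applied through the presentation $\mathit{der}_A = \&_{\tau : !A}\mathit{der}_{\tau^\star}$ as a product of PLIs, shows $\mathit{der}_A : !A \multimap A = A \to A$ is innocent, wb, total, and — using well-foundedness of the object $A$ — noetherian, so $\mathit{der}_A$ is a morphism.

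\emph{Category laws.} These are precisely the co-Kleisli laws for the comonad $!$ on predicative games and linear strategies. By Definition~\ref{DefConstructionsOnPredicativeGames} every construction used here reduces to its strategy-level counterpart, and on complete predicative games these coincide with the classical MC-game constructions; hence it suffices to quote the standard identities of \cite{abramsky1999game,mccusker1998games}: (i) $\circ$ is associative; (ii) $\mathit{der}_B \circ \phi^\dagger = \phi$ for every linear $\phi : !A \multimap B$; (iii) $\mathit{der}_A^\dagger = \mathit{cp}_{!A}$, the linear identity on $!A$; and (iv) $(\psi \circ \phi^\dagger)^\dagger = \psi^\dagger \circ \phi^\dagger$ for composable linear $\phi,\psi$. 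From these one computes $\mathit{der}_B \bullet \phi = \mathit{der}_B \circ \phi^\dagger = \phi$ by (ii), $\phi \bullet \mathit{der}_A = \phi \circ \mathit{der}_A^\dagger = \phi \circ \mathit{cp}_{!A} = \phi$ by (iii), and $(\chi \bullet \psi) \bullet \phi = \chi \circ (\psi \circ \phi^\dagger)^\dagger = \chi \circ \psi^\dagger \circ \phi^\dagger = \chi \bullet (\psi \bullet \phi)$ by (iv) and (i), which are exactly the required equations.

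I expect the main obstacle to be clause (iv), and more generally confirming that the classical identities (i)--(iv) really do survive the passage to predicative games: one must check that the auxiliary moves $q_G, \mathcal{N}(-)$ and — for products of PLIs — the uniformity requirement never interfere with the combinatorial arguments underpinning promotion and linear composition. Since each construction here is defined pointwise from its strategy-level version and agrees with the MC-game version on complete games, this is a tedious but conceptually empty transfer, and no new phenomenon arises.
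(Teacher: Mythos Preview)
Your proposal is correct and follows essentially the same approach as the paper: the paper's proof is the single line ``By Lemmata~\ref{LemWellDefinedDerelictions}, \ref{LemWellDefinedComposition}, \ref{LemWellDefinedPairingTensorPromotion} and Theorem~\ref{WellDefinedConstructionsOnPredicativeGames}'', and you have simply unpacked what those citations entail, including the co-Kleisli identities needed for the unit and associativity laws that the paper leaves implicit.
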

\begin{proof}
By Lemmata~\ref{LemWellDefinedDerelictions}, \ref{LemWellDefinedComposition}, \ref{LemWellDefinedPairingTensorPromotion} and Theorem~\ref{WellDefinedConstructionsOnPredicativeGames}.
\end{proof}

\subsection{Games with Equalities}
\label{GamesWithEqualities}
We have reviewed all the preliminary concepts, and the main content of the paper starts from the present section.
From now on, let \emph{\bfseries games} and \emph{\bfseries strategies} refer to predicative games and strategies on them by default.

\begin{notation}
We write $A \Rightarrow B$ for $!A \multimap B$, as well as $A \stackrel{\sim}{\multimap} B$ and $A \stackrel{\sim}{\Rightarrow} B$ for their respective subgames whose strategies are \emph{invertible}, and $\psi \bullet \phi$ for the composition $\psi \circ \phi^\dagger : A \Rightarrow C$ of strategies $\phi : A \Rightarrow B$, $\psi : B \Rightarrow C$.
We often present a strategy $\sigma$ by just specifying the set $P_\sigma$ of its positions whenever the other components are unambiguous, and write $\bm{s} \in \sigma$ for $\bm{s} \in P_\sigma$.
\end{notation}

Let us begin with a key observation (which is applied not only to predicative games but also to any conventional games):
\begin{theorem}[Isom theorem]
\label{ThmIsomThm}
There is an invertible strategy $\phi : A \stackrel{\sim}{\multimap} B$ or $\psi : A \stackrel{\sim}{\Rightarrow} B$ (with respect to copy-cats or derelictions) if and only if there is a bijection $f : P_A \stackrel{\sim}{\to} P_B$ such that $f(\bm{\epsilon}) = \bm{\epsilon}$ and $f(\bm{s}m) = \bm{t}n \Rightarrow f(\bm{s}) = \bm{t}$ for all $\bm{s}m \in P_A$, $\bm{t}n \in P_B$.
\end{theorem}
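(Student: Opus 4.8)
The plan is to prove both biconditionals, giving the $\multimap$ case (copy-cats) in detail and noting that the $\Rightarrow$ case (derelictions) runs along identical lines, replacing $A \multimap B$ by $!A \multimap B$ and tracking the exponential structure thread-wise. For the \emph{``if''} direction: given $f : P_A \stackrel{\sim}{\to} P_B$ with $f(\bm{\epsilon}) = \bm{\epsilon}$ and $f(\bm{s}m) = \bm{t}n \Rightarrow f(\bm{s}) = \bm{t}$, I would first upgrade $f$ to a \emph{forest isomorphism}: an easy induction on $|\bm{s}|$, using injectivity and $f(\bm{\epsilon}) = \bm{\epsilon}$ to see $f(\bm{s}m) \neq \bm{\epsilon}$ and then the prefix hypothesis, shows $|f(\bm{s})| = |\bm{s}|$; hence $f^{-1}$ is length-preserving too, and reading the hypothesis backwards yields $f^{-1}(\bm{t}n) = \bm{s}m \Rightarrow f^{-1}(\bm{t}) = \bm{s}$, so $f$ and $f^{-1}$ are mutually inverse length- and prefix-preserving bijections, restricting to bijections on even- and on odd-length positions. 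I would then define the \emph{copy-cat along $f$} as the substructure $\mathit{cp}_f \trianglelefteq A \multimap B$ with
\[
P_{\mathit{cp}_f} \stackrel{\mathrm{df. }}{=} \{\, \bm{s} \in P_{A \multimap B} \mid \forall \bm{t} \preceq \bm{s} .\ \mathsf{even}(\bm{t}) \Rightarrow f(\bm{t} \upharpoonright A) = \bm{t} \upharpoonright B \,\},
\]
which specialises to $\mathsf{pr}_A = P_{\mathit{cp}_A}$ when $A = B$ and $f = \mathit{id}$. Checking that $\mathit{cp}_f$ is a well-opened deterministic strategy on $A \multimap B$ (in particular O-inclusive) is a routine variant of the corresponding facts for $\mathit{cp}_A$: the only new ingredient is that at an even position Opponent's next move lands in exactly one component and the forest-isomorphism property of $f$ (resp.\ $f^{-1}$) supplies Player with a unique legal copy in the other. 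Finally the identities $\mathit{cp}_{f^{-1}} \circ \mathit{cp}_f = \mathit{cp}_A$ and $\mathit{cp}_f \circ \mathit{cp}_{f^{-1}} = \mathit{cp}_B$ follow from the usual ``copy-cat composes with copy-cat'' computation on the interaction sequences of Definition~\ref{DefCompositionOfGames}, now bookkept through $f$, witnessing $\mathit{cp}_f : A \stackrel{\sim}{\multimap} B$.

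\textbf{The converse.} Let $\phi : A \multimap B$ be invertible with inverse $\phi^{-1} : B \multimap A$, so $\phi^{-1} \circ \phi = \mathit{cp}_A$ and $\phi \circ \phi^{-1} = \mathit{cp}_B$ (and likewise with derelictions in the $\Rightarrow$ case). The crux is a \emph{rigidity lemma}: every $\bm{s} \in P_\phi$ is copy-cat-like, i.e.\ its $A$- and $B$-components interleave in the unique pattern permitted by the labelling of $A \multimap B$, with $|\bm{s} \upharpoonright A| = |\bm{s} \upharpoonright B|$ whenever $\mathsf{even}(\bm{s})$. I would prove this by contradiction on the length of a minimal offending position: any departure of $\phi$ from copy-cat-like behaviour forces, via the interaction sequences in $P_\phi \ddagger P_{\phi^{-1}}$, a position of $\phi^{-1} \circ \phi$ lying outside $\mathsf{pr}_A = P_{\mathit{cp}_A}$, the point being that determinism of $\phi$ and $\phi^{-1}$ together with O-inclusivity prevent the deviation from being ``healed'' downstream. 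Granting rigidity, define $f : P_A \to P_B$ by sending $\bm{s} \in P_A$ to the $B$-component of the even-length position of $\phi$ whose $A$-component is $\bm{s}$ — unique by determinism, and existing for every $\bm{s}$ because $\phi^{-1} \circ \phi = \mathit{cp}_A$, whose positions are $\mathsf{pr}_A$, forces every $A$-position to occur as such a component. The identities $\phi^{-1} \circ \phi = \mathit{cp}_A$ and $\phi \circ \phi^{-1} = \mathit{cp}_B$ say exactly that the map $g : P_B \to P_A$ defined the same way from $\phi^{-1}$ is a two-sided inverse of $f$; hence $f$ is a bijection, $f(\bm{\epsilon}) = \bm{\epsilon}$ is immediate, and $f(\bm{s}m) = \bm{t}n \Rightarrow f(\bm{s}) = \bm{t}$ holds because extending $\bm{s}$ to $\bm{s}m$ inside $\phi$ merely extends the $B$-component, so $f(\bm{s})$ is the one-move-shorter even projection of $f(\bm{s}m)$.

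\textbf{Main obstacle.} The delicate point is the rigidity lemma of the converse: that an invertible strategy can do nothing but relabel-and-copy along a forest isomorphism. Everything else (the forest-isomorphism upgrade of $f$, the well-definedness of $\mathit{cp}_f$, and the copy-cat composition identities) is standard copy-cat bookkeeping. For rigidity one must track justification pointers through the composition $P_\phi \ddagger P_{\phi^{-1}}$ and, in the $\Rightarrow$ case, the duplicated threads of $!A$, in order to see that being forced to equal the (thread-wise) copy-cat leaves $\phi$ no freedom; I expect no conceptual surprises there, only careful case analysis.
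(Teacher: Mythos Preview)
Your approach is essentially the paper's: both directions rest on the rigidity observation (an invertible strategy must ``behave like a copy-cat'', switching components at every P-move), and the \emph{if} direction via your explicit $\mathit{cp}_f$ is exactly what the paper's one-line ``reversing the procedure'' unpacks to.

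The one substantive difference is in how you extract $f$ from $\phi$. You set $f(\bm{s})$ to be the $B$-component of \emph{the} even-length position of $\phi$ with $A$-component $\bm{s}$, justified as ``unique by determinism''. That is not quite enough: in the copy-cat shape $b_1 a_1 a_2 b_2 b_3 a_3 \ldots$, the odd-indexed $b$'s are O-moves in $A \multimap B$, and determinism of $\phi$ alone does not rule out two distinct O-moves $b_1 \neq b_1'$ eliciting the same P-response $a_1$. Uniqueness of those O-moves really comes from $\phi \circ \phi^{-1} = \mathit{cp}_B$: if $b_1 a_1, b_1' a_1 \in \phi$, then the deterministic response of $\phi^{-1}$ to $a_1$ would have to equal both $b_1$ and $b_1'$. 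The paper sidesteps this by building its bijection $\mathscr{F}(\phi)$ \emph{alternately} from $\phi$ and $\phi^{-1}$ --- odd-length extensions read off $\phi^{-1}$, even-length ones off $\phi$ --- so that every new $b_i$ is a P-response of one or the other and determinism applies directly. The two constructions coincide (via the composition identities), but the paper's makes well-definedness immediate while yours defers it to the extra injectivity step above; once that is supplied, your argument goes through.
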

\begin{proof}
Assume $\phi : A \stackrel{\sim}{\multimap} B$; the case $\psi : A \stackrel{\sim}{\Rightarrow} B$ is analogous, and so we omit it.
It is easy to see that $\phi$ and the inverse $\phi^{-1} : B \multimap A$ both ``behave like copy-cats'' in the sense that $\bm{s} b \in \phi^{\mathsf{odd}}$ (resp. $\bm{s} b \in (\phi^{-1})^{\mathsf{odd}}$) with $b \in M_B$ implies $\bm{s} b a \in \phi$ (resp. $\bm{s} b a \in \phi^{-1}$) for some $a \in M_A$, and $\bm{t} a' \in \phi^{\mathsf{odd}}$ (resp. $\bm{t} a' \in (\phi^{-1})^{\mathsf{odd}}$) with $a' \in M_A$ implies $\bm{t} a' b' \in \phi$ (resp. $\bm{t} a' b' \in \phi^{-1}$) for some $b' \in M_B$ (since otherwise $\phi \circ \phi^{-1}$ or $\phi^{-1} \circ \phi$ would not be a copy-cat).
Hence, we may define the function $\mathscr{F}(\phi) : P_A \to P_B$ that maps:
\begin{enumerate}

\item $\bm{\epsilon} \mapsto \bm{\epsilon}$

\item If $a_1 a_2 \dots a_{2n} \mapsto b_1 b_2 \dots b_{2n}$ and $a_1 b_1 b_2 a_2 \dots a_{2n-1} b_{2n-1} b_{2n} a_{2n} a_{2n+1} b_{2n+1} \in \phi^{-1}$, then
\begin{equation*}
a_1 a_2 \dots a_{2n+1} \mapsto b_1 b_2 \dots b_{2n+1}
\end{equation*} 

\item If $a_1 a_2 \dots a_{2n+1} \mapsto b_1 b_2 \dots b_{2n+1}$ and $b_1 a_1 a_2 b_2 \dots b_{2n+1} a_{2n+1} a_{2n+2} b_{2n+2} \in \phi$, then
\begin{equation*}
a_1 a_2 \dots a_{2n+2} \mapsto b_1 b_2 \dots b_{2n+2}.
\end{equation*}

\end{enumerate}
 
Analogously and symmetrically, we may define another function $\mathscr{G}(\phi) : P_B \to P_A$.
\if0
\begin{align*}
\bm{\epsilon} &\mapsto \bm{\epsilon} \\
b_1 b_2 \dots b_{2n+1} &\mapsto a_1 a_2 \dots a_{2n+1} \\ &\text{(if $b_1 b_2 \dots b_{2n} \mapsto a_1 a_2 \dots a_{2n} \wedge b_1 a_1 a_2 b_2 \dots b_{2n-1} a_{2n-1} a_{2n} b_{2n} b_{2n+1} a_{2n+1} \in P_{\phi}$)} \\
b_1 b_2 \dots b_{2n+2} &\mapsto a_1 a_2 \dots a_{2n+2} \\ &\text{(if $b_1 b_2 \dots b_{2n+1} \mapsto a_1 a_2 \dots a_{2n+1} \wedge a_1 b_1 b_2 a_2 \dots a_{2n+1} b_{2n+1} b_{2n+2} a_{2n+2} \in P_{\phi^{-1}}$).} 
\end{align*}
\fi
By induction on the length of input, it is easy to see that $\mathscr{F}(\phi)$ and $\mathscr{G}(\phi)$ are mutually inverses, and they both satisfy the required two conditions.

Conversely, if there is a bijection $f : P_A \stackrel{\sim}{\to} P_B$ satisfying the two conditions, then by ``reversing'' the above procedure, we may construct an invertible strategy $\mathscr{S}(f) : A \stackrel{\sim}{\multimap} B$ and its inverse $\mathscr{S}(f)^{-1} : B \stackrel{\sim}{\multimap} A$ from $f$ and $f^{-1}$, completing the proof.
\end{proof} 

Note that what essentially identifies a given game $G$ is the set $P_G$ of its positions (since games are assumed to be \emph{economical} and j-sequences contain information for labeling); however, it is not an essential point what each of these positions really is as long as it is distinguished from other positions in $G$.
Hence, Theorem~\ref{ThmIsomThm} can be read as: 
\begin{quote}
Isomorphic games are essentially the same ``up to implementation of positions''.
\end{quote}
In other words, the category-theoretic point of view that identifies isomorphic objects makes sense in game semantics as well, where note that a category of games and strategies usually consists of games as objects and strategies between them as morphisms.

Now, let us consider strategies on a fixed game $G$.
Which strategies on $G$ should be considered to be essentially the same or \emph{equivalent}?
Contrary to the case of games, ``implementation of positions'' in strategies \emph{matters} as the underlying game $G$ is already given.
For instance, if we identify any isomorphic strategies on the natural number game $N$, then there would be just one total strategy on $N$, which clearly should not be the case if we want $N$ to represent the set $\mathbb{N}$ of natural numbers. 
On the other hand, e.g., we may \emph{choose} to identify strategies $\underline{n_1}, \underline{n_2} : N$ exactly when $n_1 \equiv n_2 \mod 2$, so that the resulting game represents the set of natural numbers modulo $2$.
Note that it is reasonable to require equivalent strategies to be isomorphic as a minimal requirement since it guarantees as in the case of games that they are graph-theoretically isomorphic (i.e., \emph{isomorphic rooted forests}).

Thus, it seems that we may \emph{define} an equivalence between strategies $\sigma_1, \sigma_2 : G$ by equipping it with a set of \emph{selected} invertible strategies\footnote{These strategies are not necessarily between $\sigma_1, \sigma_2$ themselves as explained below. Also, they are \emph{invertible} not necessarily with respect to the composition of strategies but the composition of the underlying category.}; the set must contain the identity strategies and be closed under composition and inverses as it represents an equivalence relation.
This simple idea leads to the following central notion of the present paper:

\begin{definition}[GwEs]
A \emph{\bfseries game with equality (GwE)} is a groupoid whose objects are strategies on a fixed game and morphisms are invertible strategies.
\end{definition}

\begin{notation}
We usually specify a GwE by a pair $G = (G, =_G)$ of an underlying game $G$ and an assignment $=_G$ of a game $\sigma_1 =_G \sigma_2 \stackrel{\mathrm{df. }}{=} \oint G(\sigma_1, \sigma_2)$ to each pair $\sigma_1, \sigma_2 : G$, i.e., $\mathsf{ob}(G) = \mathsf{st}(G)$, $G(\sigma_1, \sigma_2) = \mathsf{st}(\sigma_1 =_G \sigma_2)$.
We often write $\rho : \sigma_1 =_G \sigma_2$ rather than $\rho \in G(\sigma_1, \sigma_2)$, and call it a \emph{proof of the equality} between $\sigma_1$ and $\sigma_2$. 
Moreover, let the assignment $=_G$ turn into the game by $=_G \ \stackrel{\mathrm{df. }}{=} \oint \bigcup_{\sigma_1, \sigma_2 : G} G(\sigma_1, \sigma_2) = \int \{ \sigma_1 =_G \sigma_2 \ \! | \ \! \sigma_1, \sigma_2 : G \ \! \}$.
Note, however, that the set $\mathsf{st}(=_G)$ is in general \emph{not} the set of all morphisms in the GwE $G$ since hom-sets of $G$ may not be pairwise disjoint, and so the domain or codomain of some morphism may not be recovered from $\mathsf{st}(=_G)$\footnote{In other words, we must identify $=_G$ as an assignment of $\sigma_1 =_G \sigma_2$ or $G(\sigma_1, \sigma_2)$ to each $\sigma_1, \sigma_2 : G$; the game $=_G$ or the set $\mathsf{st}(=_G)$ may lose the information for domain or codomain of some morphisms.}.
\end{notation}

When we say \emph{games} $A, B$ or \emph{strategies} $\sigma : A$, $\phi : A \Rightarrow B$, etc., where $A, B$ are GwEs, we refer to the underlying games $A, B$. A GwE $A$ is defined to be \emph{\bfseries wf} if so are the games $A, =_A$.

\begin{remark}
One may wonder if the relation $\sigma_1 =_G \sigma_2 \trianglelefteq \oint \{ \sigma_1 \} \stackrel{\sim}{\Rightarrow} \oint \{ \sigma_2 \}$ or at least $\sigma_1 =_G \sigma_2 \trianglelefteq G \stackrel{\sim}{\Rightarrow} G$ should hold; however, neither is general enough for Definition~\ref{DefDependentPairSpace}.
Conceptually, this is because a proof $\rho : \sigma_1 =_G \sigma_2$ may ``look at'' relevant information for $\sigma_1, \sigma_2$ but not necessarily $\sigma_1, \sigma_2$ themselves.
Nevertheless, in most games $G$, strategies $\rho : \sigma_1 =_G \sigma_2$ satisfy $\rho : \sigma_1 \stackrel{\sim}{\Rightarrow} \sigma_2$.
\end{remark}

\begin{definition}[Ep-strategies]
A strategy $\phi : A \Rightarrow B$, where $A, B$ are GwEs, is \emph{\bfseries equality-preserving (ep)} if it is equipped with another strategy $\phi^= : \ =_A \ \Rightarrow \ =_B$, called its \emph{\bfseries equality-preservation}, such that the maps $(\sigma : A) \mapsto \phi \bullet \sigma$, $(\rho : \ =_A) \mapsto \phi^= \bullet \rho$ respectively form the object- and arrow-maps of the \emph{\bfseries extensional functor} $\mathsf{fun}(\phi) : A \to B$ induced by $\phi$.
\end{definition}

Explicitly, an ep-strategy $\phi : A \Rightarrow B$ is a pair $\phi = (\phi, \phi^=)$ of strategies 
\begin{align*}
\phi &: A \Rightarrow B \\
\phi^= &: \ =_A \ \Rightarrow \ =_B
\end{align*}
that satisfies, for all $\sigma_1, \sigma_2, \sigma_3 : A$, $\rho_1 : \sigma_1 =_A \sigma_2$, $\rho_2 : \sigma_2 =_A \sigma_3$, the following three conditions: 
\begin{enumerate}

\item $\phi^= \bullet \rho_1 : \phi \bullet \sigma_1 =_B \phi \bullet \sigma_2$

\item $\phi^= \bullet (\rho_2 \bullet \rho_1) = (\phi^= \bullet \rho_2) \bullet (\phi^= \bullet \rho_1)$

\item $\phi^= \bullet \mathit{id}_{\sigma_1} = \mathit{id}_{\phi \bullet \sigma_1}$.

\end{enumerate}
We define $\phi = (\phi, \phi^=)$ to be \emph{\bfseries innocent} (resp. \emph{\bfseries total}, \emph{\bfseries wb}, \emph{\bfseries noetherian}) if so are both $\phi$ and $\phi^=$.

\begin{remark}
For any ep-strategy $\phi : A \Rightarrow B$, $\phi^=$ is not a family $(\phi^=_{\sigma_1, \sigma_2})_{\sigma_1, \sigma_2 : A}$ of strategies $\phi^=_{\sigma_1, \sigma_2} : \sigma_1 =_A \sigma_2 \Rightarrow \phi \bullet \sigma_2 =_B \phi \bullet \sigma_2$ but a single strategy $\phi^= : \ =_A \ \Rightarrow \ =_B$.
This is for $\phi$ to be accordance with the \emph{intensional} and \emph{computational} nature of game semantics in the sense that $\phi^=$ cannot \emph{extensionally} access to the information for the domain and codomain of a given input strategies on $=_A$.
Notice that this formulation is not possible for conventional games and strategies; it is possible due to our formulation of the relation ``$\sigma : G$'', i.e., a strategy $\sigma$ is defined \emph{independently} of games, and we may determine if the relation $\sigma : G$ holds for a given game $G$.
\end{remark}

Conceptually, a GwE $G$ is a game $G$ equipped with the set $\sigma_1 =_G \sigma_2$ of ``(computational) proofs of the (intensional) equality'' between $\sigma_1$ and $\sigma_2$ for all $\sigma_1, \sigma_2 : G$, and an ep-strategy $\phi : A \Rightarrow B$ is a strategy equipped with another one $\phi^=$ that computes on proofs of equalities in $A$ and $B$.
Therefore one may say that GwEs and ep-strategies are groupoids and functors between them equipped with the \emph{intensional} structure of games and strategies.

As expected, GwEs and ep-strategies form a category:
\begin{definition}[The category $\mathcal{PGE}$]
\label{DefPGE}
The category $\mathcal{PGE}$ is defined by: 
\begin{itemize}

\item Objects are wf-GwEs

\item Morphisms $A \to B$ are total, innocent, wb and noetherian ep-strategies $\phi : A \Rightarrow B$

\item The composition $\psi \bullet \phi : A \to C$ of morphisms $\phi : A \to B$, $\psi : B \to C$ is given by the compositions of strategies $\psi \bullet \phi$, $(\psi \bullet \phi)^= \stackrel{\mathrm{df. }}{=} \psi^= \bullet \phi^=$

\item The identity $\mathit{id}_A$ is the dereliction $\mathit{der}_A$ equipped with $(\mathit{der}_A)^= \stackrel{\mathrm{df. }}{=} \mathit{der}_{=_A}$.

\end{itemize}
\end{definition}

The category $\mathcal{PGE}$ is basically the category $\mathcal{WPG}$ of wf-games (Definition~\ref{DefCategoryWPG}) equipped with an ``intensional groupoid structure'' in the sense that $\mathcal{PGE}$ forms a subcategory of the category of groupoids and functors \cite{hofmann1998groupoid}.
Accordingly, it is straightforward to establish:

\begin{theorem}[Well-defined $\mathcal{PGE}$]
The structure $\mathcal{PGE}$ forms a well-defined category.
\end{theorem}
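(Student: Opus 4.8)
The plan is to verify, in order, that the data declared in Definition~\ref{DefPGE} are well-typed and that the category axioms hold, in each case reducing the claim to the corresponding fact already established for $\mathcal{WPG}$ (Definition~\ref{DefCategoryWPG}), applied \emph{separately} to the two components of each ep-strategy. First I would check closure of morphisms under the stated composition. Given morphisms $\phi = (\phi,\phi^=) : A \to B$ and $\psi = (\psi,\psi^=) : B \to C$, the strategies $\phi,\psi$ are total, innocent, wb and noetherian on $!A \multimap B$ and $!B \multimap C$, and likewise $\phi^=,\psi^=$ on $!(=_A) \multimap (=_B)$ and $!(=_B) \multimap (=_C)$. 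By Lemma~\ref{LemWellDefinedComposition} (with Lemma~\ref{LemWellDefinedPairingTensorPromotion} for the promotion hidden in $\bullet$), the composites $\psi \bullet \phi = \psi \circ \phi^\dagger$ and $\psi^= \bullet \phi^=$ are again total, innocent, wb and noetherian strategies on $!A \multimap C$ and $!(=_A) \multimap (=_C)$ respectively, so the data of a morphism $A \to C$ are of the right kind.

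Next, and this is the substantive point, I would check that $(\psi \bullet \phi,\ \psi^= \bullet \phi^=)$ is genuinely an ep-strategy, i.e.\ that $\psi^= \bullet \phi^=$ is a legitimate equality-preservation of $\psi \bullet \phi$ rather than merely \emph{some} strategy on $=_A \Rightarrow =_C$. The cleanest route is to show that the induced extensional functor satisfies $\mathsf{fun}(\psi \bullet \phi) = \mathsf{fun}(\psi) \circ \mathsf{fun}(\phi)$. For the object part one needs $(\psi \bullet \phi) \bullet \sigma = \psi \bullet (\phi \bullet \sigma)$ for every $\sigma : A$, which is precisely associativity of $\bullet$ in $\mathcal{WPG}$ reading $\sigma$ as a morphism $I \to A$; for the arrow part one needs $(\psi^= \bullet \phi^=) \bullet \rho = \psi^= \bullet (\phi^= \bullet \rho)$ for every $\rho :\ =_A$, again by associativity of $\bullet$ in $\mathcal{WPG}$. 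Since a composite of functors between groupoids is a functor, the three defining conditions of an ep-strategy (typing of proofs, preservation of vertical composition of proofs, preservation of identity proofs) for $(\psi \bullet \phi,\ \psi^= \bullet \phi^=)$ follow from those for $\phi$ and $\psi$. For the identity, note that a wf-GwE $A$ has both $A$ and $=_A$ wf, so Lemma~\ref{LemWellDefinedDerelictions} gives that $\mathit{der}_A :\ !A \multimap A$ and $\mathit{der}_{=_A} :\ !(=_A) \multimap (=_A)$ are total, innocent, wb and noetherian; and the pair $\mathit{id}_A = (\mathit{der}_A, \mathit{der}_{=_A})$ is an ep-strategy because $\mathsf{fun}(\mathit{id}_A)$ is the identity functor $A \to A$, since $\mathit{der}_A \bullet \sigma = \sigma$ and $\mathit{der}_{=_A} \bullet \rho = \rho$ ($\mathcal{WPG}$ has derelictions as identities).

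Finally, the category axioms. Associativity $(\chi \bullet \psi) \bullet \phi = \chi \bullet (\psi \bullet \phi)$ and the unit laws $\mathit{id}_B \bullet \phi = \phi = \phi \bullet \mathit{id}_A$ hold componentwise: the plain components satisfy them because $\mathcal{WPG}$ is a category, and the ``$=$'' components satisfy them for the same reason applied to the games $=_A, =_B, =_C$, noting that the equality-preservations are determined by the convention $(\psi\bullet\phi)^= = \psi^= \bullet \phi^=$, so an equality of underlying strategies automatically forces an equality of the attached equality-preservations. Equivalently, one may package all of this by exhibiting $\mathcal{PGE}$ as a (non-full) subcategory of the category of groupoids and functors via $A \mapsto A$, $\phi \mapsto \mathsf{fun}(\phi)$; then the only thing requiring argument is that the identities and composites in the image arise from ep-strategies, which is exactly what the previous steps establish. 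I expect the only mildly delicate step to be the second one — confirming that composition stays \emph{within} ep-strategies — but this collapses to associativity of $\bullet$ in $\mathcal{WPG}$ for both components, so no essential obstacle arises; the rest is bookkeeping inherited from $\mathcal{WPG}$.
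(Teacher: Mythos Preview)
Your proposal is correct and follows essentially the same route as the paper: reduce well-definedness of composition and identities to Lemmata~\ref{LemWellDefinedComposition} and~\ref{LemWellDefinedDerelictions} applied componentwise, then check that the composite pair $(\psi\bullet\phi,\ \psi^=\bullet\phi^=)$ is an ep-strategy using associativity of $\bullet$, and finally inherit associativity and the unit laws from $\mathcal{WPG}$ on each component. The only cosmetic difference is that you package the ``ep'' verification as $\mathsf{fun}(\psi\bullet\phi) = \mathsf{fun}(\psi)\circ\mathsf{fun}(\phi)$ and invoke that composites of functors are functors, whereas the paper unfolds the three functoriality conditions explicitly; these are the same argument.
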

\begin{proof}
We first show that the composition is well-defined.
Let $\phi : A \to B$, $\psi : B \to C$ be any morphisms in $\mathcal{PGE}$.
By Lemma~\ref{LemWellDefinedComposition}, $\psi \bullet \phi : A \Rightarrow C$ and $\psi^= \bullet \phi^= : \ =_A \ \Rightarrow \ =_C$ both form total, innocent, wb and noetherian strategies.
Moreover, for all $\sigma_1, \sigma_2, \sigma_3 : A$, $\rho_1 : \sigma_1 =_A \sigma_2$, $\rho_2 : \sigma_2 =_A \sigma_3$, they satisfy: 
\begin{enumerate}

\item $\psi^= \bullet (\phi^= \bullet \rho_1) : \psi \bullet (\phi \bullet \sigma_1) =_B \psi \bullet (\phi \bullet \sigma_2) \Leftrightarrow (\psi^= \bullet \phi^=) \bullet \rho_1 : (\psi \bullet \phi) \bullet \sigma_1 =_B (\psi \bullet \phi) \bullet \sigma_2$

\item $(\psi^= \bullet \phi^=) \bullet (\rho_2 \bullet \rho_1) = \psi^= \bullet (\phi^= \bullet (\rho_2 \bullet \rho_1)) = \psi^= \bullet ((\phi^= \bullet \rho_2) \bullet (\phi^= \bullet \rho_1)) = (\psi^= \bullet (\phi^= \bullet \rho_2)) \bullet (\psi^= \bullet (\phi^= \bullet \rho_1)) = ((\psi^= \bullet \phi^=) \bullet \rho_2) \bullet ((\psi^= \bullet \phi^=) \bullet \rho_1)$

\item $(\psi^= \bullet \phi^=) \bullet \mathit{id}_{\sigma_1} = \psi^= \bullet (\phi^= \bullet \mathit{id}_{\sigma_1}) = \psi^= \bullet \mathit{id}_{\phi \bullet \sigma_1} = \mathit{id}_{\psi \bullet (\phi \bullet \sigma_1)} = \mathit{id}_{(\psi \bullet \phi) \bullet \sigma_1}$.

\end{enumerate}
Therefore the composition $\psi \bullet \phi = (\psi \bullet \phi, \psi^= \bullet \phi^=)$ is in fact a morphism $A \to C$ in $\mathcal{PGE}$.
Note that the associativity of composition in $\mathcal{PGE}$ immediately follows from the associativity of composition of strategies.

Next, by Lemma~\ref{LemWellDefinedDerelictions}, $\mathit{der}_A$ (resp. $\mathit{der}_{=_A}$) is a total, innocent, wb and noetherian strategy on $A \stackrel{\sim}{\Rightarrow} A$ (resp. $=_A \ \stackrel{\sim}{\Rightarrow} \ =_A$) for each wf-GwE $A$.
It is also easy to see that the pair $\mathit{der}_A = (\mathit{der}_A, \mathit{der}_{=_A})$ satisfies the required functoriality, forming a morphism $A \to A$ in $\mathcal{PGE}$.
Finally, for any morphism $\phi : A \to B$ in $\mathcal{PGE}$, we clearly have $\mathit{der}_B \bullet \phi = \phi$, $\mathit{der}_B^= \bullet \phi^= = \phi^=$, $\phi \bullet \mathit{der}_A = \phi$ and $\phi^= \bullet \mathit{der}_A^= = \phi^=$.
Hence, these pairs $\mathit{der}_A = (\mathit{der}_A, \mathit{der}_{=_A})$ of derelictions satisfy the unit law, completing the proof.
\end{proof}

As explained in \cite{yamada2016game}, wf-games can be seen as ``propositions'' and total, innocent, wb and noetherian strategies on them as ``(constructive) proofs''.
Thus, we say that an object $A \in \mathcal{PGE}$ is \emph{\bfseries true} if there is some $\sigma \in \mathcal{PGE}(I, A)$, called a \emph{\bfseries proof} of $A$, and it is \emph{\bfseries false} otherwise.

\subsection{Dependent Games with Equalities}
\label{DependentGamesWithEqualities}
This section gives constructions on GwEs to interpret $\Pi$-, $\Sigma$- and \textsf{Id}-types ``\emph{partially}'' in the sense that they are applied only to \emph{closed} terms.
The ``full interpretation'' of these types will be given in Section~\ref{GameTheoreticTypeFormers}.

We begin with our game semantics for \emph{dependent types}:
\begin{definition}[DGwEs]
A \emph{\bfseries dependent game with equality (DGwE)} over $A \in \mathcal{PGE}$ is a functor $B : A \to \mathcal{PGE}$ that is ``\emph{uniform}'': 
\begin{align*}
\bm{s} ab \in B(\rho)_\tau &\Leftrightarrow \bm{s} ab \in B(\tilde{\rho})_{\tilde{\tau}} \\
\bm{t} mn \in B(\rho)^=_\varrho &\Leftrightarrow \bm{t} mn \in B(\tilde{\rho})^=_{\tilde{\varrho}} 
\end{align*}
for all $\gamma, \gamma', \tilde{\gamma}, \tilde{\gamma}' : \Gamma$, $\rho : \gamma =_\Gamma \gamma'$, $\tilde{\rho} : \tilde{\gamma} =_\Gamma \tilde{\gamma}'$, $\tau, \tau_1, \tau_2 : B(\gamma)$, $\tilde{\tau}, \tilde{\tau}_1, \tilde{\tau}_2 : B(\tilde{\gamma})$, $\varrho : \tau_1 =_{B(\gamma)} \tau_2$, $\tilde{\varrho} : \tilde{\tau}_1 =_{B(\tilde{\gamma})} \tilde{\tau}_2$, $\bm{s} a \in B(\rho)_\tau^{\mathsf{odd}} \cap B(\tilde{\rho})_{\tilde{\tau}}^{\mathsf{odd}}$, $\bm{s} ab \in B(\rho)_\tau \cup B(\tilde{\rho})_{\tilde{\tau}}$, $\bm{t} m \in (B(\rho)^=_{\varrho})^{\mathsf{odd}} \cap (B(\tilde{\rho})^=_{\tilde{\varrho}})^{\mathsf{odd}}$, $\bm{t} mn \in B(\rho)^=_\varrho \cup B(\tilde{\rho})^=_{\tilde{\varrho}}$.
\end{definition}

\begin{notation}
We write $\mathscr{D}(A)$ for the set of all DGwEs over $A \in \mathcal{PGE}$.
For each $B \in \mathscr{D}(A)$, recall that \cite{yamada2016game} defined the \emph{\bfseries dependent union} $\uplus B$ by $\uplus B  \stackrel{\mathrm{df. }}{=} \int \{ B(\sigma) \ \! | \ \! \sigma : A \ \! \}$.
\end{notation}

The uniformity of DGwEs $B : A \to \mathcal{PGE}$ ensures that the respective strategies $B(\rho)$, $B(\rho)^=$, where $\rho$ ranges over morphisms in $\Gamma$, behave in the ``uniform manner''.
We need it in the proof of Theorems~\ref{ThmGameTheoreticPiTypes}, \ref{ThmGameTheoreticSumTypes}.
Also, DGwEs are a generalization of wf-GwEs since a wf-GwE can be equivalently presented as a DGwE $B : I \to \mathcal{PGE}$, where $I$ is the \emph{discrete} (i.e., morphisms are only identities) wf-GwE on the terminal game $I = (\emptyset, \emptyset, \emptyset, \{ \bm{\epsilon}, q_I, q_I \mathcal{N}(\{ \bm{\epsilon} \}) \})$.

We are now ready to give a partial interpretation of $\Pi$-types:
\begin{definition}[Dependent function space]
\label{DefDependentFunctionSpace}
Given a DGwE $B : A \to \mathcal{PGE}$, the \emph{\bfseries dependent function space} $\widehat{\Pi}(A, B)$ from $A$ to $B$ is defined as follows:
\begin{itemize}

\item The game $\widehat{\Pi}(A, B)$ is the subgame of $A \Rightarrow \uplus B$ whose strategies $\phi$ are equipped with an equality-preservation $\phi^= : \ =_A \ \Rightarrow \uplus =_B$, where $\uplus =_B \ \stackrel{\mathrm{df. }}{=} \int  \{ \ \! =_{B(\sigma)} \! | \ \! \sigma : A \ \! \}$, that satisfy:
\begin{align*}
\phi \bullet \sigma &: B(\sigma) \\
\phi^= \bullet \rho &: B(\rho) \bullet \phi \bullet \sigma =_{B(\sigma')} \phi \bullet \sigma' \\
\phi^= \bullet (\rho' \bullet \rho) &= (\phi^= \bullet \rho') \bullet (B(\rho')^= \bullet \phi^= \bullet \rho) \\
\phi^= \bullet \mathit{id}_\sigma &= \mathit{id}_{\phi \bullet \sigma}
\end{align*}
for all $\sigma, \sigma', \sigma'' : A$, $\rho : \sigma =_A \sigma'$, $\rho' : \sigma' =_A \sigma''$

\item For any $\phi_1, \phi_2 : \widehat{\Pi}(A, B)$, the game $\phi_1 =_{\widehat{\Pi}(A, B)} \phi_2$ consists of strategies $\mu : \phi_1 \stackrel{\sim}{\Rightarrow} \phi_2$, where we write $\phi_i : \widehat{\Pi}(A^{[i]}, B^{[i]})$ for $i = 1, 2$ to distinguish different copies of $A$, $\uplus B$, that satisfy:
\begin{enumerate}

\item $\mathsf{even}(\bm{s})$ implies $\mathsf{even}(\bm{s} \upharpoonright A^{[1]}, A^{[2]})$ for all $\bm{s} \in \mu$

\item $\mu \upharpoonright A^{[2]} \in \sigma$ implies $\mu \upharpoonright A^{[1]} \in \sigma$ for all $\sigma : A$

\item $\mu_\sigma \stackrel{\mathrm{df. }}{=} \{ \bm{s} \upharpoonright \uplus B^{[1]}, \uplus B^{[2]} \ | \ \bm{s} \in \mu, \mu \upharpoonright A^{[2]} \in \sigma \} : \phi_1 \bullet \sigma =_{B(\sigma)} \phi_2 \bullet \sigma$ for all $\sigma : A$

\item $\mathsf{nat}(\mu) \stackrel{\mathrm{df. }}{=} \{ \mu_\sigma \ \! | \! \ \sigma : A \ \! \}$ forms a natural transformation from $\mathsf{fun}(\phi_1)$ to $\mathsf{fun}(\phi_2)$

\end{enumerate}

\item The composition, identities and inverses of morphisms are the ones for strategies.

\end{itemize}
\end{definition}

In the game $\phi_1 =_{\widehat{\Pi}(A, B)} \phi_2$, Player (resp. Opponent) can control only P-moves in $\phi_1$ (resp. O-moves in $\phi_2$); thus, in $\mu : \phi_1 =_{\widehat{\Pi}(A, B)} \phi_2$, a play is completely determined by O-moves in $\phi_2$.  

The intuition behind the four axioms for $\mu : \phi_1 =_{\widehat{\Pi}(A, B)} \phi_2$ is as follows.
The condition 1 ensures that $\mu$ ``witnesses'' that $\phi_1$ and $\phi_2$ go back and forth between $A$ and $\uplus B$ ``in the same timing''.
The conditions 2, 3 guarantee that the extensional ``input/output behaviors'' of $\phi_1$ and $\phi_2$ are shown to be equal by $\mu$. 
Finally, the condition 4, just as naturality in general, corresponds to the ``uniformity'' of $\mu_\sigma$, where $\sigma$ ranges over strategies on $A$.

\begin{lemma}[Well-defined $\widehat{\Pi}$]
\label{LemWellDefinedDependentFunctionSpace}
For any DGwE $B : A \to \mathcal{PGE}$, we have $\widehat{\Pi}(A, B) \in \mathcal{PGE}$.
\end{lemma}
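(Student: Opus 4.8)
The plan is to show that the data of Definition~\ref{DefDependentFunctionSpace} assembles into a wf-GwE, i.e.\ an object of $\mathcal{PGE}$, by reducing each requirement to facts already established about predicative games and strategies. First I would check that the two games involved are well-defined wf-predicative games. The game $A \Rightarrow \uplus B = {!A} \multimap \int \{ B(\sigma) \mid \sigma : A \}$ is a well-defined predicative game by Theorem~\ref{WellDefinedConstructionsOnPredicativeGames} together with the well-definedness of parallel unions, and it is well-founded because $A$ is (hence so is ${!A}$), each $B(\sigma)$ is (being an object of $\mathcal{PGE}$), and parallel union, ${!}$ and $\multimap$ all preserve well-foundedness. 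Since $\widehat{\Pi}(A,B)$ is carved out of $A \Rightarrow \uplus B$ as a subgame (by a property of its strategies $\phi$, namely that $\phi$ admits an equality-preservation with the listed properties), it is again a wf-predicative game of finite rank; likewise each $\phi_1 =_{\widehat{\Pi}(A,B)} \phi_2$ is a predicative union $\oint$ of invertible strategies $\mu : \phi_1 \stackrel{\sim}{\Rightarrow} \phi_2$, each a subgame of the wf game ${!\phi_1} \multimap \phi_2$, so $=_{\widehat{\Pi}(A,B)} = \int \{ \phi_1 =_{\widehat{\Pi}(A,B)} \phi_2 \mid \phi_1, \phi_2 : \widehat{\Pi}(A,B) \}$ is a wf-predicative game too.

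Next I would verify the groupoid structure on $\mathsf{ob} = \mathsf{st}(\widehat{\Pi}(A,B))$ with hom-sets $\mathsf{st}(\phi_1 =_{\widehat{\Pi}(A,B)} \phi_2)$ and with composition, identities and inverses inherited from strategies. Associativity and the unit laws are then immediate from the corresponding laws for composition of strategies (Lemma~\ref{LemWellDefinedComposition} and Lemma~\ref{LemWellDefinedDerelictions}), and every morphism is invertible by construction, so the substantive points are closure: (i) $\mathit{der}_\phi \in \phi =_{\widehat{\Pi}(A,B)} \phi$ for each object $\phi$; (ii) $\mu \in \phi_1 =_{\widehat{\Pi}(A,B)} \phi_2$ and $\nu \in \phi_2 =_{\widehat{\Pi}(A,B)} \phi_3$ imply $\nu \bullet \mu \in \phi_1 =_{\widehat{\Pi}(A,B)} \phi_3$; and (iii) $\mu \in \phi_1 =_{\widehat{\Pi}(A,B)} \phi_2$ implies $\mu^{-1} \in \phi_2 =_{\widehat{\Pi}(A,B)} \phi_1$. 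Each amounts to checking the four defining conditions on $\mu$. For (i), $\mathit{der}_\phi$ is a copy-cat, so conditions~1 and~2 are clear, $(\mathit{der}_\phi)_\sigma = \mathit{id}_{\phi \bullet \sigma}$ gives condition~3, and $\mathsf{nat}(\mathit{der}_\phi)$ is the identity natural transformation on $\mathsf{fun}(\phi)$, giving condition~4. For (iii), conditions~1 and~2 follow by symmetry and invertibility of $\mu$, $(\mu^{-1})_\sigma = (\mu_\sigma)^{-1}$ gives condition~3, and the pointwise inverse of a natural transformation between functors into a groupoid is again natural.

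The crux is (ii), and within it conditions~3 and~4, which rest on the identity $(\nu \bullet \mu)_\sigma = \nu_\sigma \bullet \mu_\sigma$ for every $\sigma : A$: extracting the $\sigma$-component must commute with composition of strategies. Concretely, the hidden interaction over the middle game $\phi_2$ (i.e.\ $A^{[2]} \Rightarrow \uplus B^{[2]}$) that witnesses $\nu \bullet \mu = \nu \circ \mu^\dagger$ should restrict, along the thread singled out by $\sigma$, to exactly the hidden interaction over $\phi_2 \bullet \sigma$ that witnesses $\nu_\sigma \bullet \mu_\sigma$ in the groupoid $B(\sigma)$; here the uniformity of the DGwE $B$ is what makes these $\sigma$-threads coherent. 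Granting this, condition~3 is immediate (using Lemma~\ref{LemWellDefinedComposition} to see that $\nu_\sigma \bullet \mu_\sigma$ has the right type), and condition~4 follows because $\mathsf{nat}(\nu \bullet \mu) = \{ \nu_\sigma \bullet \mu_\sigma \}_{\sigma : A}$ is the vertical composite of the natural transformations $\mathsf{nat}(\mu)$ and $\mathsf{nat}(\nu)$; conditions~1 and~2 for $\nu \bullet \mu$ follow by tracking alternation and the thread structure through the zipping in $\nu \circ \mu^\dagger$, using that both $\mu$ and $\nu$ already synchronise their $A$-sides. I expect this reconciliation of hiding (in composition of strategies) with $\sigma$-threading (used to read off $\mu_\sigma$) to be the main obstacle; the remaining verifications are routine unwinding of the definitions and appeals to Lemmata~\ref{LemWellDefinedDerelictions}, \ref{LemWellDefinedComposition} and~\ref{LemWellDefinedPairingTensorPromotion}.
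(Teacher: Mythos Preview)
Your proposal is correct and follows essentially the same route as the paper: check that the underlying games are wf, then verify closure of the hom-sets under identities, composition and inverses (with associativity and unit laws inherited from composition of strategies), isolating the equation $(\nu \bullet \mu)_\sigma = \nu_\sigma \bullet \mu_\sigma$ as the crux. One small correction: that equation does not hinge on the uniformity of the DGwE $B$ but on axioms~1 and~2 for $\mu$ and $\nu$ themselves (the $A$-synchronisation you already invoke for conditions~1 and~2 of $\nu \bullet \mu$); the paper derives it by an induction on four consecutive moves with a case analysis, and appeals to Theorem~\ref{ThmIsomThm} to handle the inverse.
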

\begin{proof}
First, it is easy to see that the games $\widehat{\Pi}(A, B), =_{\widehat{\Pi}(A, B)}$ are wf thanks to ``initial protocols''.
For the composition $\bullet$, let $\mu : \phi_1 =_{\widehat{\Pi}(A, B)} \phi_2$, $\nu : \phi_2 =_{\widehat{\Pi}(A, B)} \phi_3$.
It is easy to see that the axioms 1, 2 are satisfied by the composition $\nu \bullet \mu$.
Also, it is not hard to see that these two axioms imply $(\nu \bullet \mu)_\sigma = \nu_\sigma \bullet \mu_\sigma : \phi_1 \bullet \sigma =_{B(\sigma)} \phi_3 \bullet \sigma$ for all $\sigma : A$ (by induction on four consecutive positions with a case analysis), and so $\mathsf{nat}(\nu \bullet \mu) = \{ \mu_\sigma \bullet q_\sigma \ \! | \! \ \sigma : A \ \! \}$ satisfies the naturality condition as $\mathsf{nat}(\nu \bullet \mu)$ is just the vertical composition $\mathsf{nat}(\nu) \circ \mathsf{nat}(\mu)$ of natural transformations.
Thus, $\nu \bullet \mu : \phi_1 =_{\widehat{\Pi}(A, B)} \phi_3$, and so the composition $\bullet$ is well-defined. 
Also, the identity $\mathit{id}_\phi$ on each $\phi : \widehat{\Pi}(A, B)$ clearly satisfies the four axioms, and so $\mathit{id}_\phi : \phi =_{\widehat{\Pi}(A, B)} \phi$.
Note that the associativity of the composition and the unit law of the identities are just the corresponding properties of the composition and identities of strategies.

Next, in light of Theorem~\ref{ThmIsomThm}, it is clear that the inverse $\mu^{-1} : \phi_2 \stackrel{\sim}{\Rightarrow} \phi_1$ satisfies the first two axioms, and $(\mu^{-1})_\sigma = (\mu_\sigma)^{-1} : \phi_2 \bullet \sigma =_{B(\sigma)} \phi_1 \bullet \sigma$ for all $\sigma : A$. 
Thus, $\mathsf{nat}(\mu^{-1}) = \mathsf{nat}(\mu)^{-1}$, and so $\mu^{-1}$ satisfies the naturality condition as the inverse of any natural transformation does. Explicitly, given any $\sigma, \sigma' : A$, $\rho : \sigma =_A \sigma'$, we have:
\begin{equation*}
\phi_2^= \bullet \rho \bullet \mu_\sigma = \mu_{\sigma'} \bullet \phi_1^= \bullet \rho 
\end{equation*}
by the nautrality of $\mathsf{nat}(\mu)$, whence $\mu_{\sigma'}^{-1} \bullet \phi_2^= \bullet \rho \bullet \mu_\sigma \bullet \mu_\sigma^{-1} = \mu_{\sigma'}^{-1} \bullet \mu_{\sigma'} \bullet \phi_1^= \bullet \rho \bullet \mu_\sigma^{-1}$, i.e., 
\begin{equation*}
\mu_{\sigma'}^{-1} \bullet \phi_2^= \bullet \rho = \phi_1^= \bullet \rho \bullet \mu_\sigma^{-1}
\end{equation*}
which completes the proof.
\end{proof}

The idea is best described by a set-theoretic analogy: $\widehat{\Pi} (A, B)$ represents the space of functions $f : A \to \bigcup_{x \in A} B(x)$ that satisfies $f(a) \in B(a)$ for all $a \in A$. 
Again, the conditions on morphisms are an ``intensional refinement'' of those in the groupoid interpretation \cite{hofmann1998groupoid}.
However, we have to handle the case where $A$ is a DGwE; so $\widehat{\Pi}$ is not general enough. 
In terms of the syntax, we can interpret $(\textsc{$\Pi$-Form}) \ \mathsf{\Gamma, x:A \vdash B(x) \ \mathsf{type} \Rightarrow \Gamma \vdash \textstyle \Pi_{x:A}B(x) \ \mathsf{type}}$ only when $\mathsf{\Gamma = \diamondsuit}$ (the \emph{empty context}) at the moment.
We shall define a more general $\Pi$ shortly.
This applies to the interpretation of $\Sigma$- and \textsf{Id}-types given below as well.

Note that a function maps equal inputs to equal outputs, but it is not obvious if it is the case for our computational equalities, i.e., morphisms in GwEs, since there are non-trivial ones. 
However, equality-preservations ensure the desired property: 
\begin{proposition}[Dependent functionality]
Let $A \in \mathcal{PGE}$, $B \in \mathscr{D}(A)$, $\phi_1, \phi_2 : \widehat{\Pi}(A, B)$.
If there are morphisms $\varrho : \phi_1 =_{\widehat{\Pi}(A, B)} \phi_2$, $\rho : \sigma_1 =_A \sigma_2$ in $\mathcal{PGE}$, then we have at least two morphisms 
\begin{equation*}
(\phi_2^= \bullet \rho) \bullet (B(\rho)^= \bullet \varrho_{\sigma_1}), \varrho_{\sigma_2} \bullet (\phi_1^= \bullet \rho) : B(\rho) \bullet \phi_1 \bullet \sigma_1 =_{B(\sigma_2)} \phi_2 \bullet \sigma_2
\end{equation*}
in $\mathcal{PGE}$.
\end{proposition}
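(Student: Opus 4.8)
The plan is to read this proposition as what it essentially is: a typing statement. Its whole content is that the two displayed composites are well-formed morphisms of the groupoid $B(\sigma_2)$ with source $B(\rho) \bullet \phi_1 \bullet \sigma_1$ and target $\phi_2 \bullet \sigma_2$, and that, when $\rho$, $\varrho$ and the data of $B$ are total, innocent, wb and noetherian, so are the composites. Hence the proof is a domain/codomain chase through the defining clauses of $\widehat{\Pi}(A,B)$ and $=_{\widehat{\Pi}(A,B)}$, the three conditions defining ep-strategies, and the functoriality of $B : A \to \mathcal{PGE}$, followed by an appeal to Lemma~\ref{LemWellDefinedComposition}. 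I would deliberately not invoke the naturality condition (condition~4 of $=_{\widehat{\Pi}(A,B)}$); see the last paragraph.

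First I would name the elementary pieces and their types. Because $B$ is a functor into $\mathcal{PGE}$ and $\rho : \sigma_1 =_A \sigma_2$ is a morphism of the groupoid $A$, the strategy $B(\rho) : B(\sigma_1) \to B(\sigma_2)$ is an invertible morphism of $\mathcal{PGE}$ (with inverse $B(\rho^{-1})$), so in particular it is an ep-strategy and carries an equality-preservation $B(\rho)^= : \ =_{B(\sigma_1)} \ \Rightarrow \ =_{B(\sigma_2)}$; spelling out this ingredient is the one step that is not purely mechanical. By condition~3 in the definition of $=_{\widehat{\Pi}(A,B)}$ we have $\varrho_{\sigma_1} : \phi_1 \bullet \sigma_1 =_{B(\sigma_1)} \phi_2 \bullet \sigma_1$ and $\varrho_{\sigma_2} : \phi_1 \bullet \sigma_2 =_{B(\sigma_2)} \phi_2 \bullet \sigma_2$. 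Feeding $\varrho_{\sigma_1}$ to the first of the three conditions defining the ep-strategy $B(\rho)$ gives $B(\rho)^= \bullet \varrho_{\sigma_1} : B(\rho) \bullet \phi_1 \bullet \sigma_1 =_{B(\sigma_2)} B(\rho) \bullet \phi_2 \bullet \sigma_1$. Applying the clause $\phi^= \bullet \rho : B(\rho) \bullet \phi \bullet \sigma =_{B(\sigma')} \phi \bullet \sigma'$ in the definition of $\widehat{\Pi}(A,B)$ with $\sigma = \sigma_1$, $\sigma' = \sigma_2$, once to $\phi = \phi_1$ and once to $\phi = \phi_2$, gives $\phi_1^= \bullet \rho : B(\rho) \bullet \phi_1 \bullet \sigma_1 =_{B(\sigma_2)} \phi_1 \bullet \sigma_2$ and $\phi_2^= \bullet \rho : B(\rho) \bullet \phi_2 \bullet \sigma_1 =_{B(\sigma_2)} \phi_2 \bullet \sigma_2$.

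It then remains to compose inside $B(\sigma_2)$. The target $B(\rho) \bullet \phi_2 \bullet \sigma_1$ of $B(\rho)^= \bullet \varrho_{\sigma_1}$ equals the source of $\phi_2^= \bullet \rho$, so $(\phi_2^= \bullet \rho) \bullet (B(\rho)^= \bullet \varrho_{\sigma_1})$ is a morphism $B(\rho) \bullet \phi_1 \bullet \sigma_1 =_{B(\sigma_2)} \phi_2 \bullet \sigma_2$; dually, the target $\phi_1 \bullet \sigma_2$ of $\phi_1^= \bullet \rho$ equals the source of $\varrho_{\sigma_2}$, so $\varrho_{\sigma_2} \bullet (\phi_1^= \bullet \rho)$ is a morphism of the same type. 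Since $\bullet$ is realised by composition of strategies, Lemma~\ref{LemWellDefinedComposition} shows that both composites are total, innocent, wb and noetherian whenever their constituents are, and each is a morphism of the groupoid $B(\sigma_2)$, hence an invertible strategy; this exhibits the two morphisms. The only genuine obstacle is conceptual rather than technical: one should resist the temptation to prove the two composites \emph{equal}. Condition~4 of $=_{\widehat{\Pi}(A,B)}$ only makes $\mathsf{nat}(\varrho)$ a natural transformation between the \emph{extensional} functors $\mathsf{fun}(\phi_1), \mathsf{fun}(\phi_2)$, so the two composites agree on extensional behaviour but need not be the same strategy — which is exactly the intensional feature responsible for the refutation of UIP. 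The proposition therefore asserts only that we have (at least) two such morphisms, and the proof should halt once both have been shown well-typed and well-behaved.
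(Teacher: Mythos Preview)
Your typing argument is correct and is exactly the content the paper has in mind: the proposition is stated without proof in the paper, so the intended argument is precisely the domain/codomain chase you carry out, reading off the types of $\varrho_{\sigma_i}$, $\phi_i^= \bullet \rho$ and $B(\rho)^= \bullet \varrho_{\sigma_1}$ from the clauses of Definition~\ref{DefDependentFunctionSpace} and the ep-strategy axioms, and then composing in $B(\sigma_2)$.

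Your final paragraph, however, overreaches. First, the naturality condition 4 is an equality \emph{of strategies}, not merely of extensional behaviour: in the proof of Lemma~\ref{LemWellDefinedDependentFunctionSpace} the paper writes the naturality square literally as $\phi_2^= \bullet \rho \bullet \mu_\sigma = \mu_{\sigma'} \bullet \phi_1^= \bullet \rho$, and in the dependent setting the correct form of that square is exactly $(\phi_2^= \bullet \rho) \bullet (B(\rho)^= \bullet \mu_{\sigma_1}) = \mu_{\sigma_2} \bullet (\phi_1^= \bullet \rho)$ --- i.e.\ the two displayed composites \emph{are} equal by condition~4. The phrase ``at least two'' in the statement just records that both expressions are well-typed, not that they are distinct. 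Second, the refutation of UIP in the paper has nothing to do with these two composites: it comes from a specific GwE $\mathscr{B}$ carrying two genuinely different automorphisms of a single object, not from any failure of naturality in $\widehat{\Pi}$. So the proof is fine once you stop at ``both composites are well-typed morphisms in $B(\sigma_2)$''; the closing gloss about intensionality and UIP should be dropped.
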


On the other hand, for any $\phi_1, \phi_2 : \widehat{\Pi}(A, B)$, each morphism on $\phi_1 =_{\widehat{\Pi}(A, B)} \phi_2$ tracks the ``dynamics'' of $\phi_1$ and $\phi_2$; in particular, there is such a morphism only if $\phi_1$ and $\phi_2$ go back and forth between $A$ and $\uplus B$ ``in the same timing''.
Therefore our game semantics \emph{refutes} the axiom of \emph{\bfseries function extensionality (FunExt)} by the same argument as \cite{abramsky2015games,yamada2016game}.
This \emph{intensional} nature of our interpretation presents a sharp contrast to the groupoid model \cite{hofmann1998groupoid}.

\if0
Next, let us establish a lemma, which will be of use later:
\begin{lemma}[Dependent composition]
\label{LemDependentComposition}
Let $A \in \mathcal{PGE}$, $B, C \in \mathscr{D}(A)$, $\phi : \widehat{\Pi}(A, B)$ in $\mathcal{PGE}$. 
If $\eta : \uplus B \to \uplus C$ in $\mathcal{PGE}$ satisfies 
\begin{align*}
\pi_\eta(\tau) &: C(\sigma) \\
\pi_{\eta^=}(\varrho) &: C(\rho) \bullet \tau =_{C(\sigma')} \tau'
\end{align*}
for all $\sigma, \sigma' : A$, $\tau : B(\sigma)$, $\tau' : B(\sigma,)$, $\rho : \sigma =_A \sigma'$, $\varrho : B(\rho) \bullet \sigma =_{B(\sigma')} \tau'$, then 
\begin{equation*}
\textstyle \eta \bullet \phi : \widehat{\Pi}(A, C)
\end{equation*}
in $\mathcal{PGE}$.
\end{lemma}
\begin{proof}
Let $\sigma, \sigma' : A$, $\rho : \sigma =_A \sigma'$.
First, note that $\eta \bullet \phi : A \to \uplus C$ is a morphism in $\mathcal{PGE}$ such that 
\begin{align*}
(\eta \bullet \phi)_\sigma = \eta_{\pi_\phi(\sigma)} \bullet \phi_\sigma &: \sigma \Rightarrow \pi_\eta \circ \pi_\phi(\sigma) \\
(\eta \bullet \phi)^=_\rho =\eta^=_{\pi_{\phi^=}(\rho)} \bullet \phi^=_\rho &: \rho \Rightarrow \pi_{\eta^=} \circ \pi_{\phi^=}(\rho)
\end{align*}
where $\pi_\eta \circ \pi_\phi(\sigma) : C(\sigma)$ and $\pi_{\eta^=} \circ \pi_{\phi^=}(\rho) : C(\rho) \bullet \phi \bullet \sigma =_{C(\sigma')} \phi \bullet \sigma'$ by the assumption.
\end{proof}

\begin{remark}
One may wonder such a morphism $\eta : \uplus B \to \uplus C$ in $\mathcal{PGE}$ may be seen as a natural transformation $\mathsf{nat}(\eta) : B \Rightarrow C : A \to \mathcal{PGE}$.
Although it does not hold in general, a particular kind of such $\eta$ in fact forms a natural transformation; see the proof of Theorem~\ref{ThmGameTheoreticPiTypes}.
\end{remark}
\fi

We proceed to (partially) interpret $\Sigma$-types: 
\begin{definition}[Dependent pair space]
\label{DefDependentPairSpace}
Given a DGwE $B : A \to \mathcal{PGE}$, the \emph{\bfseries dependent pair space} $\widehat{\Sigma}(A, B)$ of $A$ and $B$ is defined as follows:
\begin{itemize}

\item The game $\widehat{\Sigma}(A, B)$ is the subgame of $A \& (\uplus B)$ whose strategies $\langle \sigma, \tau \rangle$ satisfy $\tau : B(\sigma)$

\item For any objects $\langle \sigma, \tau \rangle, \langle \sigma', \tau' \rangle : \widehat{\Sigma}(A, B)$, the game $\langle \sigma, \tau \rangle =_{\widehat{\Sigma}(A, B)} \langle \sigma', \tau' \rangle$ is given by:
\begin{equation*}
\mathsf{st}(\langle \sigma, \tau \rangle =_{\widehat{\Sigma}(A, B)} \langle \sigma', \tau' \rangle) \stackrel{\mathrm{df. }}{=} \{ \langle \rho, \varrho \rangle \ \! | \ \! \rho : \sigma =_A \sigma', \varrho : B(\rho) \bullet \tau =_{B(\sigma')} \tau' \ \! \} 
\end{equation*}

\item The composition $\langle \rho_2, \varrho_2 \rangle \bullet \langle \rho_1, \varrho_1 \rangle : \langle \sigma_1, \tau_1 \rangle =_{\widehat{\Sigma}(A, B)} \langle \sigma_3, \tau_3 \rangle$ of morphisms $\langle \rho_1, \varrho_1 \rangle : \langle \sigma_1, \tau_1 \rangle =_{\widehat{\Sigma}(A, B)} \langle \sigma_2, \tau_2 \rangle$, $\langle \rho_2, \varrho_2 \rangle : \langle \sigma_2, \tau_2 \rangle =_{\widehat{\Sigma}(A, B)} \langle \sigma_3, \tau_3 \rangle$ is given by: 
\begin{equation*}
\langle \rho_2, \varrho_2 \rangle \bullet \langle \rho_1, \varrho_1 \rangle \stackrel{\mathrm{df. }}{=} \langle \rho_2 \bullet \rho_1, \varrho_2 \odot \varrho_1 \rangle
\end{equation*}
where $\varrho_2 \odot \varrho_1$ is defined by: 
\begin{equation*}
\varrho_2 \odot \varrho_1 \stackrel{\mathrm{df. }}{=} \varrho_2 \bullet B(\rho_2)^= \bullet \varrho_1
\end{equation*} 

\item The identity $\mathit{id}_{\langle \sigma, \tau \rangle}$ on each object $\langle \sigma, \tau \rangle : \widehat{\Sigma}(A, B)$ is the pairing $\langle \mathit{id}_\sigma, \mathit{id}_\tau \rangle$

\item The inverse $\langle \rho, \varrho \rangle^{-1}$ of each morphism $\langle \rho, \varrho \rangle$ is given by:
\begin{equation*}
\langle \rho, \varrho \rangle^{-1} \stackrel{\mathrm{df. }}{=} \langle \rho^{-1}, \varrho^\circleddash \rangle
\end{equation*}
where $\varrho^{\circleddash}$ is given by:
\begin{equation*}
\varrho^{\circleddash} \stackrel{\mathrm{df. }}{=} B(\rho^{-1})^= \bullet \varrho^{-1}.
\end{equation*}

\end{itemize}
\end{definition}

\begin{remark}
A morphism between $\langle \sigma, \tau \rangle, \langle \sigma', \tau' \rangle : \widehat{\Sigma}(A, B)$ is not a strategy on $\langle \sigma, \tau \rangle \stackrel{\sim}{\Rightarrow} \langle \sigma', \tau' \rangle$ but a pairing $\langle \rho, \varrho \rangle$ of morphisms $\rho : \sigma =_A \sigma'$, $\varrho : B(\rho) \bullet \tau =_{B(\sigma')} \tau'$ because it seems impossible to organize $\rho$ and $\varrho$ into a strategy on $\langle \sigma, \tau \rangle \stackrel{\sim}{\Rightarrow} \langle \sigma', \tau' \rangle$.
This is the main motivation to define a GwE $G = (G, =_G)$ such that $\sigma_1 =_G \sigma_2 \trianglelefteq G \stackrel{\sim}{\Rightarrow} G$ for all $\sigma_1, \sigma_2 : G$ does not necessarily hold.
\end{remark}

\begin{lemma}[Well-defined $\widehat{\Sigma}$]
\label{LemWellDefinedDependentPairSpace}
For any DGwE $B : A \to \mathcal{PGE}$, we have $\widehat{\Sigma}(A, B) \in \mathcal{PGE}$.
\end{lemma}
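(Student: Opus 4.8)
The plan is to verify, in order, that (a) the underlying games $\widehat{\Sigma}(A,B)$ and $=_{\widehat{\Sigma}(A,B)}$ are well-defined wf-predicative games, and (b) the data of Definition~\ref{DefDependentPairSpace} make $\widehat{\Sigma}(A,B)$ into a groupoid, i.e.\ the composition $\bullet$ is well-typed, associative and unital, and the stated inverses are genuine two-sided inverses, which in particular exhibits every $\langle\rho,\varrho\rangle$ as a morphism invertible with respect to $\bullet$ and hence confirms that $\widehat{\Sigma}(A,B)$ is a GwE. Point (a) is routine: $\uplus B=\int\{\,B(\sigma)\mid\sigma:A\,\}$ is a well-defined wf-predicative game, so $A\,\&\,(\uplus B)$ and its subgame $\widehat{\Sigma}(A,B)$ are wf-predicative games; moreover each pairing $\langle\rho,\varrho\rangle$ is a well-defined strategy (the product $\&$ of the strategies $\rho,\varrho$) lying on a subgame of $(=_A)\,\&\,(\int\{\,=_{B(\sigma)}\mid\sigma:A\,\})$, so $=_{\widehat{\Sigma}(A,B)}$ is a well-defined wf-predicative game by the same reasoning used for $=_{\widehat{\Pi}(A,B)}$ in Lemma~\ref{LemWellDefinedDependentFunctionSpace}.

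For (b), the first component of every morphism $\langle\rho,\varrho\rangle$ always behaves as in the groupoid $A$, so the substance lies in the second components and the operations $\odot$, $(-)^{\circleddash}$. For well-typedness of composition, given $\langle\rho_1,\varrho_1\rangle:\langle\sigma_1,\tau_1\rangle=_{\widehat{\Sigma}(A,B)}\langle\sigma_2,\tau_2\rangle$ and $\langle\rho_2,\varrho_2\rangle:\langle\sigma_2,\tau_2\rangle=_{\widehat{\Sigma}(A,B)}\langle\sigma_3,\tau_3\rangle$, I would apply the equality-preservation $B(\rho_2)^=$ of the functor $B(\rho_2):B(\sigma_2)\to B(\sigma_3)$ to $\varrho_1:B(\rho_1)\bullet\tau_1=_{B(\sigma_2)}\tau_2$, obtaining $B(\rho_2)^=\bullet\varrho_1:B(\rho_2)\bullet B(\rho_1)\bullet\tau_1=_{B(\sigma_3)}B(\rho_2)\bullet\tau_2$, compose with $\varrho_2$, and then use functoriality of $B$, i.e.\ $B(\rho_2)\bullet B(\rho_1)=B(\rho_2\bullet\rho_1)$, to conclude $\varrho_2\odot\varrho_1:B(\rho_2\bullet\rho_1)\bullet\tau_1=_{B(\sigma_3)}\tau_3$, so that $\langle\rho_2\bullet\rho_1,\varrho_2\odot\varrho_1\rangle$ has the required type. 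Associativity of $\odot$ is a short diagram chase combining functoriality of $B$, the identity $(\psi\bullet\phi)^==\psi^=\bullet\phi^=$ of Definition~\ref{DefPGE}, and the fact that each $B(\rho)^=$, being a functor arrow-map, preserves composition. The unit laws use additionally $B(\mathit{id}_\sigma)=\mathit{id}_{B(\sigma)}$ — so that $\langle\mathit{id}_\sigma,\mathit{id}_\tau\rangle$ is well-typed, since $B(\mathit{id}_\sigma)\bullet\tau=\tau$ — together with the preservation of identities by $B(\rho)^=$, which gives $B(\rho)^=\bullet\mathit{id}_\tau=\mathit{id}_{B(\rho)\bullet\tau}$.

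For inverses, I would first check that $\varrho^{\circleddash}=B(\rho^{-1})^=\bullet\varrho^{-1}$ has type $B(\rho^{-1})\bullet\tau'=_{B(\sigma)}\tau$, using $\rho^{-1}\bullet\rho=\mathit{id}_\sigma$ and functoriality of $B$, so $\langle\rho^{-1},\varrho^{\circleddash}\rangle$ is a well-typed morphism. Then a direct computation gives $\varrho^{\circleddash}\odot\varrho=B(\rho^{-1})^=\bullet(\varrho^{-1}\bullet\varrho)=\mathit{id}_\tau$ and, symmetrically, $\varrho\odot\varrho^{\circleddash}=\mathit{id}_{\tau'}$, where one uses that $B(\rho^{-1})^=$ preserves composition and that $B(\rho)^=\bullet B(\rho^{-1})^==(B(\rho)\bullet B(\rho^{-1}))^==B(\mathit{id}_{\sigma'})^==\mathit{der}_{=_{B(\sigma')}}$ acts trivially; combined with $\rho^{-1}\bullet\rho=\mathit{id}_\sigma$ and $\rho\bullet\rho^{-1}=\mathit{id}_{\sigma'}$ this yields $\langle\rho,\varrho\rangle^{-1}\bullet\langle\rho,\varrho\rangle=\mathit{id}_{\langle\sigma,\tau\rangle}$ and $\langle\rho,\varrho\rangle\bullet\langle\rho,\varrho\rangle^{-1}=\mathit{id}_{\langle\sigma',\tau'\rangle}$. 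Associativity and unitality of the first components are inherited verbatim from the groupoid $A$.

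I expect the main obstacle to be not conceptual but the careful bookkeeping of types in the second-component equations: at each step one must track which groupoid $B(\sigma_i)$ the morphisms $\varrho_i$, $B(\rho_j)^=\bullet\varrho_i$, $\varrho^{\circleddash}$ live in, and invoke the correct instance of functoriality of $B$ and of the functoriality of each individual $B(\rho)$. Once the types are pinned down, every required equation is forced by the groupoid and functor laws already established (Definition~\ref{DefPGE} and the fact that the $B(\rho)$ are morphisms in $\mathcal{PGE}$), so no new idea is needed.
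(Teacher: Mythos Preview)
Your proposal is correct and follows essentially the same route as the paper's own proof: well-foundedness is dispatched by the ``initial protocols'' argument, well-typedness of $\odot$ and of the identities uses exactly the two functoriality facts you isolate, associativity is the same diagram chase combining preservation of composition by $B(\rho_3)^{=}$ with $B(\rho_3)^{=}\bullet B(\rho_2)^{=}=B(\rho_3\bullet\rho_2)^{=}$, and the inverse computations for $\varrho^{\circleddash}$ are identical to the paper's. Your explicit mention of the identity $(\psi\bullet\phi)^{=}=\psi^{=}\bullet\phi^{=}$ from Definition~\ref{DefPGE} is precisely what the paper invokes when it writes ``by the functoriality of $B$'' at the corresponding steps.
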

\begin{proof}
First, the games $\widehat{\Sigma}(A, B), =_{\widehat{\Sigma}(A, B)}$ are clearly wf again by ``initial protocols''.
Also, it is straightforward to see that the composition and the identities are well-defined, where note that 
\begin{align*}
B(\rho_2)^= \bullet \varrho_1 &: B(\rho_2) \bullet B(\rho_1) \bullet \tau_1 =_{B(\sigma_3)} B(\rho_2) \bullet \tau_2 \wedge B(\rho_2) \bullet B(\rho_1) = B(\rho_2 \bullet \rho_1) \\
\mathit{id}_\tau &: \mathit{id}_{B(\sigma)} \bullet \tau =_{B(\sigma)} \tau \wedge B(\mathit{id}_\sigma) = \mathit{id}_{B(\sigma)}
\end{align*}
for any $\tau : B(\sigma)$, $\tau_1 : B(\sigma_1)$, $\tau_2 : B(\sigma_2)$, $\sigma, \sigma_1, \sigma_2, \sigma_3 : A$, $\rho_1 : \sigma_1 =_A \sigma_2$, $\rho_2 : \sigma_2 =_A \sigma_3$, $\varrho_1 : B(\rho_1) \bullet \tau_1 =_{B(\sigma_2)} \tau_2$.

For the associativity of the composition, additionally let $\tau_3 : B(\sigma_3)$, $\tau_4 : B(\sigma_4)$, $\sigma_4 : A$, $\varrho_2 : B(\rho_2) \bullet \tau_2 =_{B(\sigma_3)} \tau_3$, $\varrho_3 : B(\rho_3) \bullet \tau_3 =_{B(\sigma_4)} \tau_4$; it suffices to show $\varrho_3 \odot (\varrho_2 \odot \varrho_1)  = (\varrho_3 \odot \varrho_2) \odot \varrho_1$.
Then observe that:
\begin{align*}
\varrho_3 \odot (\varrho_2 \odot \varrho_1) &= \varrho_3 \odot (\varrho_2 \bullet B(\rho_2)^= \bullet \varrho_1) \\
&= \varrho_3 \bullet B(\rho_3)^= \bullet (\varrho_2 \bullet B(\rho_2)^= \bullet \varrho_1) \\
&= \varrho_3 \bullet (B(\rho_3)^= \bullet \varrho_2) \bullet (B(\rho_3)^= \bullet B(\rho_2)^= \bullet \varrho_1) \ \text{(by the functoriality of $B(\rho_3)$)} \\
&= (\varrho_3 \bullet B(\rho_3)^= \bullet \varrho_2) \bullet (B(\rho_3)^= \bullet B(\rho_2)^=) \bullet \varrho_1 \\
&= (\varrho_3 \odot \varrho_2) \bullet B(\rho_3 \bullet \rho_2)^= \bullet \varrho_1 \ \text{(by the functoriality of $B$)} \\
&= (\varrho_3 \odot \varrho_2) \odot \varrho_1.
\end{align*}

For the unit law of the identities, it suffices to show $\varrho_1 \odot \mathit{id}_{\tau_1} = \varrho_1$ and $\mathit{id}_{\tau_2} \odot \varrho_1 = \varrho_1$.
Then observe that:
\begin{equation*}
\varrho_1 \odot \mathit{id}_{\tau_1} = \varrho_1 \bullet B(\rho_1)^= \bullet \mathit{id}_{\tau_1} = \varrho_1 \bullet \mathit{id}_{B(\rho_1) \bullet \tau_1}^= = \varrho_1 
\end{equation*}
as well as:
\begin{equation*}
\mathit{id}_{\tau_2} \odot \varrho_1 = \mathit{id}_{\tau_2} \bullet B(\mathit{id}_{\sigma_2})^= \bullet \varrho_1 = \mathit{id}_{B(\sigma_2)}^= \bullet \varrho_1 = \varrho_1.
\end{equation*}

Finally, $\varrho^\circleddash$ for any morphism $\varrho : B(\rho) \bullet \tau =_{B(\sigma')} \tau'$, where $\sigma, \sigma' : A$, $\tau : B(\sigma)$, $\tau' : B(\sigma')$, $\rho : \sigma =_A \sigma'$, satisfies:
\begin{equation*}
\varrho^\circleddash = B(\rho^{-1})^= \bullet \varrho^{-1} : B(\rho^{-1}) \bullet \tau' =_{B(\sigma)} \tau
\end{equation*}
where note that $\varrho^{-1}$ is the inverse of $\varrho$ in $B(\sigma')$.
In fact, $\varrho^\circleddash$ is the inverse of $\varrho$ with respect to $\odot$: 
\begin{align*}
\varrho^\circleddash \odot \varrho &= B(\rho^{-1})^= \bullet \varrho^{-1} \bullet B(\rho^{-1})^= \bullet \varrho \\
&= B(\rho^{-1})^= \bullet (\varrho^{-1} \bullet \varrho) \ \text{(by the functoriality of $B(\rho^{-1})$)} \\
&= B(\rho^{-1})^= \bullet (\mathit{id}_{B(\rho) \bullet \tau}) \\
&= \mathit{id}_{B(\rho^{-1}) \bullet B(\rho) \bullet \tau} \ \text{(by the functoriality of $B(\rho^{-1})$)} \\
&= \mathit{id}_{B(\rho^{-1} \bullet \rho) \bullet \tau} \ \text{(by the functoriality of $B$)} \\
&= \mathit{id}_{B(\mathit{id}_\sigma) \bullet \tau} \\
&= \mathit{id}_{\mathit{id}_{B(\sigma)} \bullet \tau} \ \text{(by the functoriality of $B$)} \\
&= \mathit{id}_\tau
\end{align*}
and 
\begin{align*}
\varrho \odot \varrho^\circleddash &= \varrho \bullet B(\rho)^= \bullet B(\rho^{-1})^= \bullet \varrho^{-1} \\
&= \varrho \bullet B(\rho \bullet \rho^{-1})^= \bullet \varrho^{-1} \ \text{(by the functoriality of $B$)} \\
&= \varrho \bullet B(\mathit{id}_{\sigma'})^= \bullet \varrho^{-1} \\
&= \varrho \bullet \mathit{id}_{B(\sigma')}^= \bullet \varrho^{-1} \ \text{(by the functoriality of $B$)} \\
&= \varrho \bullet \varrho^{-1} \\
&= \mathit{id}_{\tau'}.
\end{align*}
From this, it follows that the pairing $\langle \rho^{-1}, \varrho^\circleddash \rangle$ is a well-defined morphism from $\langle \sigma, \tau \rangle$ to $\langle \sigma', \tau' \rangle$, and it is in fact the inverse of $\langle \rho, \varrho \rangle$, completing the proof.
\end{proof}

By a set-theoretic analogy, $\widehat{\Sigma} (A, B)$ represents the space of pairs $(a, b)$, where $a \in A$, $b \in B(a)$.
As in the case of $\Pi$-types, $\widehat{\Sigma}$ is an ``intensional refinement'' of the groupoid interpretation of $\Sigma$-types \cite{hofmann1998groupoid}.

At the end of the present section, we give a (partial) interpretation of \textsf{Id}-types:
\begin{definition}[Identity space]
Given $G \in \mathcal{PGE}$, $\sigma_1, \sigma_2 : G$, the \emph{\bfseries identity space} $\widehat{\mathsf{Id}}_G(\sigma_1, \sigma_2)$ between $\sigma_1$ and $\sigma_2$ is the discrete groupoid with the underlying game $\sigma_1 =_G \sigma_2$.
\end{definition}

\begin{lemma}[Well-defined $\widehat{\mathsf{Id}}$]
\label{LemWellDefinedIdentitySpace}
For any $G \in \mathcal{PGE}$, $\sigma_1, \sigma_2 : G$, we have $\widehat{\mathsf{Id}}_G(\sigma_1, \sigma_2) \in \mathcal{PGE}$.
\end{lemma}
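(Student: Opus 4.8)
The plan is to verify that $\widehat{\mathsf{Id}}_G(\sigma_1, \sigma_2)$ is an object of $\mathcal{PGE}$, i.e., a well-formed \textbf{wf-GwE}. By definition, $\widehat{\mathsf{Id}}_G(\sigma_1, \sigma_2)$ is the discrete groupoid whose underlying game is $\sigma_1 =_G \sigma_2 = \oint G(\sigma_1, \sigma_2)$; discreteness means the only morphisms are identities. So three things need checking: (i) the structure presented is genuinely a GwE in the sense of the definition, i.e., a groupoid whose objects are strategies on a fixed game and whose morphisms are invertible strategies; (ii) that fixed game and the associated equality-game are well-founded; and (iii) that the assignment $=_{\widehat{\mathsf{Id}}_G(\sigma_1, \sigma_2)}$ turning the discrete-groupoid structure into a game is legitimate.

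First I would observe that $\sigma_1 =_G \sigma_2 = \oint G(\sigma_1, \sigma_2)$ is a predicative game by the remark that predicative unions $\oint$ are always well-defined predicative games; its strategies are exactly the proofs $\rho : \sigma_1 =_G \sigma_2$, which by definition of a GwE $G$ are invertible strategies. Hence the objects of $\widehat{\mathsf{Id}}_G(\sigma_1,\sigma_2)$, namely $\mathsf{st}(\sigma_1 =_G \sigma_2)$, are strategies on the fixed game $\sigma_1 =_G \sigma_2$. For the morphisms: the discrete groupoid has exactly the identity strategies $\mathit{id}_\rho = \mathit{der}_\rho$ on each object $\rho : \sigma_1 =_G \sigma_2$, and derelictions are invertible (they are the identities of $\mathcal{WPG}$), so the morphisms are indeed invertible strategies. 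The groupoid axioms are trivial for a discrete category: composition, identity and inverse all collapse to the identity strategies, and associativity, unit and inverse laws hold vacuously. Thus $\widehat{\mathsf{Id}}_G(\sigma_1,\sigma_2)$ is a GwE, specified by the pair $(\sigma_1 =_G \sigma_2,\ =_{\widehat{\mathsf{Id}}_G(\sigma_1,\sigma_2)})$ where $\rho =_{\widehat{\mathsf{Id}}_G(\sigma_1,\sigma_2)} \rho'$ is $\oint \{\mathit{id}_\rho\}$ if $\rho = \rho'$ and $\oint \emptyset$ otherwise.

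It remains to check well-foundedness of both $\sigma_1 =_G \sigma_2$ and $=_{\widehat{\mathsf{Id}}_G(\sigma_1,\sigma_2)}$. Since $G$ is a wf-GwE, the game $=_G \ = \int\{\sigma_1' =_G \sigma_2' \mid \sigma_1',\sigma_2' : G\}$ is wf; the enabling relation of $\sigma_1 =_G \sigma_2 = \oint G(\sigma_1,\sigma_2)$ is, up to the innocuous ``initial protocol'' moves $q, \mathcal{N}(-)$ (which by the remark on predicative games do not affect ranks or well-foundedness), a sub-relation of that of $=_G$, hence has no infinite $\star \vdash m_0 \vdash m_1 \vdash \cdots$ chain; so $\sigma_1 =_G \sigma_2$ is wf. For $=_{\widehat{\mathsf{Id}}_G(\sigma_1,\sigma_2)}$: it is $\int\{\oint\{\mathit{id}_\rho\} \mid \rho : \sigma_1 =_G \sigma_2\}$ (plus the empty hom-games, which contribute nothing), and each $\mathit{id}_\rho = \mathit{der}_\rho$ is noetherian and innocent with a very simple arena (by the argument in the proof of Lemma~\ref{LemWellDefinedDerelictions}, its enabling relation mirrors a chain in $\rho$'s arena); since $\rho$ lives in the wf game $\sigma_1 =_G \sigma_2$, the arena of $\mathit{der}_\rho$ is wf, and a parallel union $\int$ of wf games is wf. Hence $=_{\widehat{\mathsf{Id}}_G(\sigma_1,\sigma_2)}$ is wf, so $\widehat{\mathsf{Id}}_G(\sigma_1,\sigma_2) \in \mathcal{PGE}$.

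I do not anticipate a genuine obstacle here — the statement is essentially bookkeeping, and the only place requiring care is the well-foundedness of the equality-game, where one must confirm that passing to a discrete groupoid and then re-encoding its (trivial) hom-structure as a game via $\int$ and $\oint$ does not introduce infinite enabling chains; this follows from well-foundedness being inherited by subrelations and preserved by $\int$, together with the dereliction argument of Lemma~\ref{LemWellDefinedDerelictions}.
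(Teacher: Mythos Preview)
Your proposal is correct and essentially fills in the details that the paper omits: the paper's own proof is simply ``Straightforward.'' Your unpacking of the groupoid axioms for the discrete case and the well-foundedness check (via the subgame relation to $=_G$ and the dereliction argument for the trivial hom-games) is exactly the kind of verification the paper leaves to the reader.
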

\begin{proof}
Straightforward.
\end{proof}

Note that we simply ``truncate'' higher morphisms as the groupoid model \cite{hofmann1998groupoid}.
To overcome this point, we may generalize the extensional structure of GwEs to \emph{$\omega$-groupoids}, and define the construction $\widehat{\textsf{Id}}$ to ``ascend by one-step'' the infinite hierarchy of the $\omega$-groupoid structure. 
Nevertheless, we leave this point as future work.

\section{Game-semantic Groupoid Interpretation of MLTT}
\label{InterpretationOfMLTT}
This section is the climax of the paper: It gives an interpretation of \textsf{MLTT} by GwEs.
Specifically, we equip the category $\mathcal{PGE}$ with the structure of a \emph{category with families (CwF)} in Section~\ref{CwFPGE}, and further game-theoretic $\Pi$-, $\Sigma$- and \textsf{Id}-types in Section~\ref{GameTheoreticTypeFormers}.

\subsection{The Game-semantic Category with Families}
\label{CwFPGE}
CwFs \cite{dybjer1996internal,hofmann1997syntax} are an abstract semantics for \textsf{MLTT}.
Roughly, a CwF is a category $\mathcal{C}$ with additional structures to interpret judgements common to all types as: 
\begin{align*}
\mathsf{\vdash \Gamma \ ctx} &\mapsto \Gamma \in \mathcal{C} \\
\mathsf{\Gamma \vdash A \ type} &\mapsto A \in \mathit{Ty}(\Gamma) \\
\mathsf{\Gamma \vdash a : A} &\mapsto \textstyle a \in \mathit{Tm}(\Gamma, A)
\end{align*}
where $\mathit{Ty}(\Gamma)$, $\mathit{Tm}(\Gamma, A)$ are assigned sets indexed by $\Gamma \in \mathcal{C}$, $A \in \mathit{Ty}(\Gamma)$.
To interpret specific types such as $\Pi$-, $\Sigma$- and \textsf{Id}-types, we need to equip $\mathcal{C}$ with the corresponding \emph{semantic type formers} \cite{hofmann1997syntax}.
By the \emph{soundness} of CwFs \cite{hofmann1997syntax}, this suffices to give a model of \textsf{MLTT} in $\mathcal{C}$, i.e., each context, type and term is interpreted in $\mathcal{C}$, and each judgmental equality is reflected by the corresponding semantic equality; see \cite{hofmann1997syntax} for the details.
We employ this framework because it is in general easier to prove that a structure forms a CwF than to directly show that it is a model of \textsf{MLTT}.

We first recall the definition of CwFs; our presentation follows that of \cite{hofmann1997syntax}.
\begin{definition}[\textsf{CwF}s \cite{dybjer1996internal,hofmann1997syntax}]
A \emph{\bfseries category with families (CwF)} is a structure $\mathcal{C} = (\mathcal{C}, \mathit{Ty}, \mathit{Tm}, \_\{\_\}, T, \_.\_, \mathit{p}, \mathit{v}, \langle\_,\_\rangle_\_)$,
where:
\begin{itemize}

\item $\mathcal{C}$ is a category of \emph{\bfseries contexts} and \emph{\bfseries context morphisms}

\item $\mathit{Ty}$ assigns, to each object $\Gamma \in \mathcal{C}$, a set $\mathit{Ty}(\Gamma)$ of \emph{\bfseries types} in the context $\Gamma$

\item $\mathit{Tm}$ assigns, to each pair of a context $\Gamma \in \mathcal{C}$ and a type $A \in \mathit{Ty}(\Gamma)$, a set $\mathit{Tm}(\Gamma, A)$ of \emph{\bfseries terms} of type $A$ in the context $\Gamma$

\item For each morphism $\phi : \Delta \to \Gamma$ in $\mathcal{C}$, $\_\{\_\}$ induces a function $\_\{\phi\} : \mathit{Ty}(\Gamma) \to \mathit{Ty}(\Delta)$
and a family $(\_\{\phi\}_A : \mathit{Tm}(\Gamma, A) \to \mathit{Tm}(\Delta, A\{\phi\}))_{A \in \mathit{Ty}(\Gamma)}$ of functions, called the \emph{\bfseries substitutions}

\item $T \in \mathcal{C}$ is a terminal object

\item $\_ . \_$ assigns, to each pair of a context $\Gamma \in \mathcal{C}$ and a type $A \in \mathit{Ty}(\Gamma)$, a context $\Gamma . A \in \mathcal{C}$, called the \emph{\bfseries comprehension} of $A$

\item $\mathit{p}$ associates each pair of a context $\Gamma \in \mathcal{C}$ and a type $A \in \mathit{Ty}(\Gamma)$ with a morphism $\mathit{p}(A) : \Gamma . A \to \Gamma$
in $\mathcal{C}$, called the \emph{\bfseries first projection} associated to $A$

\item $\mathit{v}$ associates each pair of a context $\Gamma \in \mathcal{C}$ and a type $A \in \mathit{Ty}(\Gamma)$ with a term $\mathit{v}_A \in \mathit{Tm}(\Gamma . A, A\{\mathit{p}(A)\})$, called the \emph{\bfseries second projection} associated to $A$

\item $\langle \_, \_ \rangle_\_$ assigns, to each triple of a morphism $\phi : \Delta \to \Gamma$ in $\mathcal{C}$, a type $A \in \mathit{Ty}(\Gamma)$ and a term $\tau \in \mathit{Tm}(\Delta, A\{\phi\})$, a morphism $\langle \phi, \tau \rangle_A : \Delta \to \Gamma . A$
in $\mathcal{C}$, called the \emph{\bfseries extension} of $\phi$ by $\tau$
\end{itemize}
that satisfies the following axioms:
\begin{itemize}

\item {\bfseries \sffamily Ty-Id.} $A \{ \textit{id}_\Gamma \} = A$

\item {\bfseries \sffamily Ty-Comp.} $A \{ \phi \circ \psi \} = A \{ \phi \} \{ \psi \}$

\item {\bfseries \sffamily Tm-Id.} $\varphi \{ \textit{id}_\Gamma \} = \varphi$

\item {\bfseries \sffamily Tm-Comp.} $\varphi \{ \phi \circ \psi \} = \varphi \{ \phi \} \{ \psi \}$

\item {\bfseries \sffamily Cons-L.} $\mathit{p}(A) \circ \langle \phi, \tau \rangle_A = \phi$

\item {\bfseries \sffamily Cons-R.} $\mathit{v}_A \{ \langle \phi, \tau \rangle_A \} = \tau$

\item {\bfseries \sffamily Cons-Nat.} $\langle \phi, \tau \rangle_A \circ \psi = \langle \phi \circ \psi, \tau \{ \psi \} \rangle_A$

\item {\bfseries \sffamily Cons-Id.} $\langle \mathit{p}(A), \mathit{v}_A \rangle_A = \textit{id}_{\Gamma . A}$

\end{itemize}
for all $\Gamma, \Delta, \Theta \in \mathcal{C}$, $A \in \mathit{Ty}(\Gamma)$, $\phi : \Delta \to \Gamma$, $\psi : \Theta \to \Delta$, $\varphi \in \mathit{Tm}(\Gamma, A)$, $\tau \in \mathit{Tm}(\Delta, A\{ \phi \})$.
\end{definition}

We now give our CwF of game-semantic groupoids and functors:
\begin{definition}
The CwF $\mathcal{PGE} = (\mathcal{PGE}, \mathit{Ty}, \mathit{Tm}, \_\{ \_ \}, I, \mathit{p}, \mathit{v}, \langle \_, \_ \rangle_{\_})$ is defined by:
\begin{itemize}

\item The underlying category $\mathcal{PGE}$ has been defined in Definition~\ref{DefPGE}.

\item Given $\Gamma \in \mathcal{PGE}$, $\mathit{Ty}(\Gamma) \stackrel{\mathrm{df. }}{=} \mathscr{D}(\Gamma)$, and given $A \in \mathscr{D}(\Gamma)$, $\mathit{Tm}(\Gamma, A) \stackrel{\mathrm{df. }}{=} \mathsf{st}(\widehat{\Pi}(\Gamma, A))$.

\item For each $\phi : \Delta \to \Gamma$ in $\mathcal{PGE}$, the function $\_\{ \phi \} : \mathscr{D}(\Gamma) \to \mathscr{D}(\Delta)$ is defined by: 
\begin{equation*}
A \{ \phi \} \stackrel{\mathrm{df. }}{=} A \circ \mathsf{fun}(\phi)
\end{equation*} 
i.e., the composition of functors for all $A \in \mathscr{D}(\Gamma)$, and the function $\_\{ \phi \}_A : \mathsf{st}(\widehat{\Pi}(\Gamma, A)) \to \mathsf{st}(\widehat{\Pi}(\Delta, A \{ \phi \}))$ for each $A \in \mathscr{D}(\Gamma)$ is defined by: 
\begin{align*}
\varphi \{ \phi \} &\stackrel{\mathrm{df. }}{=} \varphi \bullet \phi \\
\varphi \{ \phi \}^= &\stackrel{\mathrm{df. }}{=} \varphi^= \bullet \phi^=
\end{align*}
for all $\varphi : \widehat{\Pi}(\Gamma, A)$.

\item $I$ is the discrete GwE on the terminal game $I = (\emptyset, \emptyset, \emptyset, \{ \bm{\epsilon}, q_I, q_I \mathcal{N}(\{ \bm{\epsilon} \}) \})$.

\item $\Gamma . A \stackrel{\mathrm{df. }}{=} \widehat{\Sigma}(\Gamma, A)$, and $\mathit{p}(A) : \widehat{\Sigma}(\Gamma, A) \to \Gamma$, $\mathit{v}_A : \widehat{\Pi}(\widehat{\Sigma}(\Gamma, A), A\{ \mathit{p}(A) \})$ are defined by:
\begin{align*}
\mathit{p}(A) &\stackrel{\mathrm{df. }}{=} \&_{\langle \gamma, \sigma \rangle : \widehat{\Sigma}(\Gamma, A)} \mathit{der}_\gamma \\
\mathit{p}(A)^= &\stackrel{\mathrm{df. }}{=} \&_{\langle \rho, \varrho \rangle : \ =_{\widehat{\Sigma}(\Gamma, A)}} \mathit{der}_\rho \\
\mathit{v}_A &\stackrel{\mathrm{df. }}{=} \&_{\langle \gamma, \sigma \rangle : \widehat{\Sigma}(\Gamma, A)} \mathit{der}_\sigma \\
\mathit{v}_A^= &\stackrel{\mathrm{df. }}{=} \&_{\langle \rho, \varrho \rangle : \ =_{\widehat{\Sigma}(\Gamma, A)}} \mathit{der}_\varrho
\end{align*}
up to tags for disjoint union.

\item Given $\tau : \widehat{\Pi}(\Delta, A \{ \phi \})$, the extension $\langle \phi, \tau \rangle_A : \Delta \to \widehat{\Sigma}(\Gamma, A)$ is the pairing $\langle \phi, \tau \rangle$ equipped with the equality preservation $\langle \phi, \tau \rangle^= \stackrel{\mathrm{df. }}{=} \langle \phi^=, \tau^= \rangle$.
\end{itemize}
\end{definition}

\begin{theorem}[Well-defined $\mathcal{PGE}$]
\label{ThmPGE}
The structure $\mathcal{PGE}$ forms a well-defined CwF.
\end{theorem}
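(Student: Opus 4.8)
The plan is to check, one by one, that the data $\mathit{Ty}$, $\mathit{Tm}$, $\_\{\_\}$, $I$, $\mathit{p}$, $\mathit{v}$, $\langle\_,\_\rangle_\_$ are all well-defined, and then that the eight CwF axioms hold strictly. Most well-definedness is already furnished by Lemmas~\ref{LemWellDefinedDependentFunctionSpace}, \ref{LemWellDefinedDependentPairSpace}, \ref{LemWellDefinedDerelictions}, \ref{LemWellDefinedPairingTensorPromotion} together with the fact that $\mathcal{PGE}$ is a category; the genuinely new points are substitution on types, substitution on terms, and that $I$ is terminal. For substitution on types, given $\phi : \Delta \to \Gamma$ and $A \in \mathscr{D}(\Gamma)$, I would observe that $A \circ \mathsf{fun}(\phi) : \Delta \to \mathcal{PGE}$ is a functor since $\mathsf{fun}(\phi)$ is, and then check that it is uniform: because $\mathsf{fun}(\phi)$ sends $\delta : \Delta$ to $\phi \bullet \delta$ and $\rho : \delta =_\Delta \delta'$ to $\phi^= \bullet \rho$, we have $A\{\phi\}(\rho)_\tau = A(\phi^=\bullet\rho)_\tau$ and $A\{\phi\}(\rho)^=_\varrho = A(\phi^=\bullet\rho)^=_\varrho$, so the position-level independence required of $A\{\phi\}$ is precisely the one already known for $A$. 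For substitution on terms, given $\varphi : \widehat{\Pi}(\Gamma,A)$, I would show that $\varphi \bullet \phi$ together with $(\varphi\bullet\phi)^= \stackrel{\mathrm{df. }}{=} \varphi^= \bullet \phi^=$ satisfies the four conditions of Definition~\ref{DefDependentFunctionSpace}; each one reduces, using associativity of $\bullet$, to the corresponding condition for $\varphi$ at the $\Gamma$-morphisms $\phi\bullet\delta$ and $\phi^=\bullet\rho$, combined with the ep-strategy axioms for $\phi$ and the functoriality of $A$ (this is in effect the inlined ``dependent composition'' argument that was sketched earlier). That $I$ is terminal is immediate: the game $I$ has no moves, so both $\Gamma \Rightarrow I$ and the equality game $(=_\Gamma) \Rightarrow (=_I)$ carry a unique strategy, and the resulting pair is vacuously a total, innocent, wb, noetherian ep-strategy. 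Finally, $\mathit{p}(A)$, $\mathit{v}_A$ and $\langle\phi,\tau\rangle_A$ are built out of derelictions and pairings, hence total, innocent, wb and noetherian by Lemmas~\ref{LemWellDefinedDerelictions}, \ref{LemWellDefinedPairingTensorPromotion}, and the induced functors send $\langle\gamma,\sigma\rangle$ and $\langle\rho,\varrho\rangle$ to their components, so that e.g. $\mathit{v}_A \bullet \langle\gamma,\sigma\rangle = \sigma : A(\gamma) = A\{\mathit{p}(A)\}(\langle\gamma,\sigma\rangle)$ as the types demand.

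For the eight axioms, \textsf{Ty-Id}, \textsf{Ty-Comp}, \textsf{Tm-Id} and \textsf{Tm-Comp} all reduce to the functoriality of $\mathsf{fun}(\_)$ --- that $\mathsf{fun}(\mathit{der}_\Gamma)$ is the identity functor and $\mathsf{fun}(\phi\bullet\psi) = \mathsf{fun}(\phi)\circ\mathsf{fun}(\psi)$ --- which holds because $\bullet$ is associative with unit $\mathit{der}$ on both the underlying strategies and the equality-preservations; e.g. $A\{\mathit{id}_\Gamma\} = A \circ \mathsf{fun}(\mathit{der}_\Gamma) = A$ and $A\{\phi\bullet\psi\} = A \circ \mathsf{fun}(\phi) \circ \mathsf{fun}(\psi) = A\{\phi\}\{\psi\}$, and likewise for terms. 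The four \textsf{Cons-}axioms are the usual universal-property identities for $\widehat{\Sigma}(\Gamma,A)$: post-composing the pairing $\langle\phi,\tau\rangle$ with the first (resp. second) projection recovers $\phi$ (resp. $\tau$); pairing commutes with pre-composition; and $\langle\mathit{p}(A),\mathit{v}_A\rangle$ equals $\mathit{der}_{\widehat{\Sigma}(\Gamma,A)}$. On the underlying strategies these are the standard facts about the product $\&$ and composition of strategies recalled in Sections~\ref{McGamesAndStrategies}--\ref{DependentGamesWithEqualities}; on the equality-preservation components one checks the same identities, now using that a morphism of $\widehat{\Sigma}(\Gamma,A)$ is a pairing $\langle\rho,\varrho\rangle$ and that $\mathit{p}(A)^= = \&_{\langle\rho,\varrho\rangle}\mathit{der}_\rho$, $\mathit{v}_A^= = \&_{\langle\rho,\varrho\rangle}\mathit{der}_\varrho$ project onto the two components (so the $\odot$-composition of $\widehat{\Sigma}$ plays no role in these particular equations). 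Because every construction here is defined directly on sets of positions, all identities hold on the nose, not merely up to isomorphism.

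The step I expect to be most delicate is not deep but a matter of bookkeeping: every axiom must be verified twice, once for the strategy and once for its equality-preservation, and for the \textsf{Cons-}axioms the latter lives over the equality game $=_{\widehat{\Sigma}(\Gamma,A)}$, whose strategies have the nonstandard shape $\langle\rho,\varrho\rangle$ rather than being invertible strategies $\langle\sigma,\tau\rangle \stackrel{\sim}{\Rightarrow} \langle\sigma',\tau'\rangle$, so the product-of-PLI and $\odot$-composition structure of $\widehat{\Sigma}$ must be tracked carefully throughout. A secondary subtlety is confirming that uniformity of DGwEs is stable under the substitution functor $A \mapsto A \circ \mathsf{fun}(\phi)$, which requires making explicit exactly how $\mathsf{fun}(\phi)$ acts on positions. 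Once these are in hand, the remainder is routine verification with the constructions of Sections~\ref{McGamesAndStrategies}--\ref{DependentGamesWithEqualities}.
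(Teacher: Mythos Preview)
Your proposal is correct and follows essentially the same route as the paper: reduce well-definedness to Lemmata~\ref{LemWellDefinedDependentFunctionSpace} and \ref{LemWellDefinedDependentPairSpace}, treat substitution on types and terms separately, and then verify the eight axioms componentwise for both the underlying strategy and its equality-preservation. Your extra remarks on uniformity of $A\{\phi\}$ and terminality of $I$ are points the paper leaves implicit.

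One refinement of emphasis is worth flagging. You locate the delicate $\odot$-bookkeeping in the \textsf{Cons-}axioms and remark that it ``plays no role in these particular equations''; that is true of the axioms themselves, but the place where $\odot$ genuinely enters is earlier, in showing that the extension $\langle\phi,\tau\rangle_A$ is an ep-strategy at all. Concretely, to see that $\langle\phi^=,\tau^=\rangle$ preserves composition in $\widehat{\Sigma}(\Gamma,A)$ one must compute
\[
\langle\phi,\tau\rangle^= \bullet (\vartheta' \bullet \vartheta)
= \langle \phi^=\bullet(\vartheta'\bullet\vartheta),\ \tau^=\bullet(\vartheta'\bullet\vartheta)\rangle
\]
and then use the \emph{twisted} compatibility axiom $\tau^=\bullet(\vartheta'\bullet\vartheta) = (\tau^=\bullet\vartheta')\bullet(A\{\phi\}(\vartheta')^= \bullet \tau^=\bullet\vartheta)$ from Definition~\ref{DefDependentFunctionSpace} to rewrite the second component as $(\tau^=\bullet\vartheta')\odot(\tau^=\bullet\vartheta)$, so that the result is the $\widehat{\Sigma}$-composite $(\langle\phi,\tau\rangle^=\bullet\vartheta')\bullet(\langle\phi,\tau\rangle^=\bullet\vartheta)$. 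The paper carries this out explicitly; in your outline it is hidden behind the phrase ``the induced functors send $\langle\rho,\varrho\rangle$ to their components''. Once this step is made explicit, everything else in your plan goes through exactly as you describe.
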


\begin{proof}
By lemmata~\ref{LemWellDefinedDependentFunctionSpace}, \ref{LemWellDefinedDependentPairSpace}, it is immediate to see that each component of $\mathcal{PGE}$ is well-defined except the substitution of terms and the extension.
Let $\Gamma, \Delta \in \mathcal{PGE}$, $A \in \mathscr{D}(\Gamma)$, $\phi : \Delta \to \Gamma$, $\varphi : \widehat{\Pi}(\Gamma, A)$.
It has been shown in \cite{yamada2016game} that $\varphi \{ \phi \} = \varphi \bullet \phi$ forms a strategy on the game $\widehat{\Pi}(\Delta, A \{ \phi \})$.
The equality-preservation $\varphi \{ \phi \}^= = \varphi^= \bullet \phi^= : \ =_\Delta \ \Rightarrow \ \uplus =_A$ satisfies:
\begin{equation*}
\varphi^= \bullet (\phi^= \bullet \vartheta) : A(\phi^= \bullet \vartheta) \bullet (\varphi \bullet (\phi \bullet \delta)) =_{A(\phi \bullet \delta')} \varphi \bullet (\phi \bullet \delta')
\end{equation*}
i.e.,
\begin{equation*}
(\varphi \bullet \phi)^= \bullet \vartheta : A \{ \phi \} (\vartheta) \bullet ((\varphi \bullet \phi) \bullet \delta) =_{A(\phi \bullet \delta')} (\varphi \bullet \phi) \bullet \delta'
\end{equation*}
for all $\delta, \delta' : \Delta$, $\vartheta : \delta =_\Delta \delta'$.
Therefore we may conclude that $\varphi \{ \phi \} = (\varphi \bullet \phi, \varphi^= \bullet \phi^=) : \widehat{\Pi}(\Delta, A \{ \phi \})$, showing that the substitution of terms is well-defined.

Next, for the context extension, let $\tau : \widehat{\Pi}(\Delta, A \{ \phi \})$.
Again, it has been shown in \cite{yamada2016game} that the pairing $\langle \phi, \tau \rangle$ forms a strategy on the game $\Delta \Rightarrow \widehat{\Sigma}(\Gamma, A)$; thus, it remains to show that it is ep. Then for any $\delta, \delta' : \Delta$, $\vartheta : \delta =_\Delta \delta'$, we have:
\begin{equation*}
\langle \phi, \tau \rangle^= \bullet \vartheta = \langle \phi^=, \tau^= \rangle \bullet \vartheta = \langle \tau^= \bullet \vartheta, \tau^= \bullet \vartheta \rangle
\end{equation*}
where $\phi^= \bullet \vartheta : \phi \bullet \delta =_\Gamma \phi \bullet \delta'$, $\tau^= \bullet \vartheta : A(\tau^= \bullet \vartheta) \bullet (\tau \bullet \delta) =_{A(\tau \bullet \delta')} \tau \bullet \delta'$, whence
\begin{equation*}
\langle \phi, \psi \rangle^= \bullet \vartheta : \langle \phi, \tau \rangle \bullet \delta =_{\widehat{\Sigma}(\Gamma, A)} \langle \phi, \tau \rangle \bullet \delta'
\end{equation*}
which shows that $\langle \phi, \tau \rangle^=$ preserves domain and codomain.
It also preserves composition:
\begin{align*}
\langle \phi, \tau \rangle^= \bullet (\vartheta' \bullet \vartheta) &= \langle \phi^= \bullet (\vartheta' \bullet \vartheta), \tau^= \bullet (\vartheta' \bullet \vartheta) \rangle \\
&= \langle (\phi^= \bullet \vartheta') \bullet (\phi^= \bullet \vartheta), (\tau^= \bullet \vartheta') \bullet (A\{\phi\}(\vartheta')^= \bullet \tau^= \bullet \vartheta) \rangle \\
&= \langle (\phi^= \bullet \vartheta') \bullet (\phi^= \bullet \vartheta), (\tau^= \bullet \vartheta') \bullet (A(\phi^= \bullet \vartheta')^= \bullet \tau^= \bullet \vartheta) \rangle \\
&= \langle (\phi^= \bullet \vartheta') \bullet (\phi^= \bullet \vartheta), (\tau^= \bullet \vartheta') \odot (\tau^= \bullet \vartheta) \rangle \\
&= \langle \phi^= \bullet \vartheta', \tau^= \bullet \vartheta' \rangle \bullet \langle \phi^= \bullet \vartheta, \tau^= \bullet \vartheta \rangle \\
&= (\langle \phi, \tau \rangle^= \bullet \vartheta') \bullet (\langle \phi, \tau \rangle^= \bullet \vartheta)
\end{align*}
for any $\delta'' : \Delta$, $\vartheta' : \delta' =_\Delta \delta''$.
Of course, it preserves identities as well:
\begin{align*}
\langle \phi, \tau \rangle^= \bullet \mathit{id}_\delta &= \langle \phi^= \bullet \mathit{id}_\delta, \tau^= \bullet \mathit{id}_\delta \rangle \\
&= \langle \mathit{id}_{\phi \bullet \delta}, \mathit{id}_{\tau \bullet \delta} \rangle \\
&= \mathit{id}_{\langle \phi \bullet \delta, \tau \bullet \delta \rangle} \\
&= \mathit{id}_{\langle \phi, \tau \rangle \bullet \delta}
\end{align*}
for all $\delta : \Delta$.
Therefore the pair $\langle \phi, \tau \rangle = (\langle \phi, \tau \rangle, \langle \phi^=, \tau^= \rangle)$ is a morphism $\Delta \to \widehat{\Sigma}(\Gamma, A)$ in $\mathcal{PGE}$, showing that the extension is well-defined.

Finally, we verify the required equations. Let $\Gamma, \Delta, \Theta \in \mathcal{PGE}$, $A \in \mathscr{D}(\Gamma)$, $\phi : \Delta \to \Gamma$, $\psi : \Theta \to \Delta$, $\varphi : \widehat{\Pi}(\Gamma, A)$, $\tau : \widehat{\Pi}(\Delta, A \{ \phi \})$. 
\begin{itemize}

\item \textsf{\bfseries Ty-Id.} $A\{ \mathit{id}_\Gamma \} = A \circ \mathsf{fun}(\mathit{der}_\Gamma) = A$

\item \textsf{\bfseries Ty-Comp.} $A\{ \phi \bullet \psi \} = A \circ \mathsf{fun}(\phi \bullet \psi) = A \circ (\mathsf{fun}(\phi) \circ \mathsf{fun}(\psi)) = (A \circ \mathsf{fun}(\phi)) \circ \mathsf{fun}(\psi) = A \{ \phi \} \circ \mathsf{fun}(\psi) = A \{ \phi \} \{ \psi \}$

\item \textsf{\bfseries Tm-Id.} $\varphi \{ \mathit{id}_\Gamma \} = \varphi \bullet \mathit{der}_\Gamma = \varphi \wedge \varphi \{ \mathit{id}_\Gamma \}^= = \varphi^= \bullet \mathit{der}_\Gamma^= = \varphi^= \bullet \mathit{der}_{=_\Gamma} = \varphi^=$

\item \textsf{\bfseries Tm-Comp.} $\varphi \{ \phi \bullet \psi \} = \varphi \bullet (\phi \bullet \psi) = (\varphi \bullet \phi) \bullet \psi = \varphi \{ \phi \} \{ \psi \} \wedge \varphi \{ \phi \bullet \psi \}^= = \varphi^= \bullet (\phi \bullet \psi)^= = \varphi^= \bullet (\phi^= \bullet \psi^=) = (\varphi^= \bullet \phi^=) \bullet \psi^= = \varphi \{ \phi \}^= \bullet \psi^= = \varphi \{ \phi \} \{ \psi \}^=$

\item \textsf{\bfseries Cons-L.} $\mathit{p}(A) \bullet \langle \phi, \tau \rangle = \phi \wedge \mathit{p}(A)^= \bullet \langle \phi, \tau \rangle^= = \mathit{p}(A)^= \bullet \langle \phi^=, \tau^= \rangle = \phi^=$

\item \textsf{\bfseries Cons-R.} $\mathit{v}_A \{ \langle \phi, \tau \rangle \} = \mathit{v}_A \bullet \langle \phi, \tau \rangle = \tau \wedge \mathit{v}_A \{ \langle \phi, \tau \rangle \}^= = \mathit{v}_A^= \bullet \langle \phi, \tau \rangle^= = \mathit{v}_A^= \bullet \langle \phi^=, \tau^= \rangle = \tau^=$

\item \textsf{\bfseries Cons-Nat.} $\langle \phi, \tau \rangle \bullet \psi = \langle \phi \bullet \psi, \tau \bullet \psi \rangle \wedge \langle \phi, \tau \rangle^= \bullet \psi^= = \langle \phi^=, \tau^= \rangle \bullet \psi^= = \langle \phi^= \bullet \psi^=, \tau^= \bullet \psi^= \rangle = \langle (\phi \bullet \psi)^=, \tau \{ \psi \}^= \rangle = \langle \phi \bullet \psi, \tau \{ \psi \} \rangle^=$

\item \textsf{\bfseries Cons-Id.} $\langle \mathit{p}(A), \mathit{v}_A \rangle = \mathit{der}_{\widehat{\Sigma}(\Gamma, A)} \wedge \langle \mathit{p}(A), \mathit{v}_A \rangle^= = \langle \mathit{p}(A)^=, \mathit{v}_A^= \rangle = \mathit{der}_{=_{\widehat{\Sigma}(\Gamma, A)}}$

\end{itemize}
which completes the proof.
\end{proof}

\subsection{Game-semantic Type Formers}
\label{GameTheoreticTypeFormers}
As stated before, a CwF gives only an interpretation of the syntax common to all types.
Thus, for a ``full interpretation'' of \textsf{MLTT}, we need to equip $\mathcal{PGE}$ with \emph{semantic type formers}. 
We address this point in the present section.

\subsubsection{Game-semantic Dependent Function Types}
\label{Prod}
We begin with $\Pi$-types. First, we recall the general, categorical interpretation of $\Pi$-types.
\begin{definition}[\textsf{CwF}s with $\Pi$-types \cite{hofmann1997syntax}]
A \textsf{CwF} $\mathcal{C}$ \emph{\bfseries supports $\bm{\Pi}$-types} if:
\begin{itemize}

\item {\bfseries \sffamily $\bm{\Pi}$-Form.} For any $\Gamma \in \mathcal{C}$, $A \in \mathit{Ty}(\Gamma)$, $B \in \mathit{Ty}(\Gamma . A)$, there is a type
\begin{equation*}
\textstyle \Pi (A, B) \in \mathit{Ty}(\Gamma).
\end{equation*}

\item {\bfseries \sffamily $\bm{\Pi}$-Intro.} If $\varphi \in \mathit{Tm}(\Gamma . A, B)$, then there is a term
\begin{equation*}
\textstyle \lambda_{A, B} (\varphi) \in \mathit{Tm}(\Gamma, \Pi (A, B)). 
\end{equation*}

\item {\bfseries \sffamily $\bm{\Pi}$-Elim.} If $\kappa \in \mathit{Tm}(\Gamma, \Pi (A, B))$, $\tau \in \mathit{Tm}(\Gamma, A)$, then there is a term 
\begin{equation*}
\textstyle \mathit{App}_{A, B} (\kappa, \tau) \in \mathit{Tm}(\Gamma, B\{ \overline{\tau} \}) 
\end{equation*}
where $\overline{\tau} \stackrel{\mathrm{df. }}{=} \langle \textit{id}_\Gamma, \tau \rangle_A : \Gamma \to \Gamma . A$.

\item {\bfseries \sffamily $\bm{\Pi}$-Comp.} For all $\varphi \in \mathit{Tm}(\Gamma . A, B)$, $\tau \in \mathit{Tm}(\Gamma, A)$, 
\begin{equation*}
\mathit{App}_{A, B} (\lambda_{A, B} (\varphi) , \tau) = \varphi \{ \overline{\tau} \}.
\end{equation*}

\item {\bfseries \sffamily $\bm{\Pi}$-Subst.} For any $\Delta \in \mathcal{C}$, $\phi : \Delta \to \Gamma$ in $\mathcal{C}$, 
\begin{equation*}
\textstyle \Pi (A, B) \{ \phi \} = \Pi (A\{\phi\}, B\{\phi^+\})
\end{equation*}
where $\phi^+ \stackrel{\mathrm{df. }}{=} \langle \phi \circ \mathit{p}(A\{ \phi \}), \mathit{v}_{A\{\phi\}} \rangle_A : \Delta . A\{\phi\} \to \Gamma . A$.

\item {\bfseries \sffamily $\bm{\lambda}$-Subst.} For all $\varphi \in \mathit{Tm} (\Gamma. A, B)$, 
\begin{equation*}
\lambda_{A, B} (\varphi) \{ \phi \} = \lambda_{A\{\phi\}, B\{\phi^+\}} (\varphi \{ \phi^+ \}) \in \mathit{Tm} (\Delta, \textstyle \Pi (A\{\phi\}, B\{ \phi^+\}))
\end{equation*}
where note that $\Pi (A\{\phi\}, B\{ \phi^+ \}) \in \mathit{Ty}(\Delta)$.

\item {\bfseries \sffamily App-Subst.} Under the same assumption, 
\begin{equation*}
\mathit{App}_{A, B} (\kappa, \tau) \{ \phi \} = \mathit{App}_{A\{\phi\}, B\{\phi^+\}} (\kappa \{ \phi \}, \tau \{ \phi \}) \in \mathit{Tm} (\Delta, B\{ \overline{\tau} \circ \phi \})
\end{equation*}
where note that $\kappa\{\phi\} \in \mathit{Tm} (\Delta, \Pi (A\{\phi\}, B\{\phi^+\}))$, $\tau\{\phi\} \in \mathit{Tm}(\Delta, A\{\phi\})$, and $\phi^+ \circ \overline{\tau\{\phi\}} =  \langle \phi \circ \mathit{p}(A\{\phi\}), \mathit{v}_{A\{\phi\}} \rangle_A \circ \langle \textit{id}_\Delta, \tau\{\phi\} \rangle_{A\{\phi\}} = \langle \phi, \tau \{\phi\} \rangle_A = \langle \textit{id}_\Gamma, \tau \rangle_A\circ \phi = \overline{\tau} \circ \phi $.
\end{itemize}

Furthermore, $\mathcal{C}$ \emph{\bfseries supports $\bm{\Pi}$-types in the strict sense} if it additionally satisfies the following:
\begin{itemize}
\item {\bfseries \sffamily $\bm{\lambda}$-Uniq.} For all $\mu : \widehat{\Pi}(\widehat{\Sigma}(\Gamma, A), \Pi(A, B)\{\mathit{p}(A)\})$, 
\begin{equation*}
\lambda_{A\{\mathit{p}(A)\}, B\{\mathit{p}(A)^+\}} (\mathit{App}_{A\{\mathit{p}(A)\}, B\{\mathit{p}(A)^+\}}(\mu, \mathit{v}_A)) = \mu.
\end{equation*}
Note that it corresponds to the rule \textsf{$\Pi$-Uniq} or \emph{$\eta$-rule} in \textsf{MLTT}.
\end{itemize}

\end{definition}

Let us now give our game-semantic interpretation of $\Pi$-types:
\begin{theorem}[Game-semantic $\Pi$-types]
\label{ThmGameTheoreticPiTypes}
The CwF $\mathcal{PGE}$ supports $\Pi$-types.
\end{theorem}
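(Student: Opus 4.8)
The plan is to transport the groupoid model's interpretation of $\Pi$-types into $\mathcal{PGE}$, using the partial interpretation $\widehat{\Pi}$ of Definition~\ref{DefDependentFunctionSpace} as the fiberwise ingredient and then checking that the intensional structure is preserved throughout. I would fix $\Gamma \in \mathcal{PGE}$, $A \in \mathscr{D}(\Gamma)$ and $B \in \mathscr{D}(\Gamma . A)$, and for each $\gamma : \Gamma$ first define a DGwE $B_\gamma : A(\gamma) \to \mathcal{PGE}$ by $B_\gamma(\sigma) \stackrel{\mathrm{df. }}{=} B(\langle \gamma, \sigma \rangle)$ on objects and $B_\gamma(\varrho) \stackrel{\mathrm{df. }}{=} B(\langle \mathit{id}_\gamma, \varrho \rangle)$ on morphisms $\varrho : \sigma =_{A(\gamma)} \sigma'$, which makes sense since $A(\mathit{id}_\gamma) = \mathit{id}_{A(\gamma)}$; functoriality and uniformity of $B_\gamma$ descend from those of $B$. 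Then I would set the \textsf{$\Pi$-Form} datum $\Pi(A, B)(\gamma) \stackrel{\mathrm{df. }}{=} \widehat{\Pi}(A(\gamma), B_\gamma) \in \mathcal{PGE}$ (well-defined by Lemma~\ref{LemWellDefinedDependentFunctionSpace}), and define the action of the functor $\Pi(A, B)$ on a morphism $\rho : \gamma =_\Gamma \gamma'$ by transport: for every $\sigma' : A(\gamma')$ there is a canonical morphism $\langle \rho, \mathit{id}_{\sigma'} \rangle : \langle \gamma, A(\rho^{-1}) \bullet \sigma' \rangle =_{\widehat{\Sigma}(\Gamma, A)} \langle \gamma', \sigma' \rangle$ (since $A(\rho) \bullet A(\rho^{-1}) = \mathit{id}_{A(\gamma')}$), and I would put $(\Pi(A, B)(\rho) \bullet \phi) \bullet \sigma' \stackrel{\mathrm{df. }}{=} B(\langle \rho, \mathit{id}_{\sigma'} \rangle) \bullet \phi \bullet (A(\rho^{-1}) \bullet \sigma')$, with the equality-preservation $\Pi(A, B)(\rho)^=$ obtained by conjugating proofs of equalities by the same transports. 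Functoriality of $\Pi(A, B)$ and its uniformity as a DGwE should then follow from functoriality and uniformity of $A$ and $B$ together with the preservation of these properties (and of innocence, well-bracketing, totality and noetherianity) under composition of ep-strategies.

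For the term formers, given $\varphi : \widehat{\Pi}(\widehat{\Sigma}(\Gamma, A), B)$ I would define the \textsf{$\Pi$-Intro} term $\lambda_{A, B}(\varphi) : \widehat{\Pi}(\Gamma, \Pi(A, B))$ by $(\lambda_{A, B}(\varphi) \bullet \gamma) \bullet \sigma \stackrel{\mathrm{df. }}{=} \varphi \bullet \langle \gamma, \sigma \rangle$, and analogously on proofs of equalities via $\varphi^=$, then check that each $\lambda_{A, B}(\varphi) \bullet \gamma$ obeys the four axioms for a strategy on $\widehat{\Pi}(A(\gamma), B_\gamma)$ and that $\lambda_{A, B}(\varphi)$ is a total, innocent, wb and noetherian ep-strategy. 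For the \textsf{$\Pi$-Elim} term, given $\kappa : \widehat{\Pi}(\Gamma, \Pi(A, B))$ and $\tau : \widehat{\Pi}(\Gamma, A)$, I would observe $\overline{\tau} \bullet \gamma = \langle \gamma, \tau \bullet \gamma \rangle$, hence $B\{ \overline{\tau} \}(\gamma) = B(\langle \gamma, \tau \bullet \gamma \rangle) = B_\gamma(\tau \bullet \gamma)$, and define $\mathit{App}_{A, B}(\kappa, \tau) \bullet \gamma \stackrel{\mathrm{df. }}{=} (\kappa \bullet \gamma) \bullet (\tau \bullet \gamma)$, with the evident equality-preservation, obtaining an element of $\widehat{\Pi}(\Gamma, B\{ \overline{\tau} \})$.

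It then remains to verify the equations. \textsf{$\Pi$-Comp} is the computation $\mathit{App}_{A, B}(\lambda_{A, B}(\varphi), \tau) \bullet \gamma = (\lambda_{A, B}(\varphi) \bullet \gamma) \bullet (\tau \bullet \gamma) = \varphi \bullet \langle \gamma, \tau \bullet \gamma \rangle = \varphi \bullet (\overline{\tau} \bullet \gamma) = \varphi\{ \overline{\tau} \} \bullet \gamma$, together with the analogous identity on the equality-preservations. \textsf{$\Pi$-Subst} should reduce, via \textsf{Cons-L}, \textsf{Cons-R} and \textsf{Cons-Nat} for $\mathcal{PGE}$ (Theorem~\ref{ThmPGE}), to the identity $\phi^+ \bullet \langle \delta, \sigma \rangle = \langle \phi \bullet \delta, \sigma \rangle$, which gives $A\{\phi\}(\delta) = A(\phi \bullet \delta)$ and $(B\{\phi^+\})_\delta(\sigma) = B_{\phi \bullet \delta}(\sigma)$, hence $\Pi(A, B)\{\phi\}(\delta) = \widehat{\Pi}(A(\phi \bullet \delta), B_{\phi \bullet \delta}) = \Pi(A\{\phi\}, B\{\phi^+\})(\delta)$; the same bookkeeping handles the morphism components and the equations \textsf{$\lambda$-Subst} and \textsf{App-Subst}. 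I expect the main obstacle to be the transport operation $\Pi(A, B)(\rho)$: showing that the conjugation formula genuinely defines a single ep-strategy — not merely a family indexed by the strategy Opponent declares on $A(\gamma')$ — that is total, innocent, wb, noetherian and functorial in $\rho$, and that $\Pi(A, B)$ is thereby a \emph{uniform} DGwE. This is exactly the point where the intensional/uniformity machinery must be invoked in earnest, in contrast with the purely extensional reasoning of the groupoid model, and it rests on the uniformity hypotheses on $A$ and $B$ and on the stability of uniformity and of the strategy constraints under composition.
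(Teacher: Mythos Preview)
Your plan matches the paper's almost exactly: the same fiberwise $B_\gamma$, the same $\Pi(A,B)(\gamma)=\widehat{\Pi}(A(\gamma),B_\gamma)$, the same conjugation formula $\phi\mapsto B_\rho\bullet\phi\bullet A(\rho^{-1})$ for transport, and the same currying correspondence for $\lambda_{A,B}$. The paper presents \textsf{$\Pi$-Elim} as $\mathit{App}_{A,B}(\kappa,\tau)\stackrel{\mathrm{df.}}{=}\lambda_{A,B}^{-1}(\kappa)\{\overline{\tau}\}$ rather than the pointwise evaluation you give, which is extensionally the same but makes \textsf{$\Pi$-Comp}, \textsf{$\lambda$-Subst} and \textsf{App-Subst} one-line calculations.

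The gap you yourself flag is real, and the paper does not leave it to ``uniformity machinery'' in the abstract: it introduces an explicit game-semantic operator $\phi\leftrightarrows A(\rho^{-1})^\dagger;\phi^\dagger;B_\rho$, defined as a concrete set of plays (a restricted hiding of the composite $\phi^{[1]}\stackrel{\sim}{\Rightarrow}A(\rho^{-1})^\dagger\ddagger(\phi^{[2]})^\dagger\ddagger B_\rho$ in which the two copies of $\phi$ are synchronised by a dereliction condition), and then takes $\rho_{\Pi(A,B)}=\&_{\phi}\,\phi\leftrightarrows A(\rho^{-1})^\dagger;\phi^\dagger;B_\rho$. This is what turns the pointwise family into a single product-of-PLIs strategy whose components depend only on the \emph{behaviour} of $\phi$, hence are uniform; innocence, well-bracketing, totality and noetherianity then follow because the construction is copy-cat-like and $A(\rho^{-1})$, $B_\rho$ are invertible. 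Without an analogue of this $\leftrightarrows$ construction your proposal specifies $\Pi(A,B)(\rho)$ only extensionally, and the passage from the family $(\phi_{\sigma'})_{\sigma'}$ to a single ep-strategy in $\mathcal{PGE}$ is precisely the step that requires new work beyond the groupoid model.
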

\begin{proof}
Let $\Gamma \in \mathcal{PGE}$, and $A : \Gamma \to \mathcal{PGE}$, $B : \widehat{\Sigma}(\Gamma, A) \to \mathcal{PGE}$ be DGwEs.

\paragraph{\textsc{$\Pi$-Form.}}
For each $\gamma : \Gamma$, we define a DGwE $B_\gamma :A(\gamma) \to \mathcal{PGE}$ by: 
\begin{align*}
B_\gamma(\sigma) &\stackrel{\mathrm{df. }}{=} B(\langle \gamma, \sigma \rangle) \in \mathcal{PGE} \\
B_\gamma(\varrho) &\stackrel{\mathrm{df. }}{=} B(\langle \mathit{id}_\gamma, \varrho \rangle) : B(\langle \gamma, \sigma \rangle) \to B(\langle \gamma, \tilde{\sigma} \rangle)
\end{align*}
for all $\sigma, \tilde{\sigma} : A(\gamma)$, $\varrho : \sigma =_{A(\gamma)} \tilde{\sigma}$.
$B_\gamma$ is clearly well-defined since the functoriality and uniformity of $B_\gamma$ follow from those of $B$. 
We usually write $B(\gamma, \sigma)$, $B(\mathit{id}, \varrho)$ for $B(\langle \gamma, \sigma \rangle)$, $B(\langle \mathit{id}_\gamma, \varrho \rangle)$, respectively.

\if0
Moreover, the uniformity of $B$ enables us to define an ep-strategy $B_\rho : \uplus B_\gamma \Rightarrow \uplus B_{\gamma'}\{ A(\rho) \}$ for all $\rho : \gamma =_\Gamma \gamma'$ by:
\begin{align*}
B_{\rho} &\stackrel{\mathrm{df. }}{=} \&_{\sigma : A(\gamma), \tau : B_\gamma(\sigma)} B(\rho \& \mathit{id}_{A(\rho) \bullet \sigma})_\tau \\
B_{\rho}^= &\stackrel{\mathrm{df. }}{=} \&_{\sigma, \tilde{\sigma} : A(\gamma), \varrho: \sigma =_{A(\gamma)} \tilde{\sigma}, \tau : B_\gamma(\sigma), \tilde{\tau} : B_\gamma(\tilde{\sigma}), \vartheta : B_\gamma(\varrho) \bullet \tau =_{B_\gamma(\tilde{\sigma})} \tilde{\tau}} B(\rho \& \mathit{id}_{A(\rho) \bullet \tilde{\sigma}})^=_\vartheta.
\end{align*}
It is straightforward to establish the functoriality of $B_\rho$ as follows. Let $\vartheta_1 : \tau_1 =_{\uplus B_\gamma} \tau_2$, $\vartheta_2 : \tau_2 =_{\uplus B_\gamma} \tau_3$, where $\sigma_1, \sigma_2, \sigma_3 : A(\gamma)$, $\tau_1 : B_\gamma(\sigma_1)$, $\tau_2 : B_\gamma(\sigma_2)$, $\tau_3 : B_\gamma(\sigma_3)$, $\varrho_1 : \sigma_1 =_{A(\gamma)} \sigma_2$, $\varrho_2 : \sigma_2 =_{A(\gamma)} \sigma_3$, $\vartheta_1 : B_\gamma(\varrho_1) \bullet \tau_1 =_{B_\gamma(\sigma_2)} \tau_2$, $\vartheta_2 : B_\gamma(\varrho_2) \bullet \tau_2 =_{B_\gamma(\sigma_3)} \tau_3$. Then we have:
\begin{enumerate}

\item $B_\rho^= \bullet \vartheta_1 = B(\rho \& \mathit{id}_{A(\rho) \bullet \sigma_2})^= \bullet \vartheta_1 : B(\rho \& \mathit{id}_{A(\rho) \bullet \sigma_2}) \bullet B_\gamma(\varrho_1) \bullet \tau_1 =_{B_{\gamma'}(A(\rho) \bullet \sigma_2)} B(\rho \& \mathit{id}_{A(\rho) \bullet \sigma_2}) \bullet \tau_2 \Leftrightarrow B_\rho^= \bullet \vartheta_1 : B_{\gamma'}(A(\rho)^= \bullet \varrho_1) \bullet B(\rho \& \mathit{id}_{A(\rho) \bullet \sigma_1}) \bullet \tau_1 =_{B_{\gamma'}\{A(\rho)\}(\sigma_2)} B(\rho \& \mathit{id}_{A(\rho) \bullet \sigma_2}) \bullet \tau_2 \Leftrightarrow B_\rho^= \bullet \vartheta_1 : B_{\gamma'} \{ A(\rho) \} (\varrho_1) \bullet B_\rho \bullet \tau_1 =_{B_{\gamma'}\{A(\rho)\}(\sigma_2)} B_\rho \bullet \tau_2 \Leftrightarrow B_\rho^= \bullet \vartheta_1 : B_\rho \bullet \tau_1 =_{\uplus B_{\gamma'}\{A(\rho)\}} B_\rho \bullet \tau_2$

\item $B_\rho^= (\vartheta_2 \odot \vartheta_1) = B(\rho \& \mathit{id}_{A(\rho) \bullet \sigma_3})^= \bullet (\vartheta_2 \odot \vartheta_1) = B(\rho \& \mathit{id}_{A(\rho) \bullet \sigma_3})^= \bullet (\vartheta_2 \bullet B_\gamma(\varrho_2)^= \bullet \vartheta_1) = (B(\rho \& \mathit{id}_{A(\rho) \bullet \sigma_3})^= \bullet \vartheta_2) \bullet (B(\rho \& \mathit{id}_{A(\rho) \bullet \sigma_3})^= \bullet B_{\gamma}(\varrho_2)^= \bullet \vartheta_1) = (B_\rho^= \bullet \vartheta_2) \bullet (B_{\gamma'}(A(\rho)^= \bullet \varrho_2)^= \bullet B(\rho \& \mathit{id}_{A(\rho) \bullet \sigma_2})^= \bullet \vartheta_1) = (B_\rho^= \bullet \vartheta_2) \bullet (B_{\gamma'}\{A(\rho)\}(\varrho_2)^= \bullet B_\rho^= \bullet \vartheta_1) = (B_\rho^= \bullet \vartheta_2) \odot (B_\rho^= \bullet \vartheta_1)$

\item $B_\rho^= \bullet \mathit{id}_{\tau_1} = B(\rho \& \mathit{id}_{A(\rho) \bullet \sigma_1})^= \bullet \mathit{id}_{\tau_1} = \mathit{id}_{B(\rho \& \mathit{id}_{A(\rho) \bullet \sigma_1}) \bullet \tau_1} = \mathit{id}_{B_\rho \bullet \tau_1}$.

\end{enumerate}
\fi

For each $\rho : \gamma =_\Gamma \gamma'$, we may define a natural transformation $\mathsf{nat}(B_\rho) : B_\gamma \Rightarrow B_{\gamma'} \{ A(\rho) \} : A(\gamma) \to \mathcal{PGE}$ whose components are defined by:
\begin{align*}
\mathsf{nat}(B_\rho)_\sigma \stackrel{\mathrm{df. }}{=} B(\rho, \mathit{id}_{A(\rho) \bullet \sigma}) : B(\gamma, \sigma) \to B(\gamma', A(\rho) \bullet \sigma)
\end{align*}
for all $\sigma : A(\gamma)$.
In fact, $\mathsf{nat}(B_\rho)$ is natural in $A(\gamma)$: Given $\sigma_1, \sigma_2 : A(\gamma)$, $\varrho : \sigma_1 =_{A(\gamma)} \sigma_2$, the diagram
\begin{diagram}
B_\gamma(\sigma_1) & \rTo^{\mathsf{nat}(B_\rho)_{\sigma_1}} & B_{\gamma'}\{A(\rho)\}(\sigma_1) \\
\dTo^{B_\gamma(\varrho)} && \dTo_{B_{\gamma'}\{A(\rho)\}(\varrho)} \\
B_\gamma(\sigma_2) & \rTo_{\mathsf{nat}(B_\rho)_{\sigma_2}} & B_{\gamma'}\{ A(\rho) \}(\sigma_2)
\end{diagram}
commutes because 
\begin{align*}
B_{\gamma'} \{ A(\rho) \} (\varrho) \bullet \mathsf{nat}(B_\rho)_{\sigma_1} &= B(\mathit{id}_{\gamma'}, A(\rho)^= \bullet \varrho) \bullet B(\rho, \mathit{id}_{A(\rho) \bullet \sigma_1}) \\
&= B(\rho, \mathit{id}_{A(\rho) \bullet \sigma_2}) \bullet B(\mathit{id}_\gamma, \varrho) \\
&= \mathsf{nat}(B_\rho)_{\sigma_2} \bullet B_\gamma(\varrho).
\end{align*}
By the uniformity of $B$, we may organize strategies $\{ \mathsf{nat}(B_\rho)_\sigma \ \! | \ \! \sigma : A(\gamma) \ \! \}$ into a single strategy $B_\rho : \uplus B_\gamma \Rightarrow \uplus B_{\gamma'}\{ A(\rho) \}$, and similarly $\{ \mathsf{nat}(B_\rho)_\sigma^= \ \! | \ \! \sigma : A(\gamma) \ \! \}$ into $B_\rho^= : \uplus =_{B_\gamma} \Rightarrow \uplus =_{B_{\gamma'} \{ A(\gamma) \}}$.

We then define a DGwE $\Pi(A, B) : \Gamma \to \mathcal{PGE}$ by:
\begin{align*}
\textstyle \Pi(A, B)(\gamma) &\stackrel{\mathrm{df. }}{=} \textstyle \widehat{\Pi}(A(\gamma), B_\gamma) \\ 
\textstyle \Pi(A, B)(\rho) &\stackrel{\mathrm{df. }}{=} \textstyle \rho_{\Pi(A, B)} : \widehat{\Pi}(A(\gamma), B_\gamma) \to \widehat{\Pi}(A(\gamma'), B_{\gamma'})
\end{align*} 
for all $\gamma, \gamma' : \Gamma$, $\rho : \gamma =_{\Gamma} \gamma'$, where $\rho_{\Pi(A, B)} : \widehat{\Pi}(A(\gamma), B_\gamma) \to \widehat{\Pi}(A(\gamma'), B_{\gamma'})$ is the ep-strategy 
\begin{equation*}
\rho_{\Pi(A, B)} \stackrel{\mathrm{df. }}{=} \&_{\phi : \widehat{\Pi}(A(\gamma), B_\gamma)} \phi \leftrightarrows A(\rho^{-1})^\dagger ; \phi^\dagger ; B_\rho : \textstyle \widehat{\Pi}(A(\gamma), B_\gamma) \Rightarrow \widehat{\Pi}(A(\gamma'), B_{\gamma'})
\end{equation*}
for which we define:
\begin{align*}
\phi \leftrightarrows A(\rho^{-1})^\dagger ; \phi^\dagger ; B_\rho &\stackrel{\mathrm{df. }}{=} \{ \bm{s} \upharpoonright \phi^{[1]}, A(\gamma'), \uplus B_{\gamma'} \ \! | \ \! \bm{s} \in \phi^{[1]} \stackrel{\sim}{\Rightarrow} A(\rho^{-1})^\dagger \ddagger (\phi^{[2]})^\dagger \ddagger B_\rho, \\ & \ \ \ \ \ \ \ \ \forall \bm{t}mn \preceq \bm{s} . \ \! \mathsf{even}(\bm{t}) \wedge m \in M_\phi \Rightarrow (\bm{t} \upharpoonright \phi^{[1]}, \phi^{[2]}) . mn \in \mathit{der}_\phi \}
\end{align*}
where clearly $\phi^\dagger ; B_\rho : \widehat{\Pi}(A(\gamma), B_{\gamma'}\{ A(\rho) \})$, whence $A(\rho^{-1})^\dagger ; \phi^\dagger ; B_\rho : \widehat{\Pi}(A(\gamma'), B_{\gamma'})$. 
Also, $\rho_{\Pi(A, B)}$ is uniform since its components $\phi \leftrightarrows A(\rho^{-1})^\dagger ; \phi^\dagger ; B_\rho$ solely depend on the ``behavior'' of $\phi$. 
Therefore it follows that $\rho_{\widehat{\Pi}(A, B)}$ is a well-defined strategy on $\widehat{\Pi}(A(\gamma), B_\gamma) \Rightarrow \widehat{\Pi}(A(\gamma'), B_{\gamma'})$.
For brevity, from now on, let us write $\phi \leftrightarrows B_\rho \bullet \phi \bullet A(\rho^{-1})$ for $\phi \leftrightarrows A(\rho^{-1})^\dagger ; \phi^\dagger ; B_\rho$.

The equality-preservation 
\begin{equation*}
\rho_{\Pi(A, B)}^= : \textstyle (\widehat{\Pi}(A(\gamma)^{[1]}, B_\gamma^{[1]}) \stackrel{\sim}{\Rightarrow} \widehat{\Pi}(A(\gamma)^{[2]}, B_\gamma^{[2]})) \stackrel{\sim}{\Rightarrow} (\widehat{\Pi}(A(\gamma')^{[1]}, B_{\gamma'}^{[1]}) \stackrel{\sim}{\Rightarrow} \widehat{\Pi}(A(\gamma')^{[2]}, B_{\gamma'}^{[2]}))
\end{equation*}
is defined by:
\begin{equation*}
\rho_{\Pi(A, B)}^= \stackrel{\mathrm{df. }}{=} \&_{\phi_1, \phi_2 : \widehat{\Pi}(A(\gamma), B_\gamma), \nu : \phi_1 =_{\widehat{\Pi}(A(\gamma), B_\gamma)} \phi_2} \nu \leftrightarrows B(\rho)^= \bullet \nu \bullet A(\rho^{-1})^=.
\end{equation*}
As an illustration, given $\phi_1, \phi_2 : \widehat{\Pi}(A(\gamma), B_\gamma)$, $\nu : \phi_1 =_{\widehat{\Pi}(A(\gamma), B_\gamma)} \phi_2$, the strategy 
\begin{equation*}
\nu \leftrightarrows B(\rho)^= \bullet \nu \bullet A(\rho^{-1})^= : \phi_1 =_{\widehat{\Pi}(A(\gamma), B_\gamma)} \phi_2 \stackrel{\sim}{\Rightarrow} \rho_{\Pi(A, B)} \bullet \phi_1 =_{\widehat{\Pi}(A(\gamma'), B_{\gamma'})} \rho_{\Pi(A, B)} \bullet \phi_2
\end{equation*}
may be represented as the strategy between $\nu$ and $\rho_{\Pi(A, B)}^= \bullet \nu$ in the following diagram:
\begin{diagram}
A(\gamma')^{[1]} && \rTo^{A(\rho^{-1})} && A(\gamma)^{[1]} && \rTo^{\phi_1} && \uplus B_{\gamma}^{[1]} &  &\rTo^{B_\rho}&& \uplus B_{\gamma'}^{[1]} \\
&\luDotsto(2, 3)_{\rho_{\Pi(A, B)}^=}&&\ruTo(2, 3)^{\mathit{id}_{A(\gamma)}}& &&&\ruTo(2, 3)^{\mathit{id}_{\uplus B_\gamma}} &\vLine&&&\ruDotsto(6, 3)_{\rho_{\Pi(A, B)}^=}& \\
&&&&\vLine_\nu&&&&\HonV&&&& \\
\dDotsto^{\rho_{\Pi(A, B)}^= \bullet \nu}&&A(\gamma)^{[1]} && \HonV & \rTo^{\phi_1} &\uplus B_\gamma^{[1]}&&&&&& \dDotsto_{\rho_{\Pi(A, B)}^= \bullet \nu} \\
&&&&&&&&&&&& \\
&&\dTo_\nu&&\dTo&&\dTo_\nu&&\dTo_\nu&&&& \\
A(\gamma')^{[2]} &\hLine^{A(\rho^{-1})}& \VonH & \rTo &A(\gamma)^{[2]} &\hDots^{\phi_2} &\VonH&\rDotsto& \uplus B_{\gamma}^{[2]} & &\rTo^{B_\rho} && \uplus B_{\gamma'}^{[2]} \\
&\luDotsto(2, 3)_{\rho_{\Pi(A, B)}^=}&&\ruTo(2, 3)^{\mathit{id}_{A(\gamma)}}&&&&\ruTo(2, 3)^{\mathit{id}_{\uplus B_\gamma}}&&&&\ruDotsto(6,3)_{\rho_{\Pi(A, B)}^=}& \\
&&&&&&&&&&&& \\
&&A(\gamma)^{[2]}&&\rDotsto^{\phi_2}&&\uplus B_\gamma^{[2]}&&&&&&
\end{diagram}
where the strategies on dotted arrows indicate that they do not ``control'' the play in the diagram, but rather they ``occur'' as the result of the play.
It is immediate from the diagram that $\rho_{\Pi(A, B)}^=$ preserves composition and identities.
Also, it is obvious that $\rho_{\Pi(A, B)}^= \bullet \nu$ satisfies the first two axioms for morphisms between dependent functions (see Definition~\ref{DefDependentFunctionSpace}).

For the third axiom, let $\sigma' : A(\gamma')$, $\phi_1, \phi_2 : \widehat{\Pi}(A(\gamma), B_\gamma)$, $\nu : \phi_1 =_{\widehat{\Pi}(A(\gamma), B_\gamma)} \phi_2$ be fixed; we have to show $(\rho_{\Pi(A, B)}^= \bullet \nu)_{\sigma'} : (\rho_{\Pi(A, B)} \bullet \phi_1) \bullet \sigma' =_{B_{\gamma'}(\sigma')} (\rho_{\Pi(A, B)} \bullet \phi_2) \bullet \sigma'$.
By the definition, we have $\nu_{A(\rho^{-1}) \bullet \sigma'} : \phi_1 \bullet A(\rho^{-1}) \bullet \sigma' =_{B_\gamma(A(\rho^{-1}) \bullet \sigma')} \phi_2 \bullet A(\rho^{-1}) \bullet \sigma'$, whence the definition of $\rho_{\Pi(A, B)}^=$ implies the desired property:
\begin{equation*}
(\rho_{\Pi(A, B)}^= \bullet \nu)_{\sigma'} = B_\rho^= \bullet \nu_{A(\rho^{-1}) \bullet \sigma'} : B_\rho \bullet \phi_1 \bullet A(\rho^{-1}) \bullet \sigma' =_{B_{\gamma'}(\sigma')} B_\rho \bullet \phi_2 \bullet A(\rho^{-1}) \bullet \sigma'.
\end{equation*}

Now, we show the fourth axiom or the naturality of $\rho_{\Pi(A, B)}^= \bullet \nu$. Let $\sigma'_1, \sigma'_2 : A(\gamma')$, $\varrho' : \sigma'_1 =_{A(\gamma')} \sigma'_2$; we have to show that the following diagram commutes:
\begin{diagram}
(\rho_{\Pi(A, B)} \bullet \phi_1) \bullet \sigma'_1 & \rTo^{(\rho_{\Pi(A, B)}^= \bullet \nu)_{\sigma'_1}} & (\rho_{\Pi(A, B)} \bullet \phi_2) \bullet \sigma'_1 \\
\dTo^{(\rho_{\Pi(A, B)} \bullet \phi_1)^= \bullet \varrho'} && \dTo_{(\rho_{\Pi(A, B)} \bullet \phi_2)^= \bullet \varrho'} \\
(\rho_{\Pi(A, B)} \bullet \phi_1) \bullet \sigma'_2 & \rTo_{(\rho_{\Pi(A, B)}^= \bullet \nu)_{\sigma'_2}} & (\rho_{\Pi(A, B)} \bullet \phi_2) \bullet \sigma'_2
\end{diagram}
However, it is immediate:
\begin{align*}
((\rho_{\Pi(A, B)} \bullet \phi_2)^= \bullet \varrho') \bullet (\rho_{\Pi(A, B)}^= \bullet \nu)_{\sigma'_1} &= (B_\rho \bullet \phi_2 \bullet A(\rho^{-1}))^= \bullet \varrho' \bullet B_\rho^= \bullet \nu_{A(\rho^{-1}) \bullet \sigma'_1} \\
&= B_\rho^= \bullet (\phi_2^= \bullet A(\rho^{-1})^= \bullet \varrho') \bullet B_\rho^= \bullet \nu_{A(\rho^{-1}) \bullet \sigma'_1} \\
&= B_\rho^= \bullet (\phi_2^= \bullet A(\rho^{-1})^= \bullet \varrho' \bullet \nu_{A(\rho^{-1}) \bullet \sigma'_1}) \\
& \ \ \text{(by the functoriality of $B_\rho$)} \\
&= B_\rho^= \bullet (\nu_{A(\rho^{-1}) \bullet \sigma'_2} \bullet \phi_1^= \bullet A(\rho^{-1})^= \bullet \varrho') \\
& \ \ \text{(by the naturality of $\nu$)} \\
&= (B_\rho^= \bullet \nu_{A(\rho^{-1}) \bullet \sigma'_2}) \bullet (B_\rho^= \bullet \phi_1^= \bullet A(\rho^{-1})^= \bullet \varrho') \\
& \ \ \text{(by the functoriality of $B_\rho$)} \\
&= (\rho_{\Pi(A, B)}^= \bullet \nu)_{\sigma'_2} \bullet ((\rho_{\Pi(A, B)} \bullet \phi_1)^= \bullet \varrho').
\end{align*} 

By the definition, the strategies $\rho_{\Pi(A, B)}, \rho_{\Pi(A, B)}^=$ are clearly both total and wb.
Also, since $B_\rho, A(\rho^{-1})$ are both invertible and ``copy-cat-like'', so are $\rho_{\Pi(A, B)}, \rho_{\Pi(A, B)}^=$.
Therefore they are innocent and noetherian as well just by the same reason as copy-cats (and derelictions).

Hence, we have shown that the pair $\rho_{\Pi(A, B)} = (\rho_{\Pi(A, B)}, \rho_{\Pi(A, B)}^=)$ is a morphism $\widehat{\Pi}(A(\gamma), B_\gamma) \to \widehat{\Pi}(A(\gamma'), B_{\gamma'})$ in the category $\mathcal{PGE}$.
It remains to establish that $\Pi(A, B)$ preserves composition and identities. 
For composition, let $\rho : \gamma =_\Gamma \gamma'$, $\rho : \gamma' =_\Gamma \gamma''$.
Then observe that:
\begin{align*}
(\rho' \bullet \rho)_{\Pi(A, B)} &= \&_{\phi : \widehat{\Pi}(A(\gamma), B_\gamma)} \phi \leftrightarrows B_{\rho' \bullet \rho}
 \bullet \phi \bullet A((\rho' \bullet \rho)^{-1}) \\
&= \&_{\phi : \widehat{\Pi}(A(\gamma), B_\gamma)} \phi \leftrightarrows B_{\rho'} \bullet B_{\rho}
 \bullet \phi \bullet A(\rho^{-1}) \bullet A(\rho'^{-1}) \\
&= (\&_{\phi' : \widehat{\Pi}(A(\gamma'), B_{\gamma'})} \phi' \leftrightarrows B_{\rho'}
 \bullet \phi' \bullet A(\rho'^{-1})) \bullet (\&_{\phi : \widehat{\Pi}(A(\gamma), B_\gamma)} \phi \leftrightarrows B_{\rho}
 \bullet \phi \bullet A(\rho^{-1})) \\
&= \rho'_{\Pi(A, B)} \bullet \rho_{\Pi(A, B)} 
\end{align*}
where note that:
\begin{align*}
B_{\rho' \bullet \rho} &= \&_{\sigma : A(\gamma), \tau : B_\gamma(\sigma)} B(\rho' \bullet \rho, \mathit{id}_{A(\rho' \bullet \rho) \bullet \sigma})_\tau \\
&= \&_{\sigma : A(\gamma), \tau : B_\gamma(\sigma)} B(\rho' \bullet \rho, \mathit{id}_{A(\rho' \bullet \rho) \bullet \sigma} \bullet A(\rho')^= \bullet \mathit{id}_{A(\rho) \bullet \sigma})_\tau \\
&= \&_{\sigma : A(\gamma), \tau : B_\gamma(\sigma)} B(\rho' \bullet \rho, \mathit{id}_{A(\rho' \bullet \rho) \bullet \sigma} \odot \mathit{id}_{A(\rho) \bullet \sigma})_\tau \\
&= \&_{\sigma : A(\gamma), \tau : B_\gamma(\sigma)} B((\rho', \mathit{id}_{A(\rho') \bullet A(\rho) \bullet \sigma}) \bullet (\rho, \mathit{id}_{A(\rho) \bullet \sigma}))_\tau \\
&= \&_{\sigma : A(\gamma), \tau : B_\gamma(\sigma)} (B(\rho', \mathit{id}_{A(\rho') \bullet A(\rho) \bullet \sigma}) \bullet B(\rho, \mathit{id}_{A(\rho) \bullet \sigma}))_\tau \\
&= (\&_{\sigma' : A(\gamma'), \tau' : B_{\gamma'}(\sigma')} B(\rho', \mathit{id}_{A(\rho') \bullet \sigma'})_{\tau'}) \bullet (\&_{\sigma : A(\gamma), \tau : B_\gamma(\sigma)} B(\rho, \mathit{id}_{A(\rho) \bullet \sigma})_\tau) \\
&= B_{\rho'} \bullet B_\rho.
\end{align*}
To see $(\rho' \bullet \rho)_{\Pi(A, B)}^= = \rho'^=_{\Pi(A, B)} \bullet \rho_{\Pi(A, B)}^=$, it suffices to observe that the outer part of the digram 
\begin{diagram}
A(\gamma'')^{[1]} & \rTo^{A((\rho' \bullet \rho)^{-1})} & A(\gamma)^{[1]} & \rDotsto^{\phi_1} & \uplus B_{\gamma}^{[1]} & \rTo^{B_{\rho' \bullet \rho}} & \uplus B_{\gamma''}^{[1]} \\
\dDotsto_{(\rho' \bullet \rho)_{\Pi(A, B)}^= \bullet \nu} &&\dTo_{\nu} && \dTo^{\nu} & & \dDotsto^{(\rho' \bullet \rho)_{\Pi(A, B)}^= \bullet \nu} \\
A(\gamma'')^{[2]} &  \rTo_{A((\rho' \bullet \rho)^{-1})} & A(\gamma)^{[2]} & \rDotsto^{\phi_2} & \uplus B_{\gamma}^{[2]} &\rTo_{B_{\rho' \bullet \rho}}  & \uplus B_{\gamma''}^{[2]}
\end{diagram}
is equal to that of the diagram
\begin{diagram}
A(\gamma'')^{[1]} & \rTo^{A(\rho'^{-1})}  & A(\gamma')^{[1]} & \rTo^{A(\rho^{-1})} & A(\gamma)^{[1]} & \rDotsto^{\phi_1} & \uplus B_{\gamma}^{[1]} & \rTo^{B_{\rho}} & \uplus B_{\gamma'}^{[1]} & \rTo^{B_{\rho'}} & \uplus B_{\gamma''}^{[1]} \\
\dDotsto_{(\rho' \bullet \rho)_{\Pi(A, B)}^= \bullet \nu} &&\dDotsto_{\rho_{\Pi(A, B)}^= \bullet \nu} &&\dTo_{\nu} && \dTo^{\nu} & & \dDotsto^{\rho_{\Pi(A, B)}^= \bullet \nu}&& \dDotsto^{(\rho' \bullet \rho)_{\Pi(A, B)}^= \bullet \nu} \\
A(\gamma'')^{[2]} &  \rTo_{A(\rho'^{-1})} & A(\gamma')^{[2]} &  \rTo_{A(\rho^{-1})} & A(\gamma)^{[2]} & \rDotsto_{\phi_2} & \uplus B_{\gamma}^{[2]} &\rTo_{B_{\rho}}  & \uplus B_{\gamma'}^{[2]} &\rTo_{B_{\rho'}}  & \uplus B_{\gamma''}^{[2]}
\end{diagram}
for all $\phi_1, \phi_2 : \widehat{\Pi}(A(\gamma), B_\gamma)$, $\nu : \phi_1 =_{\widehat{\Pi}(A(\gamma), B_\gamma)} \phi_2$.

For identities, let $\gamma : \Gamma$. Then, we have:
\begin{align*}
\textstyle \Pi(A, B)(\mathit{id}_\gamma) &= (\mathit{id}_\gamma)_{\Pi(A, B)} \\
&= \&_{\phi : \widehat{\Pi}(A(\gamma), B_\gamma)} \phi \leftrightarrows B_{\mathit{id}_\gamma} \bullet \phi \bullet A(\mathit{id}_\gamma^{-1}) \\
&= \&_{\phi : \widehat{\Pi}(A(\gamma), B_\gamma)} \phi \leftrightarrows \mathit{id}_{B_\gamma} \bullet \phi \bullet \mathit{id}_{A(\gamma)} \\
&= \&_{\phi : \widehat{\Pi}(A(\gamma), B_\gamma)} \phi \leftrightarrows \phi \\
&= \mathit{id}_{\widehat{\Pi}(A(\gamma), B_\gamma)} \\
&= \mathit{id}_{\Pi(A, B)(\gamma)}
\end{align*}
as well as:
\begin{align*}
\textstyle \Pi(A, B)(\mathit{id}_\gamma)^= &= (\mathit{id}_\gamma)_{\Pi(A, B)}^= \\
&= \&_{\phi_1, \phi_2 : \widehat{\Pi}(A(\gamma), B_\gamma), \nu : \phi_1 =_{\widehat{\Pi}(A(\gamma), B_\gamma)} \phi_2} \nu \leftrightarrows B_{\mathit{id}_\gamma}^= \bullet \nu \bullet A(\mathit{id}_\gamma^{-1})^= \\
&= \&_{\phi_1, \phi_2 : \widehat{\Pi}(A(\gamma), B_\gamma), \nu : \phi_1 =_{\widehat{\Pi}(A(\gamma), B_\gamma)}} \nu \leftrightarrows \mathit{id}_{B_\gamma}^= \bullet \nu \bullet \mathit{id}_{A(\gamma)}^= \\
&= \&_{\phi_1, \phi_2 : \widehat{\Pi}(A(\gamma), B_\gamma), \nu : \phi_1 =_{\widehat{\Pi}(A(\gamma), B_\gamma)}} \nu \leftrightarrows \nu \\
&= \mathit{id}_{=_{\widehat{\Pi}(A(\gamma), B_\gamma)}} \\
&= \mathit{id}_{\Pi(A, B)(\gamma)}^=
\end{align*}
Therefore $\Pi(A, B)$ in fact preserves composition and identities.

\paragraph{\textsc{$\Pi$-Intro.}}
As in the previous work \cite{yamada2016game}, we have the obvious correspondence 
\begin{equation*}
\textstyle \mathsf{ob}(\widehat{\Pi}(\widehat{\Sigma}(\Gamma, A), B)) \cong \mathsf{ob}(\widehat{\Pi}(\Gamma, \Pi(A, B)))
\end{equation*}
between objects.
Moreover, we may extend this correspondence to morphisms in the obvious way.
Given an ep-strategy $\psi : \widehat{\Pi}(\widehat{\Sigma}(\Gamma, A), B)$, let us write $\lambda_{A, B}(\psi) : \widehat{\Pi}(\Gamma, \Pi(A, B))$ for the corresponding ep-strategy.
In fact, it is straightforward to see that:
\begin{align*}
\lambda_{A, B}(\psi) \bullet \gamma &: \textstyle \widehat{\Pi}(A(\gamma), B_\gamma) \\
\lambda_{A, B}(\psi)^= \bullet \rho &: \textstyle \rho_{\Pi(A, B)} \bullet \lambda_{A, B}(\psi) \bullet \gamma =_{\widehat{\Pi}(A(\gamma'), B_{\gamma'})} \lambda_{A, B}(\psi) \bullet \gamma' \\
\lambda_{A, B}(\psi)^= \bullet (\rho' \bullet \rho) &= (\lambda_{A, B}(\psi)^= \bullet \rho') \bullet ((\rho')_{\Pi(A, B)}^= \bullet \lambda_{A, B}(\psi)^= \bullet \rho) \\
\lambda_{A, B}(\psi)^= \bullet (\mathit{id}_\gamma) &= \mathit{id}_{\lambda_{A, B}(\psi) \bullet \gamma}
\end{align*}
for all $\gamma, \gamma', \gamma'' : \Gamma$, $\rho : \gamma =_\Gamma \gamma'$, $\rho' : \gamma' =_\Gamma \gamma''$.

\if0
Let $\gamma, \gamma', \gamma'' : \Gamma$, $\rho : \gamma =_\Gamma \gamma'$, $\rho' : \gamma' =_\Gamma \gamma''$; We show the functoriality of $\lambda_{A, B}(\psi)$ as follows:
\begin{enumerate}

\item Since $\psi^= \bullet (\rho \& \varrho) : B(\rho \& \varrho) \bullet \psi \bullet \langle \gamma, \sigma \rangle =_{B(\gamma', \sigma')} \psi \bullet \langle \gamma', \sigma' \rangle$ for all $\sigma : A(\gamma)$, $\sigma' : A(\gamma')$, $\varrho : A(\rho) \bullet \sigma =_{A(\gamma')} \sigma'$, we clearly have:
\begin{equation*}
\lambda_{A, B}(\psi)^= \bullet \rho : \rho_{\Pi(A, B)} \bullet (\lambda_{A, B}(\psi) \bullet \gamma) =_{\widehat{\Pi}(A(\gamma'), B_{\gamma'})} \lambda_{A, B}(\psi) \bullet \gamma'.
\end{equation*}

\item Since $\psi^= \bullet ((\rho' \bullet \rho) \& (\varrho' \odot \varrho)) = \psi^= \bullet ((\rho' \& \varrho') \bullet (\rho \& \varrho)) = \psi^= \bullet (\rho' \& \varrho') \bullet \psi^= \bullet (\rho \& \varrho)$ for all $\sigma : A(\gamma)$, $\sigma' : A(\gamma')$, $\sigma'' : A(\gamma'')$, $\varrho : A(\rho) \bullet \sigma =_{A(\gamma')} \sigma'$, $\varrho' : A(\rho') \bullet \sigma' =_{A(\gamma'')} \sigma''$, we have:
\begin{equation*}
\lambda_{A, B}(\psi)^= \bullet (\rho' \bullet \rho) = \lambda_{A, B}(\psi)^= \bullet (\rho') \bullet \lambda_{A, B}(\psi)^= \bullet (\rho).
\end{equation*}

\item Since $\psi^= \bullet (\mathit{id}_\gamma \& \mathit{id}_\sigma) = \psi^= \bullet (\mathit{id}_{\langle \gamma, \sigma \rangle}) = \mathit{id}_{\psi \bullet \langle \gamma, \sigma \rangle}$ for all $\sigma : A(\gamma)$, we clearly have:
\begin{equation*}
\lambda_{A, B}(\psi)^= \bullet \mathit{id}_\gamma = \mathit{id}_{\lambda_{A, B}(\psi) \bullet \gamma}.
\end{equation*}

\end{enumerate}
\fi

\paragraph{\textsc{$\Pi$-Elim.}}
Given $\varphi : \widehat{\Pi}(\Gamma, \Pi(A, B))$, $\alpha : \widehat{\Pi}(\Gamma, A)$, we define $\mathit{App}_{A, B}(\varphi, \alpha) : \widehat{\Pi}(\Gamma, B\{\overline{\alpha}\})$ by:
\begin{align*}
\mathit{App}_{A, B}(\varphi, \alpha) &\stackrel{\mathrm{df. }}{=} \lambda_{A, B}^{-1}(\varphi) \bullet \overline{\alpha} \\
\mathit{App}_{A, B}(\varphi, \alpha)^= &\stackrel{\mathrm{df. }}{=} \lambda_{A, B}^{-1}(\varphi)^= \bullet \overline{\alpha}^=
\end{align*}
where $\overline{\alpha} \stackrel{\mathrm{df. }}{=} \langle \mathit{der}_\Gamma, \alpha \rangle : \Gamma \to \widehat{\Sigma}(\Gamma, A)$.
I.e., we define $\mathit{App}_{A, B}(\varphi, \alpha) \stackrel{\mathrm{df. }}{=} \lambda_{A, B}^{-1}(\varphi) \{ \overline{\alpha} \}$, and so $\mathit{App}_{A, B}(\varphi, \alpha) : \widehat{\Pi}(\Gamma, B\{\overline{\alpha}\})$ by Theorem~\ref{ThmPGE}.

\paragraph{\textsc{$\Pi$-Comp.}}
By a simple calculation, we have:
\begin{align*}
\mathit{App}_{A, B}(\lambda_{A, B}(\psi), \alpha) &= \lambda_{A, B}^{-1}(\lambda_{A, B}(\psi)) \{ \overline{\alpha} \} \\
&= \psi \{ \overline{\alpha} \}.
\end{align*}

\paragraph{\textsc{$\Pi$-Subst.}} Given $\Delta \in \mathcal{PGE}$, $\phi : \Delta \to \Gamma$ in $\mathcal{PGE}$, we have: 
\begin{align*}
\textstyle \Pi (A, B) \{ \phi \}(\delta) &= \textstyle \widehat{\Pi} (A(\phi \bullet \delta), B_{\phi \bullet \delta}) \\
&= \textstyle \widehat{\Pi} (A\{\phi\}(\delta), B\{\phi^+\}_\delta) \\
&= \textstyle \Pi (A\{\phi\}, B\{\phi^+\})(\delta)
\end{align*}
for all $\delta : \Delta$, where $\phi^+ \stackrel{\mathrm{df. }}{=} \langle \phi \bullet \mathit{p}(A \{ \phi \}), \mathit{v}_{A\{\phi\}} \rangle : \widehat{\Sigma} (\Delta,  A\{\phi\}) \to \widehat{\Sigma} (\Gamma, A)$ and $B\{\phi^+\} \in \mathscr{D}(\widehat{\Sigma} (\Delta, A\{\phi\}))$. 
Note that $B_{\phi \bullet \delta} = B \{ \phi^+ \}_\delta : A\{\phi\}(\delta) \to \mathcal{PGE}$ because
\begin{align*}
B_{\phi \bullet \delta}(\sigma) &= B(\phi \bullet \delta, \sigma) \\
&= B\{ \phi^+ \}(\delta, \sigma) \\
&= B\{ \phi^+ \}_\delta (\sigma)
\end{align*}
for all $\sigma : A\{ \phi \}(\delta)$, and similarly
\begin{align*}
B_{\phi \bullet \delta}(\varrho) &= B(\mathit{id}_{\phi \bullet \delta}, \varrho) \\
&= B\{ \phi^+ \}(\mathit{id}_\delta, \varrho) \\
&= B\{ \phi^+ \}_\delta (\varrho)
\end{align*}
for all $\sigma, \sigma' : A\{ \phi \}(\delta)$, $\varrho : \sigma =_{A\{ \phi \}(\delta)} \sigma'$.
Since $\delta$ was arbitrary, we have shown that the object-maps of $\Pi (A, B) \{ \phi \}$ and $\Pi (A\{\phi\}, B\{\phi^+\})$ coincide.

Furthermore, we have:
\begin{align*}
\textstyle \Pi (A, B) \{ \phi \}(\vartheta) &= (\phi^= \bullet \vartheta)_{\Pi(A, B)} : \textstyle \widehat{\Pi}(A(\phi \bullet \delta), B_{\phi \bullet \delta}) \to \widehat{\Pi}(A(\phi \bullet \delta'), B_{\phi \bullet \delta'})  \\
&= \&_{\varphi : \widehat{\Pi}(A(\phi \bullet \delta), B_{\phi \bullet \delta})} \varphi \leftrightarrows B_{\phi^= \bullet \vartheta} \bullet \varphi \bullet A((\phi^= \bullet \vartheta)^{-1}) \\
&= \&_{\varphi : \widehat{\Pi}(A\{\phi\}, B\{\phi^+\})(\delta)} \varphi \leftrightarrows B\{\phi^+\}_\vartheta \bullet \varphi \bullet A\{\phi\}(\vartheta^{-1}) \\
&= \vartheta_{\Pi (A\{\phi\}, B\{\phi^+\})} \\
&= \textstyle \Pi (A\{\phi\}, B\{\phi^+\})(\vartheta)
\end{align*}
for all $\delta, \delta' : \Delta$, $\vartheta: \delta =_\Delta \delta'$.
Also, it is completely analogous to establish:
\begin{align*}
\textstyle \Pi (A, B) \{ \phi \}(\vartheta)^= = \textstyle \Pi (A\{\phi\}, B\{\phi^+\})(\vartheta)^=.
\end{align*}
I.e., $\Pi (A, B) \{ \phi \}(\vartheta)$ and $\Pi (A\{\phi\}, B\{\phi^+\})(\vartheta)$ are the same ep-strategy.
Since $\delta, \delta' : \Delta$, $\vartheta: \delta =_\Delta \delta'$ were arbitrarily chosen, it implies that the arrow-maps of $\Pi (A, B) \{ \phi \}$ and $\Pi (A\{\phi\}, B\{\phi^+\})$ coincide, which establishes the equality between functors $\Pi (A, B) \{ \phi \} = \Pi (A\{\phi\}, B\{\phi^+\}) : \Delta \to \mathcal{PGE}$.

\paragraph{\textsc{$\lambda$-Subst.}} For any $\psi : \widehat{\Pi} (\widehat{\Sigma} (\Gamma, A), B)$, it is not hard to see that:
\begin{align*}
&\lambda_{A, B} (\psi) \{ \phi \} = \lambda_{A, B} (\psi) \bullet \phi = \lambda_{A\{\phi\}, B\{\phi^+\}} (\psi \bullet \langle \phi \bullet \mathit{p}(A \{ \phi \}), \mathit{v}_{A\{\phi\}} \rangle) = \lambda_{A\{\phi\}, B\{\phi^+\}} (\psi \{ \phi^+ \}) \\
&\lambda_{A, B} (\psi) \{ \phi \}^= = \lambda_{A, B} (\psi)^= \bullet \phi^= = \lambda_{A\{\phi\}, B\{\phi^+\}} (\psi^= \bullet \langle \phi \bullet \mathit{p}(A \{ \phi \}), \mathit{v}_{A\{\phi\}} \rangle^=) = \lambda_{A\{\phi\}, B\{\phi^+\}} (\psi \{ \phi^+ \})^=
\end{align*}
showing that $\lambda_{A, B} (\psi) \{ \phi \}$ and $\lambda_{A\{\phi\}, B\{\phi^+\}} (\psi \{ \phi^+ \})$ are the same ep-strategy.

\paragraph{\textsc{App-Subst.}} Moreover, it is easy to see that:
\begin{align*}
\mathit{App}_{A, B} (\varphi, \alpha) \{ \phi \} &= (\lambda_{A, B}^{-1} (\varphi) \bullet \langle \mathit{der}_\Gamma, \alpha \rangle) \bullet \phi \\
&= \lambda_{A, B}^{-1} (\varphi) \bullet (\langle \mathit{der}_\Gamma, \alpha \rangle \bullet \phi) \\
&= \lambda_{A, B}^{-1} (\varphi) \bullet \langle \phi, \alpha \bullet \phi \rangle \\
&= \lambda_{A\{ \phi \}, B\{ \phi^+ \}}^{-1} (\varphi \bullet \phi) \bullet \overline{\alpha \bullet \phi} \\
&= \mathit{App}_{A\{ \phi \}, B\{ \phi^+ \}} (\varphi \{ \phi \}, \alpha \{ \phi \})
\end{align*}
as well as:
\begin{align*}
\mathit{App}_{A, B} (\varphi, \alpha) \{ \phi \}^= &= (\lambda_{A, B}^{-1} (\varphi)^= \bullet \langle \mathit{der}_\Gamma, \alpha \rangle^=) \bullet \phi^= \\
&= \lambda_{A, B}^{-1} (\varphi)^= \bullet (\langle \mathit{der}_\Gamma^=, \alpha^= \rangle \bullet \phi^=) \\
&= \lambda_{A, B}^{-1} (\varphi)^= \bullet \langle \phi^=, \alpha \{ \phi \}^= \rangle \\
&= \lambda_{A\{ \phi \}, B\{ \phi^+ \}}^{-1} (\varphi \bullet \phi)^= \bullet \overline{\alpha \{ \phi \}}^= \\
&= \mathit{App}_{A\{ \phi \}, B\{ \phi^+ \}} (\varphi \{ \phi \}, \alpha \{ \phi \})^=
\end{align*}
where $\overline{\tau \{ \phi \}} \stackrel{\mathrm{df. }}{=} \langle \mathit{der}_\Delta, \tau \{ \phi \} \rangle : \widehat{\Sigma}(\Delta, A \{ \phi \})$. 
Thus, we have shown that $\mathit{App}_{A, B} (\varphi, \alpha) \{ \phi \}$ and $\mathit{App}_{A\{ \phi \}, B\{ \phi^+ \}} (\varphi \{ \phi \}, \alpha \{ \phi \})$ are the same ep-strategy.

\paragraph{\textsc{$\lambda$-Unique.}} Finally, if $\mu : \widehat{\Pi}(\widehat{\Sigma}(\Gamma, A), \Pi(A, B)\{\mathit{p}(A)\})$ in $\mathcal{PGE}$, then we clearly have:
\begin{align*}
&\lambda (\mathit{App}(\mu, \mathit{v}_A)) = \lambda (\lambda^{-1}(\mu) \bullet \langle \mathit{der}_{\widehat{\Sigma}(\Gamma, A)}, \mathit{v}_A \rangle) = \lambda (\lambda^{-1} (\mu)) = \mu \\
&\lambda (\mathit{App}(\mu, \mathit{v}_A))^= = \lambda (\lambda^{-1}(\mu)^= \bullet \langle \mathit{der}_{\widehat{\Sigma}(\Gamma, A)}^=, \mathit{v}_A^= \rangle) = \lambda (\lambda^{-1} (\mu^=)) = \mu^=
\end{align*}
showing that $\lambda (\mathit{App}(\mu, \mathit{v}_A))$ and $\mu$ are the same ep-strategy. 
\end{proof}

\subsubsection{Game-semantic Dependent Pair Types}
\label{Sum}
Next, we consider $\Sigma$-types. Again, we begin with the general, categorical definition:
\begin{definition}[\textsf{CwF}s with $\Sigma$-types \cite{hofmann1997syntax}]
A \textsf{CwF} $\mathcal{C}$ \emph{\bfseries supports $\bm{\Sigma}$-types} if:
\begin{itemize}

\item {\bfseries \sffamily $\bm{\Sigma}$-Form.} For any $\Gamma \in \mathcal{C}$, $A \in \mathit{Ty}(\Gamma)$, $B \in \mathit{Ty}(\Gamma . A)$, there is a type
\begin{equation*}
\textstyle \Sigma (A, B) \in \mathit{Ty}(\Gamma).
\end{equation*}

\item {\bfseries \sffamily $\bm{\Sigma}$-Intro.} There is a morphism in $\mathcal{C}$
\begin{equation*}
\textstyle \mathit{Pair}_{A, B} : \Gamma . A . B \to \Gamma . \Sigma (A, B).
\end{equation*}

\item {\bfseries \sffamily $\bm{\Sigma}$-Elim.} For any $P \in \mathit{Ty}(\Gamma . \Sigma (A, B))$, $\psi \in \mathit{Tm}(\Gamma . A . B, P \{ \mathit{Pair}_{A, B} \})$, there is a term 
\begin{equation*}
\textstyle \mathcal{R}^{\Sigma}_{A, B, P}(\psi) \in \mathit{Tm}(\Gamma . \Sigma (A, B), P). 
\end{equation*}

\item {\bfseries \sffamily $\bm{\Sigma}$-Comp.} $\mathcal{R}^{\Sigma}_{A, B, P}(\psi) \{ \mathit{Pair}_{A, B}\} = \psi$ for all $\psi \in \mathit{Tm}(\Gamma . A . B, P \{ \mathit{Pair}_{A, B} \})$.

\item {\bfseries \sffamily $\bm{\Sigma}$-Subst.} For any $\Delta \in \mathcal{C}$, $\phi : \Delta \to \Gamma$ in $\mathcal{C}$, we have:
\begin{equation*}
\textstyle \Sigma (A, B) \{ \phi \} = \Sigma (A\{\phi\}, B\{\phi^+\})
\end{equation*}
where $\phi^+ \stackrel{\mathrm{df. }}{=} \langle \phi \circ \mathit{p}(A\{ \phi \}), \mathit{v}_{A\{\phi\}} \rangle_A : \Delta . A\{\phi\} \to \Gamma . A$.

\item {\bfseries \sffamily Pair-Subst.} Under the same assumption, we have:
\begin{align*}
\textstyle \mathit{p}(\Sigma (A, B)) \circ \mathit{Pair}_{A, B} &= \mathit{p}(A) \circ \mathit{p}(B) \\
\phi^* \circ \mathit{Pair}_{A\{\phi\}, B\{\phi^+\}} &= \mathit{Pair}_{A, B} \circ \phi^{++} 
\end{align*}
where $\phi^* \stackrel{\mathrm{df. }}{=} \langle \phi \circ \mathit{p}(\Sigma(A,B)\{\phi\}), \mathit{v}_{\Sigma(A,B)\{\phi\}} \rangle_{\Sigma(A,B)} : \Delta . \Sigma(A,B)\{\phi\} \to \Gamma . \Sigma(A,B)$ and $\phi^{++} \stackrel{\mathrm{df. }}{=} \langle \phi^+ \circ \mathit{p}(B\{\phi^+\}), \mathit{v}_{B\{\phi^+\}} \rangle_B : \Delta . A\{\phi\} . B\{\phi^+\} \to \Gamma . A . B$.

\item {\bfseries \sffamily$\bm{\mathcal{R}^{\Sigma}}$-Subst.} Finally, under the same assumption, we have:
\begin{equation*}
\textstyle \mathcal{R}^{\Sigma}_{A, B, P}(\psi) \{\phi^*\} = \mathcal{R}^{\Sigma}_{A\{\phi\}, B\{\phi^+\}, P\{\phi^*\}} (\psi \{ \phi^{++} \}).
\end{equation*}

\end{itemize}
Moreover, $\mathcal{C}$ \emph{\bfseries supports $\bm{\Sigma}$-types in the strict sense} if it additionally satisfies:
\begin{itemize}
\item {\bfseries \sffamily$\bm{\mathcal{R}^{\Sigma}}$-Uniq.} If any $\psi \in \mathit{Tm}(\Gamma . A . B, P \{ \mathit{Pair}_{A, B} \})$, $\varphi \in \mathit{Tm}(\Gamma . \Sigma(A, B), P)$ satisfy the equation $\varphi \{ \mathit{Pair}_{A, B} \} = \psi$, then $\varphi = \mathcal{R}^{\Sigma}_{A, B, P}(\psi)$.
\end{itemize}
\end{definition}

We now present our interpretation of $\Sigma$-types:
\begin{theorem}[Game-semantic $\Sigma$-types]
\label{ThmGameTheoreticSumTypes}
The CwF $\mathcal{PGE}$ supports $\Sigma$-types.
\end{theorem}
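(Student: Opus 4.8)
The plan is to run the argument in close parallel with the proof of Theorem~\ref{ThmGameTheoreticPiTypes}, reducing the substance of the claim to the fact that the dependent pair space construction $\widehat{\Sigma}$ of Definition~\ref{DefDependentPairSpace} is associative up to a canonical isomorphism in $\mathcal{PGE}$. Fix $\Gamma \in \mathcal{PGE}$, a DGwE $A : \Gamma \to \mathcal{PGE}$ and a DGwE $B : \widehat{\Sigma}(\Gamma, A) \to \mathcal{PGE}$, and recall from the proof of Theorem~\ref{ThmGameTheoreticPiTypes} the fibre $B_\gamma : A(\gamma) \to \mathcal{PGE}$, $B_\gamma(\sigma) \stackrel{\mathrm{df.}}{=} B(\langle \gamma, \sigma \rangle)$, $B_\gamma(\varrho) \stackrel{\mathrm{df.}}{=} B(\langle \mathit{id}_\gamma, \varrho \rangle)$, together with the natural transformation $\mathsf{nat}(B_\rho) : B_\gamma \Rightarrow B_{\gamma'}\{ A(\rho) \}$ and the strategies $B_\rho : \uplus B_\gamma \Rightarrow \uplus B_{\gamma'}\{ A(\rho) \}$, $B_\rho^= : \uplus =_{B_\gamma} \Rightarrow \uplus =_{B_{\gamma'} \{ A(\rho) \}}$ that organise its components, for each $\rho : \gamma =_\Gamma \gamma'$.

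For \textsc{$\Sigma$-Form} I would define the DGwE $\Sigma(A, B) : \Gamma \to \mathcal{PGE}$ by $\Sigma(A, B)(\gamma) \stackrel{\mathrm{df.}}{=} \widehat{\Sigma}(A(\gamma), B_\gamma)$ on objects, and for $\rho : \gamma =_\Gamma \gamma'$ let $\Sigma(A, B)(\rho) : \widehat{\Sigma}(A(\gamma), B_\gamma) \to \widehat{\Sigma}(A(\gamma'), B_{\gamma'})$ be the ep-strategy whose action is $\langle \sigma, \tau \rangle \mapsto \langle A(\rho) \bullet \sigma, B_\rho \bullet \tau \rangle$ on objects and $\langle \rho_0, \varrho_0 \rangle \mapsto \langle A(\rho)^= \bullet \rho_0, B_\rho^= \bullet \varrho_0 \rangle$ on morphisms, the second component typechecking because the naturality square of $\mathsf{nat}(B_\rho)$ yields $B_{\gamma'}(A(\rho)^= \bullet \rho_0) \bullet B_\rho \bullet \tau = B_\rho \bullet B_\gamma(\rho_0) \bullet \tau$. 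That $\Sigma(A, B)(\rho)$ is a total, innocent, wb and noetherian ep-strategy, and that $\Sigma(A, B)$ is a uniform functor, then follow from the corresponding properties of $A$, $B$, $A(\rho)$ and $B_\rho$, exactly along the lines of Lemma~\ref{LemWellDefinedDependentPairSpace}; the only genuinely new bookkeeping is the interplay of the $\odot$-composition of $\widehat{\Sigma}$ with the naturality squares of the $B_\rho$, which I would discharge by the same sequence of rewrites as in Lemma~\ref{LemWellDefinedDependentPairSpace}.

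The crux is \textsc{$\Sigma$-Intro}. There is an evident bijection between the objects $\langle \langle \gamma, \sigma \rangle, \tau \rangle$ of $\widehat{\Sigma}(\widehat{\Sigma}(\Gamma, A), B)$ (with $\tau : B(\langle \gamma, \sigma \rangle)$) and the objects $\langle \gamma, \langle \sigma, \tau \rangle \rangle$ of $\widehat{\Sigma}(\Gamma, \Sigma(A, B))$ (with $\langle \sigma, \tau \rangle : \widehat{\Sigma}(A(\gamma), B_\gamma)$, i.e. $\tau : B_\gamma(\sigma) = B(\langle \gamma, \sigma \rangle)$), and likewise between their morphisms via $\langle \langle \rho, \varrho \rangle, \vartheta \rangle \leftrightarrow \langle \rho, \langle \varrho, \vartheta \rangle \rangle$. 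I would take $\mathit{Pair}_{A, B} : \widehat{\Sigma}(\widehat{\Sigma}(\Gamma, A), B) \to \widehat{\Sigma}(\Gamma, \Sigma(A, B))$ to be the ep-strategy realising this reassociation: its underlying strategy and equality-preservation merely permute the tags of the nested $\&$-components, so they are ``copy-cat-like'' and hence total, innocent, wb and noetherian by the same reasoning as for derelictions in Lemma~\ref{LemWellDefinedDerelictions}, and by Theorem~\ref{ThmIsomThm} (applied componentwise) $\mathit{Pair}_{A, B}$ is invertible, with $\mathit{Pair}_{A, B}^{-1}$ the inverse reassociation. Given this, \textsc{$\Sigma$-Elim} is obtained by $\mathcal{R}^\Sigma_{A, B, P}(\psi) \stackrel{\mathrm{df.}}{=} \psi \{ \mathit{Pair}_{A, B}^{-1} \}$, which lies in $\mathit{Tm}(\widehat{\Sigma}(\Gamma, \Sigma(A, B)), P)$ since $P\{\mathit{Pair}_{A, B}\}\{\mathit{Pair}_{A, B}^{-1}\} = P\{\mathit{Pair}_{A, B} \bullet \mathit{Pair}_{A, B}^{-1}\} = P$ by \textsf{Ty-Comp} and \textsf{Ty-Id}; \textsc{$\Sigma$-Comp}, $\mathcal{R}^\Sigma_{A, B, P}(\psi)\{\mathit{Pair}_{A, B}\} = \psi\{\mathit{Pair}_{A, B}^{-1} \bullet \mathit{Pair}_{A, B}\} = \psi$, is then immediate from \textsf{Tm-Comp} and \textsf{Tm-Id}, and the same computation gives \textsc{$\mathcal{R}^\Sigma$-Uniq}, so in fact $\mathcal{PGE}$ supports $\Sigma$-types in the strict sense.

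It remains to treat the substitution laws, all of which are bookkeeping. \textsc{$\Sigma$-Subst}, $\Sigma(A, B)\{\phi\} = \Sigma(A\{\phi\}, B\{\phi^+\})$, is proved exactly as \textsc{$\Pi$-Subst} in Theorem~\ref{ThmGameTheoreticPiTypes}, using the identity $B_{\phi \bullet \delta} = B\{\phi^+\}_\delta$ established there. In \textsc{Pair-Subst}, the equation $\mathit{p}(\Sigma(A, B)) \bullet \mathit{Pair}_{A, B} = \mathit{p}(A) \bullet \mathit{p}(B)$ is immediate since both morphisms send $\langle \langle \gamma, \sigma \rangle, \tau \rangle$ to $\gamma$ (and act correspondingly on morphisms), while $\phi^* \bullet \mathit{Pair}_{A\{\phi\}, B\{\phi^+\}} = \mathit{Pair}_{A, B} \bullet \phi^{++}$ expresses that reassociation commutes with reindexing along $\phi$, which I would verify by unwinding $\phi^+$, $\phi^{++}$ and $\phi^*$ on the nested pairs; \textsc{$\mathcal{R}^\Sigma$-Subst} then follows by combining $\mathcal{R}^\Sigma_{A, B, P}(\psi) = \psi\{\mathit{Pair}_{A, B}^{-1}\}$ with the latter equation and the CwF laws of Theorem~\ref{ThmPGE}. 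I expect the main obstacle to be clerical rather than conceptual: keeping $\mathit{Pair}_{A, B}$ and $\mathit{Pair}_{A, B}^{-1}$ coherent with the derived context morphisms $\phi^+$, $\phi^{++}$, $\phi^*$ while simultaneously tracking the $\odot$-composition inside $\widehat{\Sigma}$; modulo that, the proof is a routine transcription of the groupoid-model argument on top of the CwF structure already in place.
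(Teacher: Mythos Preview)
Your proposal is correct and follows essentially the same approach as the paper: the $\Sigma$-Form action on morphisms via $\langle \sigma, \tau \rangle \mapsto \langle A(\rho) \bullet \sigma, B_\rho \bullet \tau \rangle$ (and analogously on equalities), the definition of $\mathit{Pair}_{A,B}$ as the reassociation isomorphism (what the paper calls ``the identity morphism up to tags''), and the construction $\mathcal{R}^\Sigma_{A,B,P}(\psi) \stackrel{\mathrm{df.}}{=} \psi\{\mathit{Pair}_{A,B}^{-1}\}$ together with the substitution calculations all match the paper's proof. The paper spells out the functoriality checks for $\rho_{\Sigma(A,B)}$ and for $\Sigma(A,B)$ in more detail (using the $\odot$-composition and the equations $B_{\rho' \bullet \rho} = B_{\rho'} \bullet B_\rho$), but your outline correctly identifies these as the only non-trivial bookkeeping and points to the right ingredients.
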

\begin{proof}
Let $\Gamma \in \mathcal{PGE}$, and $A : \Gamma \to \mathcal{PGE}$, $B : \widehat{\Sigma}(\Gamma, A) \to \mathcal{PGE}$ be DGwEs.

\paragraph{\textsc{$\Sigma$-Form.}} Similarly to the case of $\Pi$-types, we define the DGwG $\Sigma(A, B) : \Gamma \to \mathcal{PGE}$ by:
\begin{align*}
\textstyle \Sigma(A, B)(\gamma) &\stackrel{\mathrm{df. }}{=} \textstyle \widehat{\Sigma}(A(\gamma), B_\gamma) \\
\textstyle \Sigma(A, B)(\rho) &\stackrel{\mathrm{df. }}{=} \rho_{\Sigma(A, B)} : \textstyle \widehat{\Sigma}(A(\gamma), B_\gamma) \to \widehat{\Sigma}(A(\gamma'), B_{\gamma'})
\end{align*}
for all $\gamma, \gamma' : \Gamma$, $\rho : \gamma =_\Gamma \gamma'$, where $\rho_{\Sigma(A, B)}$ is the ep-strategy defined by:
\begin{align*}
\rho_{\Sigma(A, B)} &\stackrel{\mathrm{df. }}{=} \&_{\langle \sigma, \tau \rangle : \widehat{\Sigma}(A(\gamma), B_\gamma)} \langle \sigma, \tau \rangle \leftrightarrows \langle A(\rho) \bullet \sigma, B_\rho \bullet \tau \rangle \\
\rho_{\Sigma(A, B)}^= &\stackrel{\mathrm{df. }}{=} \&_{\langle \sigma_1, \tau_1 \rangle, \langle \sigma_2, \tau_2 \rangle : \widehat{\Sigma}(A(\gamma), B_\gamma), \langle \varrho, \vartheta \rangle : \langle \sigma_1, \tau_1 \rangle =_{\widehat{\Sigma}(A(\gamma), B_\gamma)} \langle \sigma_2, \tau_2 \rangle} \langle \varrho, \vartheta \rangle \leftrightarrows \langle A(\rho)^= \bullet \varrho, B_\rho^= \bullet \vartheta \rangle.
\end{align*}
It is straightforward to show the functoriality of $\rho_{\Sigma(A, B)}$: Let $\langle \sigma_1, \tau_1 \rangle, \langle \sigma_2, \tau_2 \rangle, \langle \sigma_3, \tau_3 \rangle : \widehat{\Sigma}(A(\gamma), B_\gamma)$, $\langle \varrho_1, \vartheta_1 \rangle : \langle \sigma_1, \tau_1 \rangle =_{\widehat{\Sigma}(A(\gamma), B_{\gamma})} \langle \sigma_2, \tau_2 \rangle$, $\langle \varrho_2, \vartheta_2 \rangle : \langle \sigma_2, \tau_2 \rangle =_{\widehat{\Sigma}(A(\gamma), B_{\gamma})} \langle \sigma_3, \tau_3 \rangle$.
\begin{enumerate}

\item As $A(\rho)^= \bullet \varrho_1 : A(\rho) \bullet \sigma_1 =_{A(\gamma')} A(\rho) \bullet \sigma_2$ and $B_\rho^= \bullet \vartheta_1 : B_{\gamma'}\{A(\rho)\}(\varrho_1) \bullet B_\rho \bullet \tau_1 =_{B_{\gamma'}(\sigma_2)} B_\rho \bullet \tau_2$, we have $\langle A(\rho)^= \bullet \varrho_1, B_\rho^= \bullet \vartheta_1 \rangle : \langle A(\rho) \bullet \sigma_1, B_\rho \bullet \tau_1 \rangle =_{\widehat{\Sigma}(A(\gamma'), B_{\gamma'})} \langle A(\rho) \bullet \sigma_2, B_\rho \bullet \tau_2 \rangle$, i.e.,
\begin{equation*}
\rho_{\Sigma(A, B)}^= \bullet \langle \varrho_1, \vartheta_1 \rangle : \rho_{\Sigma(A, B)} \bullet \langle \sigma_1, \tau_1 \rangle =_{\widehat{\Sigma}(A(\gamma'), B_{\gamma'})} \rho_{\Sigma(A, B)} \bullet \langle \sigma_2, \tau_2 \rangle.
\end{equation*}
Thus, $\rho_{\Sigma(A, B)}$ respects domain and codomain.

\item $\rho_{\Sigma(A, B)}$ respects composition:
\begin{align*}
\rho_{\Sigma(A, B)}^= \bullet (\langle \varrho_2, \vartheta_2 \rangle \bullet \langle \varrho_1, \vartheta_1 \rangle) &= \rho_{\Sigma(A, B)}^= \bullet \langle \varrho_2 \bullet \varrho_1, \vartheta_2 \odot \vartheta_1 \rangle \\
&= \langle A(\rho)^= \bullet (\varrho_2 \bullet \varrho_1), B_\rho^= \bullet (\vartheta_2 \odot \vartheta_1) \rangle \\
&= \langle (A(\rho)^= \bullet \varrho_2) \bullet (A(\rho)^= \bullet \varrho_1), (B_\rho^= \bullet \vartheta_2) \odot (B_\rho^= \bullet \vartheta_1) \rangle \\
&= \langle A(\rho)^= \bullet \varrho_2, B_\rho^= \bullet \vartheta_2 \rangle \bullet \langle A(\rho)^= \bullet \varrho_1, B_\rho^= \bullet \vartheta_1 \rangle \\
&= (\rho_{\Sigma(A, B)}^= \bullet \langle \varrho_2, \vartheta_2 \rangle) \bullet (\rho_{\Sigma(A, B)}^= \bullet \langle \varrho_1, \vartheta_1 \rangle).
\end{align*}

\item $\rho_{\Sigma(A, B)}$ respects identities:
\begin{align*}
\rho_{\Sigma(A, B)}^= \bullet \mathit{id}_{\langle \sigma_1, \tau_1 \rangle} &= \rho_{\Sigma(A, B)}^= \bullet \langle \mathit{id}_{\sigma_1}, \mathit{id}_{\tau_1} \rangle \\
&= \langle A(\rho)^= \bullet \mathit{id}_{\sigma_1}, B_\rho^= \bullet \mathit{id}_{\tau_1} \rangle \\
&= \langle \mathit{id}_{A(\rho) \bullet \sigma_1}, \mathit{id}_{B_\rho \bullet \tau_1} \rangle \\
&= \mathit{id}_{\langle A(\rho) \bullet \sigma_1, B_\rho \bullet \tau_1 \rangle} \\
&= \mathit{id}_{\rho_{\Sigma(A, B)} \bullet \langle \sigma_1, \tau_1 \rangle}.
\end{align*}

\end{enumerate}
Note that $\rho_{\Sigma(A, B)}$, $\rho_{\Sigma(A, B)}^=$ are both total, innocent, wb and noetherian by the same argument as the case of $\rho_{\Pi(A, B)}$, $\rho_{\Pi(A, B)}^=$.
Therefore we have shown that $\rho_{\Sigma(A, B)}$ is a well-defined morphism $\widehat{\Sigma}(A(\gamma), B_\gamma) \to \widehat{\Sigma}(A(\gamma'), B_{\gamma'})$ in $\mathcal{PGE}$.

Now, we show the functoriality of $\Sigma(A, B)$.
Let $\gamma, \gamma', \gamma'' : \Gamma$, $\rho : \gamma =_\Gamma \gamma'$, $\rho' : \gamma' =_\Gamma \gamma''$. 
We have already seen that $\Sigma(A, B)$ respects domain and codomain.
It remains to verify that $\Sigma(A, B)$ respects composition and identities. For all $\langle \sigma_1, \tau_1 \rangle, \langle \sigma_2, \tau_2 \rangle : \widehat{\Sigma}(A(\gamma), B_\gamma)$, $\langle \varrho, \vartheta \rangle : \langle \sigma_1, \tau_1 \rangle =_{\widehat{\Sigma}(A(\gamma), B_\gamma)} \langle \sigma_2, \tau_2 \rangle$, we have:
\begin{align*}
(\rho' \bullet \rho)_{\Sigma(A, B)}^= \bullet \langle \varrho, \vartheta \rangle &= \langle A(\rho' \bullet \rho)^= \bullet \varrho, B_{\rho' \bullet \rho}^= \bullet \vartheta \rangle \\
&= \langle A(\rho')^= \bullet A(\rho)^= \bullet \varrho, B_{\rho'}^= \bullet B_\rho^= \bullet \vartheta \rangle \\
&= \rho'_{\Sigma(A, B)} \bullet \langle A(\rho)^= \bullet \varrho, B_\rho^= \bullet \vartheta \rangle \\
&= \rho'^=_{\Sigma(A, B)} \bullet (\rho_{\Sigma(A, B)}^= \bullet \langle \varrho, \vartheta \rangle)
\end{align*}
as well as:
\begin{align*}
(\mathit{id}_\gamma)_{\Sigma(A, B)}^= \bullet \langle \varrho, \vartheta \rangle &= \langle A(\mathit{id}_\gamma)^= \bullet \varrho, B_{\mathit{id}_\gamma}^= \bullet \vartheta \rangle \\
&= \langle \mathit{id}_{A(\gamma)}^= \bullet \varrho, \mathit{id}_{B_\gamma}^= \bullet \vartheta \rangle \\
&= \langle \mathit{id}_{=_{A(\gamma)}} \bullet \varrho, \mathit{id}_{=_{B_\gamma}} \bullet \vartheta \rangle \\
&= \langle \varrho, \vartheta \rangle.
\end{align*}
Thus, we have shown that $\Sigma(A, B)$ is a well-defined DGwE over $\Gamma$.

\paragraph{\textsc{$\Sigma$-Intro.}} As shown in \cite{yamada2016game}, there is the obvious correspondence:
\begin{equation*}
\textstyle \mathsf{ob}(\widehat{\Sigma}(\widehat{\Sigma}(\Gamma, A), B)) \cong \mathsf{ob}(\widehat{\Sigma}(\Gamma, \Sigma(A, B)))
\end{equation*}
We may extend this correspondence to morphisms as well in the obvious way.
Accordingly, we may define the ep-strategy
\begin{equation*}
\mathit{Pair}_{A, B} : \textstyle \widehat{\Sigma}(\widehat{\Sigma}(\Gamma, A), B) \to \widehat{\Sigma}(\Gamma, \Sigma(A, B))
\end{equation*}
to be the identity morphism in the category $\mathcal{PGE}$ up to tags for disjoint union.

\paragraph{\textsc{$\Sigma$-Elim.}} Given $P \in \mathit{Ty}(\widehat{\Sigma}(\Gamma, \Sigma(A, B)))$, $\psi : \widehat{\Pi}(\widehat{\Sigma}(\widehat{\Sigma}(\Gamma, A), B), P\{ \mathit{Pair}_{A, B} \})$, we define:
\begin{equation*}
\mathcal{R}^{\Sigma}_{A, B, P}(\psi) \stackrel{\mathrm{df. }}{=} \psi \{ \mathit{Pair}_{A, B}^{-1} \} : \textstyle \widehat{\Pi}(\widehat{\Sigma}(\Gamma, \Sigma(A, B)), P).
\end{equation*}

\paragraph{\textsc{$\Sigma$-Comp.}} We then have:
\begin{align*}
\mathcal{R}^{\Sigma}_{A, B, P}(\psi)\{ \mathit{Pair}_{A, B} \} &= \psi \{ \mathit{Pair}_{A, B}^{-1} \} \{ \mathit{Pair}_{A, B} \} \\
&= \psi \{ \mathit{Pair}_{A, B}^{-1} \bullet \mathit{Pair}_{A, B} \} \\
&= \psi \{ \mathit{id}_{\widehat{\Sigma}(\widehat{\Sigma}(\Gamma, A), B)} \}\\
&= \psi.
\end{align*}

\paragraph{\textsc{$\Sigma$-Subst.}} Completely analogous to the case of $\Pi$-types.

\paragraph{\textsc{Pair-Subst.}}
As shown in \cite{yamada2016game}, we clearly have $\mathit{p}(\Sigma(A, B)) \bullet \mathit{Pair}_{A, B} = \mathit{p}(A) \bullet \mathit{p}(B)$, and:
\begin{align*}
\phi^* \bullet \mathit{Pair}_{A\{\phi\}, B\{\phi^+\}} &= \textstyle \langle \phi \bullet \mathit{p}(\Sigma(A,B)\{\phi\}), \mathit{v}_{\Sigma(A,B)\{\phi\}} \rangle \bullet \mathit{Pair}_{A\{\phi\}, B\{\phi^+\}} \\
&= \textstyle \langle \phi \bullet \mathit{p}(\Sigma(A\{\phi\},B\{\phi^+\})) \bullet \mathit{Pair}_{A\{\phi\}, B\{\phi^+\}}, \mathit{v}_{\Sigma(A,B)\{\phi\}} \bullet \mathit{Pair}_{A\{\phi\}, B\{\phi^+\}} \rangle \\
&= \langle \phi \bullet \mathit{p}(A\{\phi\}) \bullet \mathit{p}(B\{\phi^+\}), \mathit{v}_{\Sigma(A\{\phi\},B\{\phi^+\})} \bullet \mathit{Pair}_{A\{\phi\}, B\{\phi^+\}} \rangle \\
&= \mathit{Pair}_{A, B} \bullet \langle \langle \phi \bullet \mathit{p}(A\{ \phi \}), \mathit{v}_{A\{\phi\}} \rangle \bullet \mathit{p}(B\{\phi^+\}), \mathit{v}_{B\{\phi^+\}} \rangle \\
&= \mathit{Pair}_{A, B} \bullet \langle \phi^+ \bullet \mathit{p}(B\{\phi^+\}), \mathit{v}_{B\{\phi^+\}} \rangle \\
&= \mathit{Pair}_{A, B} \bullet \phi^{++} 
\end{align*}
where $\phi^* \stackrel{\mathrm{df. }}{=} \langle \phi \bullet \mathit{p}(\Sigma(A,B)\{\phi\}), \mathit{v}_{\Sigma(A,B)\{\phi\}} \rangle$, $\phi^{++} \stackrel{\mathrm{df. }}{=} \langle \phi^+ \bullet \mathit{p}(B\{\phi^+\}), \mathit{v}_{B\{\phi^+\}} \rangle$.

It is completely analogous to establish the equations on equality-preservations: 
\begin{align*}
\textstyle (\mathit{p}(\Sigma(A, B)) \bullet \mathit{Pair}_{A, B})^= &= (\mathit{p}(A) \bullet \mathit{p}(B))^= \\
\textstyle (\phi^* \bullet \mathit{Pair}_{A\{\phi\}, B\{\phi^+\}})^= &= (\mathit{Pair}_{A, B} \bullet \phi^{++})^=.
\end{align*}

\paragraph{\textsc{$\mathcal{R}^{\Sigma}$-Subst.}} Clearly, we have:
\begin{align*}
\textstyle \mathcal{R}^{\Sigma}_{A, B, P}(\psi) \{\phi^*\} &= \textstyle \psi \bullet \mathit{Pair}_{A, B}^{-1} \bullet \langle \phi \bullet \mathit{p}(\Sigma(A,B)\{\phi\}, \mathit{v}_{\Sigma(A,B)\{\phi\}} \rangle \\
&= (\psi \bullet \langle \phi^+ \bullet \mathit{p}(B\{\phi^+\}), \mathit{v}_{B\{\phi^+\}} \rangle) \bullet \mathit{Pair}_{A\{\phi\}, B\{\phi^+\}}^{-1} \\
&= \mathcal{R}^{\Sigma}_{A\{\phi\}, B\{\phi^+\}, P\{\phi^*\}} (\psi \bullet \langle \phi^+ \bullet \mathit{p}(B\{\phi^+\}), \mathit{v}_{B\{\phi^+\}} \rangle ) \\
&= \mathcal{R}^{\Sigma}_{A\{\phi\}, B\{\phi^+\}, P\{\phi^*\}} (\psi \{ \phi^{++} \}).
\end{align*}

And again, it is similar to show the equality of equality-preservations:
\begin{equation*}
\mathcal{R}^{\Sigma}_{A, B, P}(\psi) \{\phi^*\}^= = \mathcal{R}^{\Sigma}_{A\{\phi\}, B\{\phi^+\}, P\{\phi^*\}} (\psi \{ \phi^{++} \})^=.
\end{equation*}

\paragraph{\textsc{$\mathcal{R}^{\Sigma}$-Uniq.}} If $\psi \in \mathit{Tm}(\Gamma . A . B, P \{ \mathit{Pair}_{A, B} \})$ and $\varphi \in \mathit{Tm}(\Gamma . \Sigma(A, B), P)$ satisfy $\varphi \{ \mathit{Pair}_{A, B} \} = \psi$, i.e., $\varphi \bullet \mathit{Pair}_{A, B} = \psi$ and $\varphi^= \bullet \mathit{Pair}_{A, B}^= = \psi^=$, then 
\begin{align*}
\varphi &= \psi \bullet \mathit{Pair}_{A, B}^{-1} = \mathcal{R}^{\Sigma}_{A, B, P}(\psi) \\
\varphi^= &= \psi^= \bullet (\mathit{Pair}_{A, B}^=)^{-1} = \mathcal{R}^{\Sigma}_{A, B, P}(\psi)^=.
\end{align*}

\end{proof}

\subsubsection{Game-semantic Identity Types}
\label{Id}
Next, we consider identity types. Again, we first review the general, categorical interpretation.
\begin{definition}[\textsf{CwF}s with identity types \cite{hofmann1997syntax}]
A \textsf{CwF} $\mathcal{C}$ \emph{\bfseries supports identity types} if:
\begin{itemize}

\item {\bfseries \sffamily Id-Form.} For any $\Gamma \in \mathcal{C}$, $A \in \mathit{Ty}(\Gamma)$, there is a type
\begin{equation*}
\mathsf{Id}_A \in \mathit{Ty}(\Gamma . A . A^+)
\end{equation*}
where $A^+ \stackrel{\mathrm{df. }}{=} A\{\mathit{p}(A)\} \in \mathit{Ty}(\Gamma . A)$.

\item {\bfseries \sffamily Id-Intro.} There is a morphism in $\mathcal{C}$
\begin{equation*}
\mathit{Refl}_A : \Gamma . A \to \Gamma . A . A^+ . \mathsf{Id}_A.
\end{equation*}

\item {\bfseries \sffamily Id-Elim.} For each $B \in \mathit{Ty}(\Gamma . A . A^+ . \mathsf{Id}_A)$, $\tau \in \mathit{Tm}(\Gamma . A, B\{\mathit{Refl}_A\})$, there is a term
\begin{equation*}
\mathcal{R}^{\mathsf{Id}}_{A,B}(\tau) \in \mathit{Tm}(\Gamma . A . A^+ . \mathsf{Id}_A, B).
\end{equation*}

\item {\bfseries \sffamily Id-Comp.} $\mathcal{R}^{\mathsf{Id}}_{A,B}(\tau)\{\mathit{Refl}_A\} = \tau$ for all $\tau \in \mathit{Tm}(\Gamma . A, B\{\mathit{Refl}_A\})$.

\item {\bfseries \sffamily Id-Subst.} For any $\Delta \in \mathcal{C}$, $\phi : \Delta \to \Gamma$ in $\mathcal{C}$, we have:
\begin{equation*}
\mathsf{Id}_A\{\phi^{++}\} = \mathsf{Id}_{A\{\phi\}} \in \mathit{Ty}(\Delta . A\{\phi\} . A\{\phi\}^+)
\end{equation*}
where $A\{\phi\}^+ \stackrel{\mathrm{df. }}{=} A\{\phi\}\{\mathit{p}(A\{\phi\})\} \in \mathit{Ty}(\Delta . A\{\phi\})$, $\phi^+ \stackrel{\mathrm{df. }}{=} \langle \phi \circ \mathit{p}(A\{ \phi \}), \mathit{v}_{A\{\phi\}} \rangle_A : \Delta . A\{\phi\} \to \Gamma . A$, and $\phi^{++} \stackrel{\mathrm{df. }}{=} \langle \phi^+ \circ \mathit{p}(A^+\{\phi^+\}), \mathit{v}_{A^+\{\phi^+\}} \rangle_{A^+} : \Delta . A\{\phi\} . A^+\{\phi^+\} \to \Gamma . A . A^+$.

\item {\bfseries \sffamily Refl-Subst.} Under the same assumption, the following equation holds
\begin{equation*}
\mathit{Refl}_A \circ \phi^+ = \phi^{+++} \circ \mathit{Refl}_{A\{\phi\}} : \Delta . A\{\phi\} \to \Gamma . A . A^+ . \mathsf{Id}_A
\end{equation*}
where $\phi^{+++} \stackrel{\mathrm{df. }}{=} \langle \phi^{++} \circ \mathit{p}(\mathsf{Id}_A\{\phi^{++}\}), \mathit{v}_{\mathsf{Id}_A\{\phi^{++}\}} \rangle_{\mathsf{Id}_A} : \Delta . A\{\phi\} . A^+\{\phi^+\} . \mathsf{Id}_{A\{\phi\}} \to \Gamma . A . A^+ . \mathsf{Id}_A$. Note that $\mathsf{Id}_A\{\phi^{++}\} = \mathsf{Id}_{A\{\phi\}}$, $A^+\{\phi^+\} = A\{\phi\}^+$.

\item {\bfseries \sffamily $\bm{\mathcal{R}^{\mathsf{Id}}}$-Subst.} $\mathcal{R}^{\mathsf{Id}}_{A,B}(\tau)\{\phi^{+++}\} = \mathcal{R}^{\mathsf{Id}}_{A\{\phi\}, B\{\phi^{+++}\}}(\tau\{\phi^+\})$ under the same assumption

\end{itemize}
\end{definition}

Now, let us give our interpretation of \textsf{Id}-types:
\begin{theorem}[Game-semantic \textsf{Id}-types]
\label{ThmGameTheoreticIdTypes}
The CwF $\mathcal{PGE}$ supports \textsf{Id}-types.
\end{theorem}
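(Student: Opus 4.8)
The plan is to transport the interpretation of identity types in the groupoid model \cite{hofmann1998groupoid} to our intensional setting, using the \emph{identity space} $\widehat{\mathsf{Id}}$ (Lemma~\ref{LemWellDefinedIdentitySpace}) as the basic building block; crucially, each $\widehat{\mathsf{Id}}_G(\sigma_1,\sigma_2)$ is \emph{discrete}, so no higher-coherence data ever needs to be produced and the required functors and ep-strategies are essentially forced. Throughout, recall that $\Gamma.A = \widehat{\Sigma}(\Gamma, A)$ and $A^+ = A\{\mathit{p}(A)\}$ with $A^+(\langle\gamma,\sigma_1\rangle) = A(\gamma)$, so that an object of $\Gamma.A.A^+$ is a triple $\langle\langle\gamma,\sigma_1\rangle,\sigma_2\rangle$ with $\sigma_1,\sigma_2 : A(\gamma)$ and a morphism is $\langle\langle\rho,\varrho_1\rangle,\varrho_2\rangle$ with $\rho : \gamma =_\Gamma \gamma'$ and $\varrho_i : A(\rho)\bullet\sigma_i =_{A(\gamma')}\sigma_i'$.

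For \textsc{Id-Form} I would define the DGwE $\mathsf{Id}_A : \Gamma.A.A^+ \to \mathcal{PGE}$ by $\mathsf{Id}_A(\langle\langle\gamma,\sigma_1\rangle,\sigma_2\rangle) \stackrel{\mathrm{df. }}{=} \widehat{\mathsf{Id}}_{A(\gamma)}(\sigma_1,\sigma_2)$ and, on a morphism $\langle\langle\rho,\varrho_1\rangle,\varrho_2\rangle$, by the conjugation functor sending $\rho_0 : \sigma_1 =_{A(\gamma)}\sigma_2$ to $\varrho_2\bullet(A(\rho)^=\bullet\rho_0)\bullet\varrho_1^{-1} : \sigma_1' =_{A(\gamma')}\sigma_2'$, equipped with the (trivial, the codomain being discrete) equality-preservation. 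Functoriality is a direct calculation from the functoriality of $A$ and $A(\rho)$ together with the groupoid laws of the $=_{A(\gamma)}$, and uniformity of $\mathsf{Id}_A$ is inherited from that of $A$ since its morphism-action is a conjugate of $A(\rho)$; this yields $\mathsf{Id}_A \in \mathit{Ty}(\Gamma.A.A^+)$. For \textsc{Id-Intro} I set $\mathit{Refl}_A : \Gamma.A \to \Gamma.A.A^+.\mathsf{Id}_A$ to be $\langle\gamma,\sigma\rangle \mapsto \langle\langle\langle\gamma,\sigma\rangle,\sigma\rangle,\mathit{id}_\sigma\rangle$ on objects and $\langle\rho,\varrho\rangle \mapsto \langle\langle\langle\rho,\varrho\rangle,\varrho\rangle,\mathit{id}_{\mathit{id}_{\sigma'}}\rangle$ on morphisms, which typechecks because $\mathsf{Id}_A(\langle\langle\rho,\varrho\rangle,\varrho\rangle)\bullet\mathit{id}_\sigma = \varrho\bullet\varrho^{-1} = \mathit{id}_{\sigma'}$; ep-ness is immediate.

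The crux is \textsc{Id-Elim}. The point is that the functor $F = \mathit{Refl}_A \bullet q : \Gamma.A.A^+.\mathsf{Id}_A \to \Gamma.A.A^+.\mathsf{Id}_A$, where $q = \mathit{p}(A^+)\bullet\mathit{p}(\mathsf{Id}_A)$ projects $(\gamma,\sigma_1,\sigma_2,\rho_0) \mapsto \langle\gamma,\sigma_1\rangle$, carries a natural transformation $e : F \Rightarrow \mathit{id}$ with components $e_{\gamma,\sigma_1,\sigma_2,\rho_0} \stackrel{\mathrm{df. }}{=} \langle\langle\langle\mathit{id}_\gamma,\mathit{id}_{\sigma_1}\rangle,\rho_0\rangle,\mathit{id}_{\rho_0}\rangle : \langle\langle\langle\gamma,\sigma_1\rangle,\sigma_1\rangle,\mathit{id}_{\sigma_1}\rangle \to \langle\langle\langle\gamma,\sigma_1\rangle,\sigma_2\rangle,\rho_0\rangle$; the naturality square for a morphism $m = \langle\langle\langle\rho,\varrho_1\rangle,\varrho_2\rangle,\iota\rangle$ unwinds exactly to the identity $\varrho_2\bullet A(\rho)^=\bullet\rho_0 = \rho_0'\bullet\varrho_1$, which is precisely the compatibility condition built into $m$ being a morphism of $\mathsf{Id}_A$'s total space. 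Given $B \in \mathit{Ty}(\Gamma.A.A^+.\mathsf{Id}_A)$ and $\tau \in \mathit{Tm}(\Gamma.A, B\{\mathit{Refl}_A\})$, I then define $\mathcal{R}^{\mathsf{Id}}_{A,B}(\tau) \in \mathit{Tm}(\Gamma.A.A^+.\mathsf{Id}_A, B)$ by $\mathcal{R}^{\mathsf{Id}}_{A,B}(\tau)\bullet c \stackrel{\mathrm{df. }}{=} B(e_c)\bullet(\tau\bullet q(c))$ for each object $c$, with equality-preservation $\mathcal{R}^{\mathsf{Id}}_{A,B}(\tau)^=\bullet m \stackrel{\mathrm{df. }}{=} B(e_{c'})^=\bullet\tau^=\bullet q(m)$; using $B$'s functoriality and the naturality of $e$ one gets $B(m)\bullet B(e_c) = B(e_{c'})\bullet B(\mathit{Refl}_A(q(m)))$, so this is well-typed and functorial, hence an ep-strategy (and total, innocent, wb, noetherian, being built from derelictions, $B$ and $\tau$). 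For \textsc{Id-Comp} one notes that $e_c = \mathit{id}_c$ whenever $c$ is a $\mathit{Refl}_A$-object, so $B(e_c) = \mathit{id}$ and $\mathcal{R}^{\mathsf{Id}}_{A,B}(\tau)\{\mathit{Refl}_A\} = \tau$ (and likewise for the equality-preservations). The three substitution laws \textsc{Id-Subst}, \textsc{Refl-Subst} and $\mathcal{R}^{\mathsf{Id}}$-\textsc{Subst} then hold by the same pointwise/naturality bookkeeping as in the \textsc{App-Subst} and $\mathcal{R}^{\Sigma}$-\textsc{Subst} arguments for $\Pi$- and $\Sigma$-types: substitution is precomposition with $\mathsf{fun}(\phi^{++})$, all data are defined pointwise in the parameters, $A\{\phi\}(\delta) = A(\phi\bullet\delta)$ gives $\mathsf{Id}_A\{\phi^{++}\} = \mathsf{Id}_{A\{\phi\}}$ on the nose, and the transports $B(e_{-})$ commute with substitution because $e$ is uniform in its parameters.

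I expect the main obstacle to be organizational rather than conceptual: assembling the pointwise definitions of $\mathsf{Id}_A$, $\mathit{Refl}_A$ and $\mathcal{R}^{\mathsf{Id}}_{A,B}(\tau)$ into genuine \emph{uniform} DGwEs and ep-strategies — i.e. checking uniformity and the full functoriality and naturality equations — and keeping the equality-preservation components in lockstep throughout. Because $\widehat{\mathsf{Id}}$ is discrete there is no higher coherence to track, so once the naturality of $e$ is identified with the compatibility clause of morphisms in $\mathsf{Id}_A$'s total space, all the genuine mathematical content is in place and the remaining verifications are routine, mirroring Theorems~\ref{ThmGameTheoreticPiTypes} and \ref{ThmGameTheoreticSumTypes}.
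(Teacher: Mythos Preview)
Your \textsc{Id-Form} and \textsc{Id-Intro} coincide with the paper's (modulo what looks like a typo in the paper, where it writes $\sigma$ in the last slot of $\mathit{Refl}_A$ instead of $\mathit{id}_\sigma$). The genuine divergence is at \textsc{Id-Elim}. The paper does \emph{not} transport: it asserts that $\mathit{Refl}_A$ is invertible in $\mathcal{PGE}$, with $\mathit{Refl}_A^{-1}$ ``just the identity up to tags'', and simply sets $\mathcal{R}^{\mathsf{Id}}_{A,B}(\psi) \stackrel{\mathrm{df.}}{=} \psi\{\mathit{Refl}_A^{-1}\}$; \textsc{Id-Comp} and the substitution laws then drop out in one line each from the CwF laws for $\_\{\_\}$, exactly as in the $\Sigma$-case. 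Your route is the classical Hofmann--Streicher one: exhibit the contraction $e:\mathit{Refl}_A\bullet q \Rightarrow \mathit{id}$ and define the eliminator by post-composing with the transport $B(e_c)$.

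What each buys: the paper's approach is slicker and keeps the eliminator a literal substitution, so all stability laws are immediate, but it rests entirely on the claim that $\mathit{Refl}_A$ is a \emph{strict} isomorphism of GwEs (not merely an equivalence of groupoids). Your approach does not need that claim and makes the groupoid-model content explicit; however, it incurs precisely the cost you flag as the ``main obstacle'': the family $c \mapsto B(e_c)\bullet(\tau\bullet q(c))$ must be assembled into a single uniform product of PLIs. Unlike $\rho_{\Pi(A,B)}$ or $\rho_{\Sigma(A,B)}$, where the $\leftrightarrows$ pattern precomposes and postcomposes by \emph{fixed} strategies, here $B(e_c)$ genuinely varies with the $\rho_0$-component of $c$, so you would need to argue (using the uniformity clause in the definition of DGwE applied to $B$) that this still yields a uniform strategy. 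That is doable, but it is exactly the extra bookkeeping the paper's inverse-of-$\mathit{Refl}_A$ shortcut is designed to avoid.
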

\begin{proof}
Let $\Gamma \in \mathcal{PGE}$, $A : \Gamma \to \mathcal{PGE}$, $A^+ \stackrel{\mathrm{df. }}{=} A \{ \mathit{p}(A) \} \in \mathit{Ty}(\widehat{\Sigma}(\Gamma, A))$.

\paragraph{\textsc{$\mathsf{Id}$-Form.}} We define the DGwE $\mathsf{Id}_A : \widehat{\Sigma}(\widehat{\Sigma}(\Gamma, A), A^+) \to \mathcal{PGE}$ by:
\begin{align*}
\mathsf{Id}_A((\gamma, \sigma), \tilde{\sigma}) &\stackrel{\mathrm{df. }}{=} \widehat{\mathsf{Id}}_{A(\gamma)}(\sigma, \tilde{\sigma}) \\
\mathsf{Id}_A((\rho \& \varrho) \& \tilde{\varrho}) &\stackrel{\mathrm{df. }}{=} \&_{\alpha : \widehat{\mathsf{Id}}_{A(\gamma)}(\sigma, \tilde{\sigma})} \alpha \leftrightarrows \tilde{\varrho} \bullet (A(\rho)^= \bullet \alpha) \bullet (\varrho^{-1})
\end{align*}
for all $\gamma, \gamma' : \Gamma$, $\rho : \gamma =_\Gamma \gamma'$, $\sigma, \tilde{\sigma} : A(\gamma)$, $\sigma', \tilde{\sigma}' : A(\gamma')$, $\varrho : A(\rho) \bullet \sigma =_{A(\gamma')} \sigma'$, $\tilde{\varrho} : A(\rho) \bullet \tilde{\sigma} =_{A(\gamma')} \tilde{\sigma}'$, where $\&_{\alpha : \widehat{\mathsf{Id}}_{A(\gamma)}(\sigma, \tilde{\sigma})} \alpha \leftrightarrows \tilde{\varrho} \bullet (A(\rho)^= \bullet \alpha) \bullet (\varrho^{-1}) : \widehat{\mathsf{Id}}_{A(\gamma)}(\sigma, \tilde{\sigma}) \to \widehat{\mathsf{Id}}_{A(\gamma')}(\sigma', \tilde{\sigma}')$ is an ep-strategy equipped with the trivial equality-preservation.
Following the same pattern as before, it is easy to see that $\&_{\alpha : \widehat{\mathsf{Id}}_{A(\gamma)}(\sigma, \tilde{\sigma})} \alpha \leftrightarrows (\rho^{-1})^\dagger ; (\alpha^\dagger ; A(\rho)^=)^\dagger ; \tilde{\varrho}$ is a strategy on $\widehat{\mathsf{Id}}_{A(\gamma)}(\sigma, \tilde{\sigma}) \to \widehat{\mathsf{Id}}_{A(\gamma')}(\sigma', \tilde{\sigma}')$; and then it is trivially ep.

We now show the functoriality of $\mathsf{Id}_A$.
Let $\gamma, \gamma', \gamma'' : \Gamma$, $\rho : \gamma =_\Gamma \gamma'$, $\rho' : \gamma' =_\Gamma \gamma''$, $\sigma, \tilde{\sigma} : A(\gamma)$, $\sigma', \tilde{\sigma}' : A(\gamma')$, $\sigma'', \tilde{\sigma}'' : A(\gamma'')$, $\varrho : A(\rho) \bullet \sigma =_{A(\gamma')} \sigma'$, $\tilde{\varrho} : A(\rho) \bullet \tilde{\sigma} =_{A(\gamma')} \tilde{\sigma}'$, $\varrho' : A(\rho') \bullet \sigma' =_{A(\gamma'')} \sigma''$, $\tilde{\varrho}' : A(\rho') \bullet \tilde{\sigma'} =_{A(\gamma'')} \tilde{\sigma}''$.
First, $\mathsf{Id}_A$ clearly respects domain and codomain.
It is also easy to see that $\mathsf{Id}_A$ respects composition:
\begin{align*}
& \ \mathsf{Id}_A (((\rho' \& \varrho') \& \tilde{\varrho}') \bullet ((\rho \& \varrho) \& \tilde{\varrho})) \\
= & \ \mathsf{Id}_A (((\rho' \bullet \rho) \& (\varrho' \odot \varrho)) \& (\tilde{\varrho}' \odot \tilde{\varrho})) \\
= & \ \&_{\alpha : \widehat{\mathsf{Id}}_{A(\gamma)}(\sigma, \tilde{\sigma})} \alpha \leftrightarrows (\tilde{\varrho}' \odot \tilde{\varrho}) \bullet (A(\rho' \bullet \rho)^= \bullet \alpha) \bullet (\varrho' \odot \varrho)^{-1} \\
= & \ \&_{\alpha : \widehat{\mathsf{Id}}_{A(\gamma)}(\sigma, \tilde{\sigma})} \alpha \leftrightarrows \tilde{\varrho}' \bullet (A(\rho')^= \bullet \tilde{\varrho}) \bullet (A(\rho')^= \bullet A(\rho)^= \bullet \alpha) \bullet (\varrho' \bullet (A(\rho')^= \bullet \varrho))^{-1} \\
= & \ \&_{\alpha : \widehat{\mathsf{Id}}_{A(\gamma)}(\sigma, \tilde{\sigma})} \alpha \leftrightarrows \tilde{\varrho}' \bullet (A(\rho')^= \bullet (\tilde{\varrho} \bullet (A(\rho)^= \bullet \alpha))) \bullet (A(\rho')^= \bullet \varrho^{-1}) \bullet \varrho'^{-1} \\
= & \ \&_{\alpha : \widehat{\mathsf{Id}}_{A(\gamma)}(\sigma, \tilde{\sigma})} \alpha \leftrightarrows \tilde{\varrho}' \bullet (A(\rho')^= \bullet (\tilde{\varrho} \bullet (A(\rho)^= \bullet \alpha) \bullet \varrho^{-1})) \bullet \varrho'^{-1} \\
= & \ \&_{\alpha : \widehat{\mathsf{Id}}_{A(\gamma)}(\sigma, \tilde{\sigma})} \alpha \leftrightarrows \mathsf{Id}_A((\rho' \& \varrho') \& \tilde{\varrho}') \bullet (\mathsf{Id}_A((\rho \& \varrho) \& \tilde{\varrho}) \bullet \alpha)) \\
= & (\&_{\alpha' : \widehat{\mathsf{Id}}_{A(\gamma')}(\sigma', \tilde{\sigma}')} \alpha' \leftrightarrows \tilde{\varrho}' \bullet (A(\rho')^= \bullet \alpha') \bullet (\varrho'^{-1})) \bullet (\&_{\alpha : \widehat{\mathsf{Id}}_{A(\gamma)}(\sigma, \tilde{\sigma})} \alpha \leftrightarrows \tilde{\varrho} \bullet (A(\rho)^= \bullet \alpha) \bullet (\varrho^{-1})) \\
= & \mathsf{Id}_A((\rho' \& \varrho') \& \tilde{\varrho}') \bullet \mathsf{Id}_A((\rho \& \varrho) \& \tilde{\varrho})
\end{align*}
where note that the case for equality-preservation is trivial.
Similarly, $\mathsf{Id}_A$ respects identities:
\begin{align*}
\mathsf{Id}_A(\mathit{id}_{\langle \langle \gamma, \sigma \rangle, \tilde{\sigma} \rangle}) &= \mathsf{Id}_A((\mathit{id}_\gamma \& \mathit{id}_\sigma) \& \mathit{id}_{\tilde{\sigma}}) \\
&= \&_{\alpha : \widehat{\mathsf{Id}}_{A(\gamma)}(\sigma, \tilde{\sigma})} \alpha \leftrightarrows \mathit{id}_{\tilde{\sigma}} \bullet (A(\mathit{id}_\gamma)^= \bullet \alpha) \bullet \mathit{id}_\sigma^{-1} \\
&= \&_{\alpha : \widehat{\mathsf{Id}}_{A(\gamma)}(\sigma, \tilde{\sigma})} \alpha \leftrightarrows \mathit{id}_{A(\gamma)}^= \bullet \alpha \\
&= \&_{\alpha : \widehat{\mathsf{Id}}_{A(\gamma)}(\sigma, \tilde{\sigma})} \alpha \leftrightarrows \alpha \\
&= \mathit{id}_{\widehat{\mathsf{Id}}_{A(\gamma)}(\sigma, \tilde{\sigma})} \\
&= \mathit{id}_{\mathsf{Id}_{A}((\gamma, \sigma), \tilde{\sigma})}.
\end{align*}

\paragraph{\textsc{$\mathsf{Id}$-Intro.}} The ep-strategy $\mathit{Refl}_A : \widehat{\Sigma}(\Gamma, A) \to \widehat{\Sigma}(\widehat{\Sigma} (\widehat{\Sigma} (\Gamma, A), A^+), \mathsf{Id}_A)$ is defined by:
\begin{align*}
\mathit{Refl}_A &\stackrel{\mathrm{df. }}{=} \&_{\langle \gamma, \sigma \rangle : \widehat{\Sigma}(\Gamma, A)} \langle \gamma, \sigma \rangle \leftrightarrows \langle \langle \langle \gamma, \sigma \rangle, \sigma \rangle, \sigma \rangle \\
\mathit{Refl}_A^= &\stackrel{\mathrm{df. }}{=} \&_{\langle \gamma, \sigma \rangle, \langle \gamma', \sigma' \rangle : \widehat{\Sigma}(\Gamma, A), \rho \& \varrho : \langle \gamma, \sigma \rangle =_{\widehat{\Sigma}(\Gamma, A)} \langle \gamma', \sigma' \rangle} \rho \& \varrho \leftrightarrows ((\rho \& \varrho) \& \varrho) \& \varrho.
\end{align*}
We omit the verification of the functoriality of $\mathit{Refl}_A$ as it is just straightforward.
Note that $\mathit{Refl}_A$ has its inverse $\mathit{Refl}_A^{-1}$, which is just the identity up to tags for disjoint union.

\paragraph{\textsc{$\mathsf{Id}$-Elim.}} Given $\psi : \widehat{\Pi}(\widehat{\Sigma}(\Gamma, A), B\{\mathit{Refl}_A\})$, we define:
\begin{equation*}
\mathcal{R}^{\mathsf{Id}}_{A, B}(\psi) \stackrel{\mathrm{df. }}{=} \psi \{ \mathit{Refl}_A^{-1} \} : \textstyle \widehat{\Pi}(\widehat{\Sigma}(\widehat{\Sigma}(\widehat{\Sigma}(\Gamma, A), A), \mathsf{Id}_A), B).
\end{equation*}

\paragraph{\textsc{$\mathsf{Id}$-Comp.}} Then we have:
\begin{align*}
\mathcal{R}^{\mathsf{Id}}_{A, B}(\psi) \{ \mathit{Refl}_A \} &= \psi \{ \mathit{Refl}_A^{-1} \} \{ \mathit{Refl}_A \} \\
&= \psi \{ \mathit{Refl}_A^{-1} \bullet \mathit{Refl}_A \} \\
&= \psi \{ \mathit{id}_{\widehat{\Sigma}(\Gamma, A)} \} \\
&= \psi.
\end{align*}

\paragraph{\textsc{$\mathsf{Id}$-Subst.}} For any game $\Delta \in \mathcal{WPG}$ and strategy $\phi : \Delta \to \Gamma$ in $\mathcal{WPG}$, we have:
\begin{align*}
& \ \textsf{Id}_A\{\phi^{++}\} \\
= & \ \textstyle \{ \widehat{\textsf{Id}}_A(\sigma, \sigma') \ \! | \ \! \langle \langle \gamma, \sigma \rangle, \sigma' \rangle : \widehat{\Sigma}(\widehat{\Sigma}(\Gamma, A), A^+) \} \{\phi^{++}\} \\
= & \ \textstyle \{ \textsf{Id}_{A\{\phi\}}(\phi^{++} \bullet \langle \langle \delta, \tau \rangle, \tau' \rangle) \ \! | \ \! \langle \langle \delta, \tau \rangle, \tau' \rangle : \widehat{\Sigma}(\widehat{\Sigma}(\Delta, A\{\phi\}), A\{\phi\}^+) \} \\
= & \ \textstyle \{ \textsf{Id}_{A\{\phi\}}(\langle\phi \bullet \delta, \tau \rangle, \tau' \rangle) \ \! | \ \! \langle \langle \delta, \tau \rangle, \tau' \rangle : \widehat{\Sigma}(\widehat{\Sigma}(\Delta, A\{\phi\}), A\{\phi\}^+) \} \\
= & \ \textstyle \{ \widehat{\textsf{Id}}_{A\{\phi\}}(\tau, \tau') \ \! | \ \! \langle \langle \delta, \tau \rangle, \tau' \rangle : \widehat{\Sigma}(\widehat{\Sigma}(\Delta, A\{\phi\}), A\{\phi\}^+) \} \\
= & \ \textsf{Id}_{A\{\phi\}} 
\end{align*}
where $\phi^+ \stackrel{\mathrm{df. }}{=} \langle \phi \bullet \mathit{p}(A\{ \phi \}), \mathit{v}_{A\{\phi\}} \rangle : \widehat{\Sigma}(\Delta, A\{\phi\}) \to \widehat{\Sigma}(\Gamma, A)$, $\phi^{++} \stackrel{\mathrm{df. }}{=} \langle \phi^+ \bullet \mathit{p}(A^+\{\phi^+\}), \mathit{v}_{A^+\{\phi^+\}} \rangle : \widehat{\Sigma}(\widehat{\Sigma}(\Delta, A\{\phi\}), A^+\{\phi^+\}) \to \widehat{\Sigma}(\widehat{\Sigma}(\Gamma, A), A^+)$. Note that
\begin{align*}
\phi^{++} \bullet \langle \langle \delta, \tau \rangle, \tau' \rangle &= \langle \phi^+ \bullet \mathit{p}(A^+\{\phi^+\}), \mathit{v}_{A^+\{\phi^+\}} \rangle \bullet \langle \langle \delta, \tau \rangle, \tau' \rangle \\
&= \langle \phi^+ \bullet \mathit{p}(A^+\{\phi^+\}) \bullet \langle \langle \delta, \tau \rangle, \tau' \rangle, \mathit{v}_{A^+\{\phi^+\}} \bullet \langle \langle \delta, \tau \rangle, \tau' \rangle \rangle \\
&= \langle \langle \phi \bullet \delta, \tau \rangle, \tau' \rangle.
\end{align*}

\paragraph{\textsc{Refl-Subst.}}Also, the following equation holds:
\begin{align*}
&\mathit{Refl}_A \bullet \phi^+ \\
= \ &\mathit{Refl}_A \bullet \langle \phi \bullet \mathit{p}(A\{ \phi \}), \mathit{v}_{A\{\phi\}} \rangle \\
= \ &\langle \langle \langle \phi \bullet \mathit{p}(A\{ \phi \}), \mathit{v}_{A\{\phi\}} \rangle \bullet \mathit{p}(A^+\{\phi^+\}) \bullet \mathit{p}(\mathsf{Id}_A\{\phi^{++}\}), \mathit{v}_{A^+\{\phi^+\}} \bullet \mathit{p}(\mathsf{Id}_A\{\phi^{++}\}) \rangle, \mathit{v}_{\mathsf{Id}_A\{\phi^{++}\}} \rangle \bullet \mathit{Refl}_{A\{\phi\}} \\
= \ &\langle \langle \phi^+ \bullet \mathit{p}(A^+\{\phi^+\}), \mathit{v}_{A^+\{\phi^+\}} \rangle \bullet \mathit{p}(\mathsf{Id}_A\{\phi^{++}\}), \mathit{v}_{\mathsf{Id}_A\{\phi^{++}\}} \rangle \bullet \mathit{Refl}_{A\{\phi\}} \\
= \ &\langle \phi^{++} \bullet \mathit{p}(\mathsf{Id}_A\{\phi^{++}\}), \mathit{v}_{\mathsf{Id}_A\{\phi^{++}\}} \rangle \bullet \mathit{Refl}_{A\{\phi\}} \\
= \ &\phi^{+++} \bullet \mathit{Refl}_{A\{\phi\}}
\end{align*}
where $\phi^{+++} \stackrel{\mathrm{df. }}{=} \langle \phi^{++} \bullet \mathit{p}(\mathsf{Id}_A\{\phi^{++}\}), \mathit{v}_{\mathsf{Id}_A\{\phi^{++}\}} \rangle$.

\paragraph{\textsc{$\mathcal{R}^{\mathsf{Id}}$-Comp.}} Finally, we have:
\begin{align*}
&\mathcal{R}^{\mathsf{Id}}_{A,B}(\tau)\{\phi^{+++}\} \\
= \ &(\tau \bullet \mathit{Refl}_A^{-1}) \bullet \langle \langle \langle \phi \bullet \mathit{p}(A\{ \phi \}), \mathit{v}_{A\{\phi\}} \rangle \bullet \mathit{p}(A^+\{\phi^+\}), \mathit{v}_{A^+\{\phi^+\}} \rangle \bullet \mathit{p}(\mathsf{Id}_A\{\phi^{++}\}), \mathit{v}_{\mathsf{Id}_A\{\phi^{++}\}} \rangle \\
= \ &\tau \bullet \langle \phi \bullet \mathit{p}(A\{ \phi \}), \mathit{v}_{A\{\phi\}} \rangle \bullet \mathit{Refl}_{A\{\phi\}}^{-1} \\
= \ &\mathcal{R}^{\mathsf{Id}}_{A\{\phi\}, B\{\phi^{+++}\}}(\tau \bullet \langle \phi \bullet \mathit{p}(A\{ \phi \}), \mathit{v}_{A\{\phi\}} \rangle) \\ 
= \ &\mathcal{R}^{\mathsf{Id}}_{A\{\phi\}, B\{\phi^{+++}\}}(\tau \bullet \phi^+) \\
= \ &\mathcal{R}^{\mathsf{Id}}_{A\{\phi\}, B\{\phi^{+++}\}}(\tau\{\phi^+\}).
\end{align*}

\end{proof}

Our interpretation of \textsf{Id}-types accommodates non-identity morphisms in the same manner as the groupoid model \cite{hofmann1998groupoid}, and so it refutes UIP essentially by the same argument as follows.
Recall that UIP states: For any type $\mathsf{A}$, the following type can be inhabited
\begin{equation*}
\mathsf{\Pi_{a_1, a_2 : A} \Pi_{p, q : a_1 = a_2} p = q}.
\end{equation*}
Consider the GwE $\mathscr{B}$ whose positions are prefixes of the sequences $q_{\mathit{tt}} . \mathit{tt}, q_{\mathit{ff}} . \mathit{ff}$ with all isomorphism strategies between strategies on $\mathscr{B}$ as morphisms. Let us write $\bullet : \mathsf{B}$ for the unique total strategy. Explicitly, the morphisms are the dereliction $\mathit{der}_{\bullet}$ and the ``reversing" strategy $\mathit{rv}_{\bullet}$. We then have $\textit{der}_{\bullet} \neq_{\bullet =_{\mathscr{B}} \bullet} \mathit{rv}_{\bullet}$ because morphisms in $\bullet =_{\mathscr{B}} \bullet$ are only the trivial ones.

Since the ``atomic games'' such as the natural number game $N$ can be seen as discrete wf-GwEs, we may inherit the results in \cite{yamada2016game} so that $\mathcal{PGE}$ supports $\mathsf{N}$- $\mathsf{1}$- and $\mathsf{0}$-types as well as universes.
By the \emph{soundness} of CwFs, we have established:

\begin{corollary}
There is a (sound) model of \textsf{MLTT} with $\Pi$- $\Sigma$- \textsf{Id}- \textsf{N}- \textsf{1}- and \textsf{0}-types as well as universes in the CwF $\mathcal{PGE}$ of wf-GwEs and ep-strategies.
\end{corollary}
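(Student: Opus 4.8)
The plan is to assemble the corollary from the structural results already in place rather than to construct anything genuinely new. By Theorem~\ref{ThmPGE} the structure $\mathcal{PGE}$ is a CwF, and Theorems~\ref{ThmGameTheoreticPiTypes}, \ref{ThmGameTheoreticSumTypes} and \ref{ThmGameTheoreticIdTypes} show that it supports $\Pi$-, $\Sigma$- and \textsf{Id}-types — indeed in the strict sense, since the \textsf{$\lambda$-Uniq} and \textsf{$\mathcal{R}^{\Sigma}$-Uniq} clauses are verified there, so the $\eta$- and uniqueness rules of \textsf{MLTT} are validated. It therefore remains to (i) equip $\mathcal{PGE}$ with the semantic type formers for \textsf{N}-, \textsf{1}- and \textsf{0}-types and for the cumulative hierarchy of universes, and then (ii) invoke the soundness theorem for CwFs.

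For (i) I would introduce the full and faithful embedding $\mathsf{disc} : \mathcal{WPG} \hookrightarrow \mathcal{PGE}$ sending each wf-predicative game $G$ to the \emph{discrete} wf-GwE $(G, =_G)$ on $G$, whose only morphisms are the derelictions $\mathit{der}_\sigma$ (so that $\sigma_1 =_G \sigma_2 = \oint\emptyset$ unless $\sigma_1 = \sigma_2$), and each morphism $\phi : G \to H$ in $\mathcal{WPG}$ to the ep-strategy $(\phi, \mathit{der}_{=_G})$; one checks directly that this is well defined — the equality-preservation conditions are trivial because the source groupoid is discrete — and that the induced equality-preservation is uniquely forced, whence the embedding is full and faithful. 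Since the flat games $N$, $\bm{1}$, $\bm{0}$ and the universe games $\mathcal{U}_k$ of \cite{yamada2016game} are wf-predicative games, their images under $\mathsf{disc}$ are wf-GwEs (the games $=_G$ remaining well-founded thanks to the ``initial protocols'' and the copy-cat-like enabling of derelictions). Because the \textsf{N}-, \textsf{1}-, \textsf{0}- and universe type formers of \cite{yamada2016game} are phrased entirely in terms of games, strategies, derelictions and pairings, each of their components lifts to $\mathcal{PGE}$ simply by adjoining the trivial equality-preservation, and the required equations are exactly the ones already established in \cite{yamada2016game}; here one must only take care that $\mathit{El}(\mu)$ of a code $\mu : \mathsf{disc}(\mathcal{U}_k)$ is again equipped with the discrete equality structure, so that $\mathsf{disc}$ commutes with decoding and the universe clauses go through unchanged (this yields a sub-universe of discrete GwEs, consistent with the refutation of \textsf{UA}).

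Once (i) is done, (ii) is immediate: by the soundness of CwFs \cite{hofmann1997syntax}, a CwF equipped with semantic type formers for a collection of type constructors induces an interpretation of \textsf{MLTT} over that collection under which every derivable context, type and term is interpreted and every judgemental equality is reflected by the corresponding semantic equality. Applying this to $\mathcal{PGE}$ with the $\Pi$-, $\Sigma$-, \textsf{Id}-, \textsf{N}-, \textsf{1}-, \textsf{0}- and universe formers established above produces the desired sound model.

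I expect the main obstacle to be step (i), and within it the universe case in particular: one must fix the equality structure on $\mathcal{U}_k$ (the discrete one) and verify that \emph{all} coherence equations of the universe type former — decoding, substitution, and the interaction of $\mathit{El}$ with the other type formers — are preserved by $\mathsf{disc}$, i.e.\ that no new equality-preservation data is forced anywhere along the hierarchy. A secondary point worth checking is global coherence: the $\Pi$-, $\Sigma$- and \textsf{Id}-formers of Section~\ref{GameTheoreticTypeFormers}, when fed arguments in the image of $\mathsf{disc}$, should reduce to their \cite{yamada2016game} counterparts (so that, e.g., a $\Pi$ or $\Sigma$ of discrete GwEs is again discrete), so that the piecewise interpretations patch together into a single model rather than only agreeing separately.
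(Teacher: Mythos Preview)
Your proposal is correct and follows essentially the same route as the paper: the paper's own argument for this corollary is just the two sentences immediately preceding it, namely that the atomic games $N$, $\bm{1}$, $\bm{0}$ and the universe games can be regarded as \emph{discrete} wf-GwEs so that the corresponding type formers are inherited from \cite{yamada2016game}, and that soundness of CwFs then yields the model. You have simply made the mechanism explicit via the embedding $\mathsf{disc}:\mathcal{WPG}\hookrightarrow\mathcal{PGE}$ and flagged the coherence checks (for universes and for the interaction of $\mathsf{disc}$ with $\Pi$, $\Sigma$, $\mathsf{Id}$) that the paper leaves implicit.
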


In the light of UA, however, universe games $\mathcal{U}$ should not be discrete.
Note that as $\mathcal{PGE}$ has no higher morphisms, its interpretation of UA is: For any $A, B \in \mathcal{PGE}$, there is an isomorphism $ua : (A \cong B) \stackrel{\sim}{\Rightarrow} (\underline{A} =_{\mathcal{U}} \underline{B})$\footnote{But the argument below holds even if $\mathcal{PGE}$ has higher morphisms.}.
As Theorem~\ref{ThmIsomThm} indicates, we may fix an isomorphism $\phi_0 : A \stackrel{\sim}{\Rightarrow} B$ and represent any $\phi : A \stackrel{\sim}{\Rightarrow} B$ by $\phi = \beta \bullet \phi_0 \bullet \alpha^{-1}$, where $\alpha : A \stackrel{\sim}{\Rightarrow} A$, $\beta : B \stackrel{\sim}{\Rightarrow} B$.
This suggests a solution to refine the notion of a GwE by equipping it with an \emph{order on its maximal positions of the same length}, writing $A \equiv A'$ if $A, A' \in \mathcal{PGE}$ are the same up to the order, and define $\mathsf{st}(\underline{A} =_{\mathcal{U}} \underline{B}) \stackrel{\mathrm{df. }}{=} \bigcup_{A' \equiv A, B' \equiv B} \mathsf{st}(\underline{A'} \stackrel{\sim}{\Rightarrow} \underline{B'})$ if $A \cong B$ and $\mathsf{st}(\underline{A} =_{\mathcal{U}} \underline{B}) \stackrel{\mathrm{df. }}{=} \emptyset$ otherwise, where the name $\mathcal{N}(A)$ of each $A \in \mathcal{PGE}$ incorporates the order on positions in $A$.

Nevertheless, from a \emph{computational} viewpoint, even this equality in $\mathcal{U}$ challenges UA.
The direction $ua^{-1} : (\underline{A} =_{\mathcal{U}} \underline{B}) \stackrel{\sim}{\Rightarrow} (A \cong B)$ is no problem as $ua^{-1}$ may learn the given $\alpha : A \stackrel{\sim}{\Rightarrow} A$, $\beta : B \stackrel{\sim}{\Rightarrow} B$ from the domain and construct $\beta \bullet \phi_0 \bullet \alpha^{-1} : A \stackrel{\sim}{\Rightarrow} B$.
In contrast, $ua : (A \cong B) \stackrel{\sim}{\Rightarrow} (\underline{A} =_{\mathcal{U}} \underline{B})$ would be intractable as it needs to determine the given $A' \cong A$, $B' \cong B$ by a \emph{finite} interaction with them.
In any case, since UA implies FunExt \cite{hottbook} and our model refutes FunExt, it must refute UA as well.
\begin{corollary}[Intensionality]
The model in $\mathcal{PGE}$ refutes UIP, FunExt and UA.
\end{corollary}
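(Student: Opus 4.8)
The plan is to obtain all three refutations from the soundness of the CwF $\mathcal{PGE}$ (Theorem~\ref{ThmPGE} together with Theorems~\ref{ThmGameTheoreticPiTypes}, \ref{ThmGameTheoreticSumTypes}, \ref{ThmGameTheoreticIdTypes}) and the Isom Theorem~\ref{ThmIsomThm}: for each axiom I would exhibit a closed instance whose interpreting object of $\mathcal{PGE}$ is \emph{false}, i.e.\ receives no proof $\sigma : I \to A$, so that the axiom cannot hold in the model. Since the discussion preceding the corollary already assembles most of the ingredients, the proof is essentially the packaging of three explicit counterexamples.

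For \textbf{UIP}, I would take the GwE $\mathscr{B}$ and the two objects $\mathit{der}_\bullet, \mathit{rv}_\bullet : \bullet =_{\mathscr{B}} \bullet$. First I would check that $\mathit{rv}_\bullet$ is a total, innocent, wb, noetherian and \emph{invertible} ep-strategy on $\bullet \stackrel{\sim}{\Rightarrow} \bullet$ --- it ``behaves like a copy-cat'' up to the swap $q_{\mathit{tt}} \leftrightarrow q_{\mathit{ff}}$, $\mathit{tt} \leftrightarrow \mathit{ff}$, so it is its own inverse and satisfies the required constraints for exactly the reason derelictions do --- and that $\mathit{rv}_\bullet \neq \mathit{der}_\bullet$, so they are genuinely distinct proofs. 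The crucial point is then that interpreting $\mathsf{Id}$ yields a \emph{discrete} groupoid (Lemma~\ref{LemWellDefinedIdentitySpace}): the hom-set of $\widehat{\mathsf{Id}}_{\mathscr{B}}(\bullet, \bullet)$ between the distinct objects $\mathit{der}_\bullet$ and $\mathit{rv}_\bullet$ is empty, so $\mathsf{st}(\mathit{der}_\bullet =_{\bullet =_{\mathscr{B}} \bullet} \mathit{rv}_\bullet) = \emptyset$. Hence the interpretation of $\mathsf{\Pi_{a_1, a_2 : A} \Pi_{p, q : a_1 = a_2} p = q}$ at $\mathsf{A} = \mathscr{B}$ is false, and by soundness UIP is refuted.

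For \textbf{FunExt}, I would refute the type $\mathsf{\Pi_{f, g : A \to B} (\Pi_{x : A} f x = g x) \to f = g}$ with $\mathsf{A} = \mathsf{B} = N$ (the discrete DGwE over $I$), taking for $\mathsf{f}, \mathsf{g}$ two distinct total strategies on $N \Rightarrow N$ that both compute the constant function $\underline{0}$ but with different interaction behaviour, e.g.\ one answering the output question immediately and the other first querying its argument and then answering $0$. On the one hand, $f \bullet \sigma = \underline{0} = g \bullet \sigma$ for every $\sigma : N$, and (using discreteness of $N$ and the uniformity clauses of a DGwE) the constant family of derelictions $\mathit{der}_{\underline{0}}$ organizes into an ep-strategy on $\widehat{\Pi}(N, \sigma \mapsto (f \bullet \sigma =_{N} g \bullet \sigma))$, so the hypothesis is true. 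On the other hand, the underlying position-forests of $f$ and $g$ are non-isomorphic ($f$ is finite, $g$ is infinite), so by Theorem~\ref{ThmIsomThm} there is no invertible strategy $f \stackrel{\sim}{\Rightarrow} g$; since a morphism of $f =_{\widehat{\Pi}(N, N)} g$ is in particular such an invertible strategy (Definition~\ref{DefDependentFunctionSpace}), we get $\mathsf{st}(f =_{\widehat{\Pi}(N, N)} g) = \emptyset$. Thus the conclusion is false while the hypothesis is true, so the interpreted FunExt type is uninhabited. Finally, \textbf{UA} is refuted because UA implies FunExt in \textsf{HoTT}~\cite{hottbook} and the model refutes FunExt.

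The step I expect to be the main obstacle is the FunExt one, where I must establish carefully \emph{both} halves: that the pointwise-equality hypothesis really is a bona fide term of $\mathcal{PGE}$ --- the family of identity proofs organizes into an ep-strategy satisfying the four axioms of Definition~\ref{DefDependentFunctionSpace}, and the relevant family of GwEs is a genuine DGwE --- and that the conclusion's equality game is \emph{genuinely} empty, which is precisely where the explicit description of the position-forests of $f$ and $g$ and the Isom Theorem do the real work. By contrast, the UIP and UA steps reduce to routine checks once $\mathit{rv}_\bullet$ is verified and the discreteness of $\widehat{\mathsf{Id}}$ is invoked.
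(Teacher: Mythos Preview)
Your proposal is correct and follows essentially the same approach as the paper: UIP is refuted via the GwE $\mathscr{B}$ with $\mathit{der}_\bullet \neq \mathit{rv}_\bullet$ and the discreteness of $\widehat{\mathsf{Id}}$, FunExt via two extensionally equal but intensionally distinct strategies on $N \Rightarrow N$, and UA by reduction to FunExt. The only minor difference is that for FunExt the paper appeals to axiom~1 of Definition~\ref{DefDependentFunctionSpace} (the ``same timing'' condition forces $\phi_1,\phi_2$ to switch between $A$ and $\uplus B$ synchronously, which your $f$ and $g$ do not), whereas you go through the Isom Theorem and the cardinality mismatch of $P_f$ and $P_g$; both routes are valid and equally short.
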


\section{Conclusion and Future Work}
\label{ConclusionAndFutureWork}
We have presented the first game semantics for \textsf{MLTT} that refutes UIP.
Its algebraic structure is very similar to the groupoid model \cite{hofmann1998groupoid}, but in contrast it is intensional, refuting FunExt and UA.
Hence, in some sense, we have given a negative answer to the computational nature of UA.
This view stands in a sharp contrast to the cubical set model \cite{bezem2014model,cohen2016cubical} that interprets propositional equalities as paths and validates UA.

For future work, we plan to generalize the notion of GwEs to $\omega$-groupoids in order to interpret higher equalities in a non-trivial manner. 
Moreover, it would be interesting to see connections between our game model and the cubical set model, which may shed a new light on relations between computation and topology.

\subsection*{Acknowledgements}
The author was supported by Funai Overseas Scholarship. 
Also, he is grateful to Samson Abramsky for fruitful discussions.

\bibliographystyle{alpha}
\bibliography{GamesAndStrategies,HoTT,Recursion,Categories,CategoricalLogic,TypeTheoriesAndProgrammingLanguages,PCF}

\newcommand{\etalchar}[1]{$^{#1}$}
\begin{thebibliography}{CCHM16}

\bibitem[A{\etalchar{+}}97]{abramsky1997semantics}
Samson Abramsky et~al.
\newblock Semantics of {Interaction: An Introduction to Game Semantics}.
\newblock {\em Semantics and Logics of Computation, Publications of the Newton
  Institute}, pages 1--31, 1997.

\bibitem[AJ05]{abramsky2005game}
Samson Abramsky and Radha Jagadeesan.
\newblock A {Game Semantics for Generic Polymorphism}.
\newblock {\em Annals of Pure and Applied Logic}, 133(1):3--37, 2005.

\bibitem[AJM00]{abramsky2000full}
Samson Abramsky, Radha Jagadeesan, and Pasquale Malacaria.
\newblock Full {A}bstraction for {PCF}.
\newblock {\em Information and Computation}, 163(2):409--470, 2000.

\bibitem[AJV15]{abramsky2015games}
Samson Abramsky, Radha Jagadeesan, and Matthijs V{\'a}k{\'a}r.
\newblock Games for {Dependent Types}.
\newblock In {\em Automata, Languages, and Programming}, pages 31--43.
  Springer, 2015.

\bibitem[AM99]{abramsky1999game}
Samson Abramsky and Guy McCusker.
\newblock Game {S}emantics.
\newblock In {\em Computational logic}, pages 1--55. Springer, 1999.

\bibitem[BCH14]{bezem2014model}
Marc Bezem, Thierry Coquand, and Simon Huber.
\newblock {A Model of Type Theory in Cubical Sets}.
\newblock In {\em 19th International Conference on Types for Proofs and
  Programs (TYPES 2013)}, volume~26, pages 107--128, 2014.

\bibitem[CCHM16]{cohen2016cubical}
Cyril Cohen, Thierry Coquand, Simon Huber, and Anders M{\"o}rtberg.
\newblock Cubical type theory: a constructive interpretation of the univalence
  axiom.
\newblock {\em arXiv preprint arXiv:1611.02108}, 2016.

\bibitem[CH10]{clairambault2010totality}
Pierre Clairambault and Russ Harmer.
\newblock {Totality in Arena Games}.
\newblock {\em Annals of pure and applied logic}, 161(5):673--689, 2010.

\bibitem[Cur34]{curry1934functionality}
Haskell~B Curry.
\newblock Functionality in {C}ombinatory {L}ogic.
\newblock {\em Proceedings of the National Academy of Sciences},
  20(11):584--590, 1934.

\bibitem[Dyb96]{dybjer1996internal}
Peter Dybjer.
\newblock Internal {T}ype {T}heory.
\newblock In {\em Types for Proofs and Programs}, pages 120--134. Springer,
  1996.

\bibitem[HO00]{hyland2000full}
J~Martin~E Hyland and C-HL Ong.
\newblock On {Full Abstraction} for {PCF}: {I, II, and III}.
\newblock {\em Information and computation}, 163(2):285--408, 2000.

\bibitem[Hof97]{hofmann1997syntax}
Martin Hofmann.
\newblock Syntax and {S}emantics of {D}ependent {T}ypes.
\newblock In {\em Extensional Constructs in Intensional Type Theory}, pages
  13--54. Springer, 1997.

\bibitem[How80]{howard1980formulae}
William~A Howard.
\newblock The {F}ormulae-as-types {N}otion of {C}onstruction.
\newblock {\em To HB Curry: essays on combinatory logic, lambda calculus and
  formalism}, 44:479--490, 1980.

\bibitem[HS98]{hofmann1998groupoid}
Martin Hofmann and Thomas Streicher.
\newblock The {G}roupoid {I}nterpretation of {T}ype {T}heory.
\newblock {\em Twenty-five years of constructive type theory (Venice, 1995)},
  36:83--111, 1998.

\bibitem[Hyl97]{hyland1997game}
Martin Hyland.
\newblock Game {S}emantics.
\newblock {\em Semantics and logics of computation}, 14:131, 1997.

\bibitem[McC98]{mccusker1998games}
Guy McCusker.
\newblock {\em Games and {Full Abstraction for a Functional Metalanguage with
  Recursive Types}}.
\newblock Springer Science \& Business Media, 1998.

\bibitem[ML82]{martin1982constructive}
Per Martin-L{\"o}f.
\newblock Constructive {M}athematics and {C}omputer {P}rogramming.
\newblock {\em Studies in Logic and the Foundations of Mathematics},
  104:153--175, 1982.

\bibitem[ML84]{martin1984intuitionistic}
Per Martin-L{\"o}f.
\newblock {\em {Intuitionistic Type Theory: Notes by Giovanni Sambin of a
  series of lectures given in Padova, June 1980}}.
\newblock 1984.

\bibitem[ML98]{martin1998intuitionistic}
Per Martin-L{\"o}f.
\newblock {An Intuitionistic Theory of Types}.
\newblock {\em Twenty-five years of constructive type theory}, 36:127--172,
  1998.

\bibitem[Ros67]{rosser1967curry}
J~Barkley Rosser.
\newblock Curry {H}askell {B}. and {F}eys {R}obert. combinatory {L}ogic. volume
  i. with {T}wo {S}ections by {W}illiam {C}raig. {S}tudies in {L}ogic and the
  {F}oundations of {M}athematics. north-holland publishing company, amsterdam
  1958, xvi+ 417 pp.
\newblock {\em The Journal of Symbolic Logic}, 32(02):267--268, 1967.

\bibitem[TvD88]{troelstra1988constructivism}
Anne~Sjerp Troelstra and Dirk van Dalen.
\newblock Constructivism in mathematics. two volumes.
\newblock {\em NorthHolland, Amsterdam}, 1988.

\bibitem[{Uni}13]{hottbook}
The {Univalent Foundations Program}.
\newblock {\em Homotopy Type Theory: Univalent Foundations of Mathematics}.
\newblock \url{https://homotopytypetheory.org/book}, Institute for Advanced
  Study, 2013.

\bibitem[Yam16]{yamada2016game}
Norihiro Yamada.
\newblock {Game Semantics for Martin-L\"{o}f Type Theory}.
\newblock {\em arXiv preprint arXiv:1610.01669}, 2016.

\end{thebibliography}

\end{document}